\documentclass[%
 reprint,
 amsmath,amssymb,
 aps,
]{revtex4-2}
\usepackage[colorlinks=true,citecolor=blue,linkcolor=blue,urlcolor=blue]{hyperref}
\usepackage{graphicx}% Include figure files
\usepackage{dcolumn}% Align table columns on decimal point
\usepackage{bm}
\usepackage{amsthm}
\usepackage{mathtools}

\usepackage[colorlinks=true,linkcolor=blue,citecolor=blue,urlcolor=blue]{hyperref}

\newcommand{\id}{\mathbb{I}}

\newcommand{\TV}{\operatorname{TV}}
\newcommand{\dist}{\operatorname{dist}} 

\usepackage{algorithm}% http://ctan.org/pkg/algorithms
\usepackage{algpseudocode}% http://ctan.org/pkg/algorithmicx

\newcommand{\CZ}{\mathrm{CZ}}
\newcommand{\Rx}{R_x}
\newcommand{\Rz}{R_z}
\newcommand{\open}{\mathrm{open}}

\usepackage{tikz-cd} 
\usepackage{adjustbox}

\usepackage{tikz}						%geometric/algebraic description.
\usetikzlibrary{math, arrows,shapes.misc,snakes,
		       automata,backgrounds,
		       petri,topaths, decorations.pathmorphing, tikzmark}	

\usepackage{pgffor}

\usepackage{pgfplots}
\usetikzlibrary{pgfplots.groupplots}
\pgfplotsset{compat=1.11}
\usepgfplotslibrary{fillbetween}

\usepackage{tikz}
\usetikzlibrary{
  shapes,
  shapes.geometric,
	trees,
	matrix,
  positioning,
    pgfplots.groupplots,
  }

\PassOptionsToPackage{pdftex,hyperfootnotes=false,pdfpagelabels}{hyperref}

\newcommand{\ket}[1]{|#1\rangle}
\newcommand{\bra}[1]{\langle #1|}

\newcommand{\Cov}{\operatorname{Cov}}
\usepackage{tikz}
\usetikzlibrary{quantikz2}

\newtheorem{theorem}{Theorem}
\newtheorem{remark}{Remark}

\newtheorem{lemma}[theorem]{Lemma}

\newtheorem{definition}[theorem]{Definition}

\newtheorem{corollary}[theorem]{Corollary}

\newtheorem{observation}[theorem]{Observation}

\newcommand{\Tr}{\operatorname{Tr}}
\begin{document}

\preprint{APS/123-QED}

\title{Representational separation between unitary and channel quantum generative models via shared classical randomness at shallow depth}

\author{Arunava Majumder}
 \email{arunava.majumder@uibk.ac.at}
 \affiliation{%
University of Innsbruck, Department of
Theoretical Physics, Technikerstraße 21a, A-6020 Innsbruck, Austria.}
 \author{Marius Krumm}
\affiliation{%
University of Innsbruck, Department of
Theoretical Physics, Technikerstraße 21a, A-6020 Innsbruck, Austria.}
\author{Hendrik Poulsen Nautrup}
\affiliation{University of Innsbruck, Department of
Theoretical Physics, Technikerstraße 21a, A-6020 Innsbruck, Austria.}
\author{Hans J. Briegel}
\affiliation{University of Innsbruck, Department of
Theoretical Physics, Technikerstraße 21a, A-6020 Innsbruck, Austria.}

\begin{abstract}
Near-term quantum hardware limits circuit depth and often imposes geometrically local connectivity for quantum generative models, restricting the output distributions accessible to shallow unitary Born models. 
Introducing stochasticity into a unitary quantum Born model can improve the empirical generative performance of the resulting channel model and, for a restricted small-scale architecture, has been proven to represent a strictly larger family of distributions than its unitary counterpart. However, whether such randomness provides a provable separation at fixed shallow depth for arbitrarily large systems has remained open. Here, we show that shared classical randomness, a comparatively weak resource from entanglement theory, is sufficient to establish such a strict scalable representational separation over the corresponding shallow unitary Born model. More specifically, we augment bounded-connectivity shallow unitary circuits, followed by computational-basis measurements, with spatially separated local Pauli operations, whose joint application is controlled by a single classically sampled random bit. The resulting shallow-depth channel model generates long-range correlations in the classical output distribution that no purely unitary shallow-depth model with bounded connectivity can reproduce. For one-dimensional nearest-neighbour architectures, reproducing such distributions with a purely unitary model can require depth $\Omega(N)$ in the worst case. We further show that measurement-based quantum computation (MBQC) provides a natural implementation of the required shared classical randomness through suitable adaptation of the random measurement outcomes. Numerical experiments on MBQC-based generative models support the analytical results.
\end{abstract}

%\keywords{Suggested keywords}%Use showkeys class option if keyword
                              %display desired
\maketitle

\section{Introduction}

Quantum computers are expected to provide advantages for certain computational tasks, with canonical examples including factorization~\cite{shor1999polynomial}, unstructured search~\cite{grover1996fast}, and sampling problems~\cite{arute2019quantum}. In the near-term setting, these prospects motivate variational quantum models~\cite{benedetti2019parameterized}, whose performance is constrained by realistic architectural resources such as connectivity, circuit depth, and measurement access~\cite{preskill2018quantum}. Among the variational quantum models, quantum generative models use a parameterized quantum circuit followed by a measurement to obtain a probability distribution over bit strings through Born's rule~\cite{liu2018differentiable,benedetti2019generative}. They are particularly promising because sampling from such distributions is known to be classically hard in general~\cite{RandomCircuitSampling, IQP}. However, the expressivity or representational power of these \textit{Born models}, in terms of the family of distributions they can generate, is tied not only to the Hilbert-space dimension, but also to the aforementioned architectural restrictions of the underlying circuit, including depth, connectivity, and available measurement operations.

Combining limited circuit depth with local connectivity restricts the propagation of information to a finite spatial range, preventing shallow circuits from establishing correlations between sufficiently distant qubits. \textit{Dynamic quantum circuits} can overcome this limitation by using mid-circuit measurements and classically controlled feedforward operations to establish long-range correlations~\cite{hwang2025distributed, baumer2024efficient, piroli2024approximating, EisertShallowMidCircuit}. Although powerful, these operations introduce additional experimental overhead: mid-circuit measurements are susceptible to measurement errors, while classical processing and communication can increase execution latency.

Several other approaches have explored stochastic or channel-based extensions of variational quantum models~\cite{majumder2024variational,majumder2026minimizing,heredge2025nonunitary,wen2026trainable}. In particular, Refs.~\cite{majumder2024variational,majumder2026minimizing} introduce classical randomness into the unitary model by classically processing the intrinsically random measurement outcomes of measurement-based quantum computation (MBQC). Separately, Ref.~\cite{majumder2024variational} proved that, for a restricted small-scale architecture, introducing this classical randomness yields a channel model that generates a strictly larger family of output distributions than the corresponding unitary model.
 However, whether such randomness establishes a strict scalable representational or expressivity advantage over a pure unitary Born model at shallow depth remains open. This motivates the central question of this work: 
 
 \textit{Can the family of output distributions generated by shallow local quantum circuits be strictly enlarged without introducing additional quantum resources in the form of circuit depth, long-range entangling gates, or adaptive mid-circuit measurements and feedforward?}

In this work, we show that shared classical randomness, a standard resource in entanglement theory~\cite{BuscemiLOSR, DSchmidLOSR}, is sufficient to establish a strict representational separation at shallow depth on any bounded-degree local architecture. We further allow the shared randomness to be tunable, making the resulting model suitable for generative learning tasks. As illustrated in
Fig.~\ref{fig:motivation}, a deep local circuit [Fig.~\ref{fig:motivation}(a)] can build correlations between distant qubits, whereas a shallow local circuit [Fig.~\ref{fig:motivation}(b)] cannot do so under the same connectivity constraints. Our construction
[Fig.~\ref{fig:motivation}(c)] keeps the circuit shallow, but augments it at an
intermediate layer with tunable shared randomness in the form of a stochastic
Pauli string whose support includes spatially separated qubits. The additional resource is therefore not a long-range quantum gate, but a shared classical random variable that determines in advance whether local Pauli gates will be applied simultaneously in distant regions of the circuit. Averaging over the random choices defines a stochastic channel model that can generate long-range correlations in its classical output distribution that the corresponding unitary model cannot reproduce at shallow depth, thereby establishing a strict separation between the two model families. In a one-dimensional nearest-neighbour architecture, reproducing these correlations with a purely unitary model requires a depth that grows linearly with the distance between the relevant regions. More generally, on any bounded degree
local architecture, the required unitary depth is
governed by the graph distance between these regions. In measurement-based quantum computation (MBQC), the measurement outcomes are inherently random, and suitable adaptation or retention of selected outcomes can naturally realize channel models with shared randomness without requiring extra resources.

Our results should not be interpreted as establishing a quantum computational advantage over classical methods. Rather, they establish a representational separation between two classes of quantum generative models subject to the same shallow, local quantum-resource constraints. One of these classes is given by standard shallow unitary Born machines with restricted connectivity.  The only additional resource of the other class is shared classical randomness,
used to coordinate Pauli operations on separated qubits. Thus, shared classical randomness alone can strictly enlarge the family of output distributions accessible to shallow, local quantum generative models without requiring additional quantum depth.

This manuscript is organized as follows. In Sec.~\ref{sec:gen_mod}, we introduce quantum generative modeling within the QCBM framework. In Secs.~\ref{sec:gen_bw_unit} and~\ref{sec:gen_bw_ch}, we define the local unitary model and its shared-randomness channel extension, respectively. Section~\ref{sec:mbqc_realizing_sh_ran} presents an MBQC realization of the shared-randomness channel, and Sec.~\ref{sec:mbqc_based_model} introduces the corresponding MBQC-based graph generative model. In Sec.~\ref{sec:analytical_res}, we present our analytical and numerical results. Section~\ref{sec:related_work} discusses related work. Finally, in the Supplementary Material, Appendices~\ref{app1:background}-\ref{app:num_details}, we provide detailed proofs and supporting calculations for the results established in Sec.~\ref{sec:analytical_res}.

\begin{figure*}[t]
\centering
\includegraphics[width=1.02\textwidth]{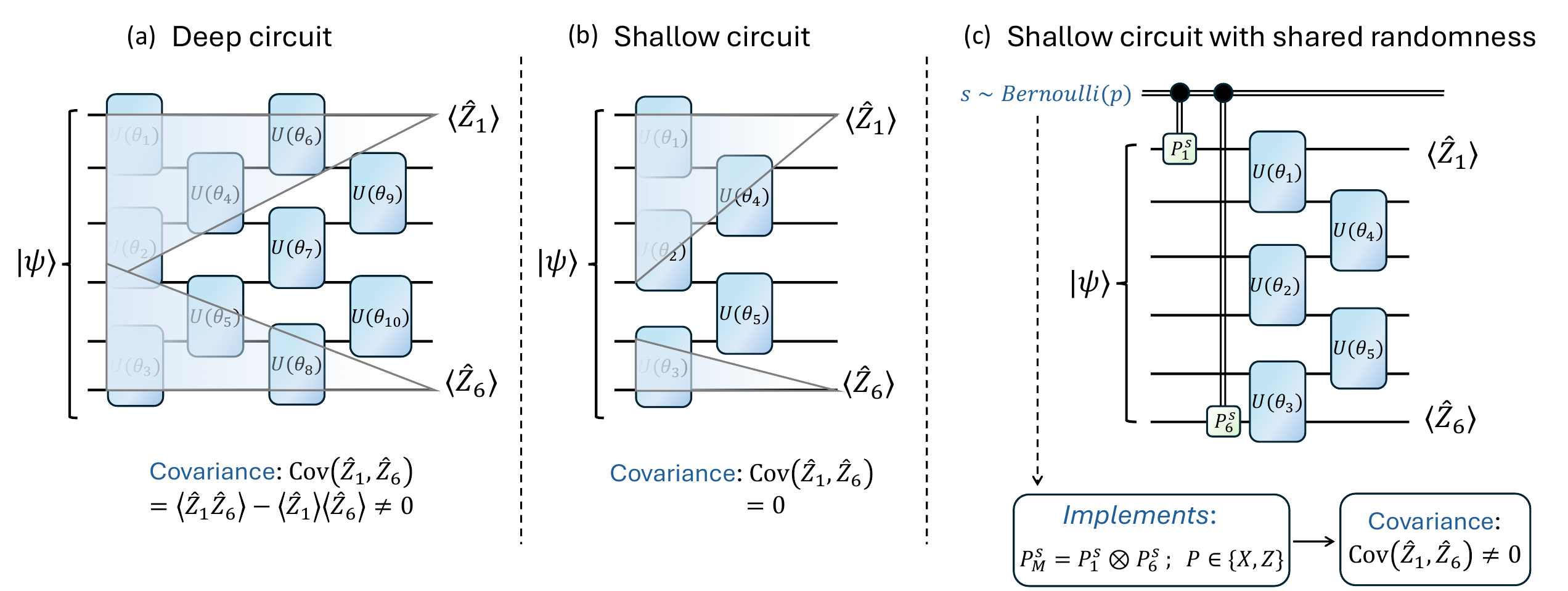}
\caption{\textbf{Long-range correlations from shared classical randomness:} For illustration, the figure shows a brickwall-type circuit on $N=6$ qubits acting on an input state $\ket{\psi}$, although the mechanism applies to any system size $N\geq 6$. The shaded regions denote the backward light cones of the measured observables $\hat Z_1$ and $\hat Z_6$. 
(a) In a sufficiently deep (linear depth) local circuit, the two light cones overlap, allowing the circuit to generate a nonzero correlation,
$\operatorname{Cov}(\hat Z_1,\hat Z_6)
=\langle \hat Z_1\hat Z_6\rangle
-\langle \hat Z_1\rangle\langle \hat Z_6\rangle\neq0$.
(b) For a shallow local circuit, the light cones remain disjoint. For a product input state, the corresponding expectation values factorize, and hence
$\operatorname{Cov}(\hat Z_1,\hat Z_6)=0$ for all model parameters.
(c) The same shallow circuit is augmented by one shared classical random bit $s\in\{0,1\}$ sampled from a Bernoulli distribution, which controls local Pauli operations on the two distant qubits and implements the stochastic string
$P_{\mathcal{M}}^s=P_1^s\otimes P_6^s$, with $P_q\in\{X,Z\}$, $P^0_q=\id$, and $q\in\{1,6\}$. The input state $\ket{\psi}$ is chosen such that $P_{\mathcal M}\ket{\psi}$ is not proportional to $\ket{\psi}$, ensuring that the Pauli string cannot be absorbed into the input state. Although no long-range quantum gate is introduced in (c), the shared random variable correlates the local responses and can produce
$\operatorname{Cov}(\hat Z_1,\hat Z_6)\neq0$ for suitable model parameters. Thus, shared classical randomness enables correlations that a purely unitary local circuit cannot generate at the same shallow depth.} 
\label{fig:motivation}
\end{figure*}

\section{Methods}

\subsection{Quantum generative modeling with shared randomness}
\subsubsection{Generative modelling}\label{sec:gen_mod}

\begin{figure}[t]
\centering
\includegraphics[width=0.51\textwidth]{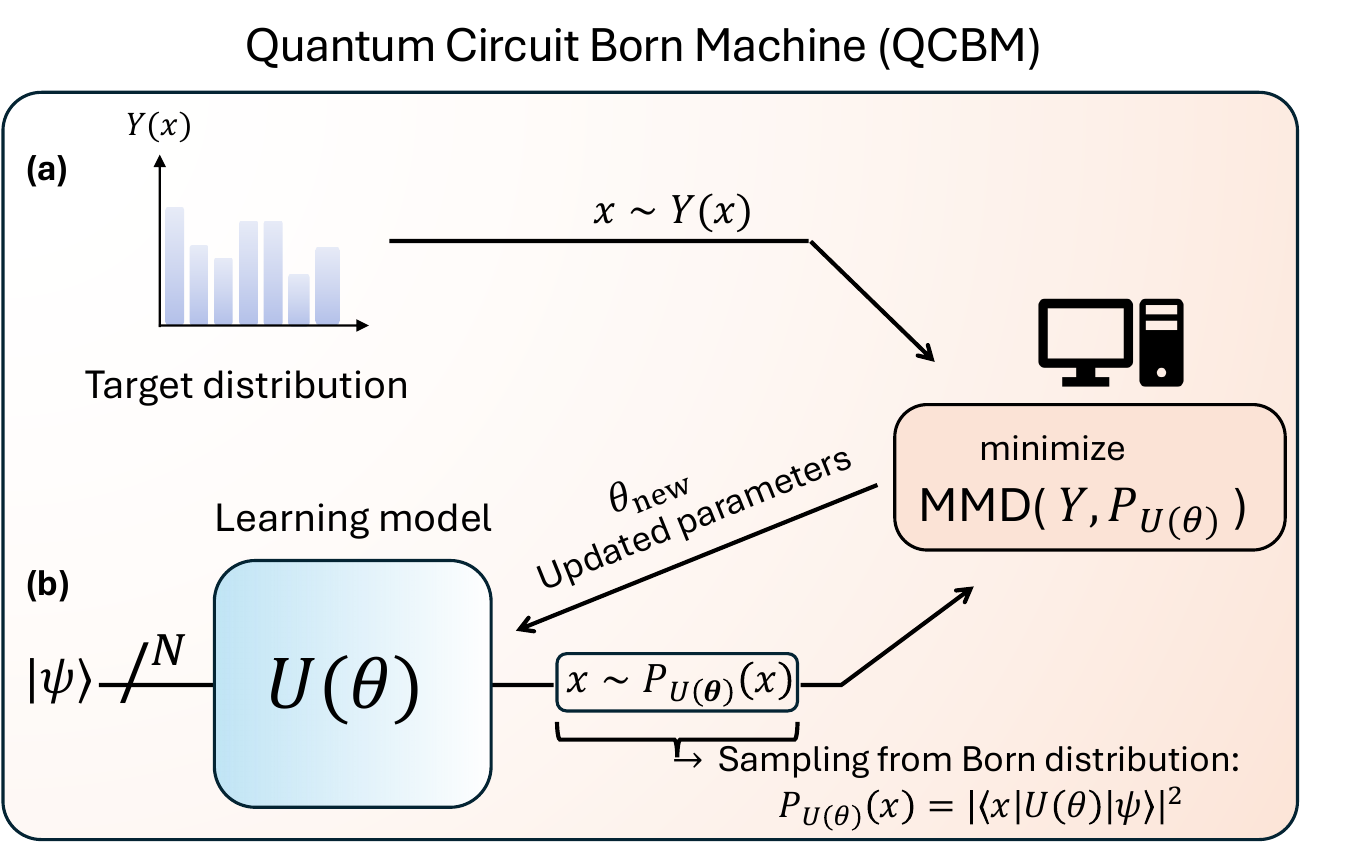}
\caption{\textbf{Quantum circuit Born machines for quantum generative modeling:} 
(a) Target distribution $Y(x)$ (usually unknown), from which samples are given, $x\sim Y$. (b) The learning model applies a parametrized quantum circuit $U(\bm\theta)$ to a fixed input state $\ket{\psi}$ and measures the resulting state in the computational basis. Each circuit execution produces a bit string sampled from the Born distribution $x\sim P_{U(\bm\theta)}$. Samples from the target and model distributions are used to evaluate the maximum mean discrepancy (MMD) loss between $Y$ and $P_{U(\bm\theta)}$. Gradients are estimated using the parameter-shift rule, after which a classical optimizer updates $\bm\theta\mapsto\bm\theta_{\rm new}$. The updated parameters are returned to the quantum circuit, and the procedure is repeated until the loss converges.} 
\label{fig:qgm}
\end{figure}

Generative learning is an unsupervised learning task that aims to model the probability distribution underlying a given data set. Generative models are commonly classified as \emph{explicit}, when they provide a tractable expression for the model distribution $P_{U(\bm\theta)}$, or \emph{implicit}, when they specify only a stochastic procedure for generating samples, $x\sim P_{U(\bm\theta)}$~\cite{mohamed2016learning}. We focus on implicit generative models. The training data consist of $M$ independent and identically distributed samples $\mathcal D=\{x_1,\ldots,x_{\mathcal{M}}\}$, where each $x_k\in\{0,1\}^N$ is a bit string drawn from an unknown target distribution $Y(x)$. The goal is to optimize the model parameters $\bm\theta$ such that the resulting distribution $P_{U(\bm\theta^\star)}(x)$ approximates $Y(x)$ to a desired accuracy, as illustrated in Fig.~\ref{fig:qgm}.

Classical generative models typically use neural networks with trainable weights to approximate a target distribution. In a quantum circuit Born machine (QCBM), a class of quantum generative models, the neural network is replaced by a parametrized quantum circuit composed of trainable single- and multi-qubit gates. Starting from a fixed input state $\ket{\psi}$, the circuit prepares the parametrized state $U(\bm\theta)\ket{\psi}$, which is measured in the computational basis to generate samples according to the Born rule, $P_{U(\bm\theta)}(x)=\left|\langle x|U(\bm\theta)|\psi\rangle\right|^2$. Each circuit execution produces one bit-string sample $x$, and repeated executions provide the samples required to estimate a suitable loss function and optimize the model parameters. A key feature of QCBMs is that they generate samples directly from the model distribution without requiring its explicit evaluation. The models studied in this paper follow the same sampling and optimization procedure shown in Fig.~\ref{fig:qgm}.

\subsubsection{Generic brickwall unitary circuit}\label{sec:gen_bw_unit}

We consider a shallow one-dimensional local quantum circuit on $N$ qubits, composed of $D$ layers
\begin{equation}\label{eq:bw_cir}
    U(\bm{\theta})
    =
    U_D(\bm{\theta}_D)\cdots U_1(\bm{\theta}_1),
\end{equation}
where
\begin{equation}
    \bm\theta=\{\bm\theta_k\}_{k=1}^D
\end{equation}
and $\bm\theta_k$ denotes the set of trainable parameters in layer $k$. For our purposes, we restrict to one-dimensional nearest-neighbor \textit{brickwall} circuits in which each gate acts on at most two adjacent qubits. Such a circuit consists of alternating sub-layers of parametrized two-qubit gates acting on disjoint even and odd nearest-neighbor bonds (see Fig.~\ref{fig:motivation}).
For example, one brickwall layer may be written as
\begin{equation}
    U_k(\bm \theta_k)
    =
    U^{(k)}_{\mathrm{odd}}(\bm{\theta}^{(k)}_{odd})
    U^{(k)}_{\mathrm{even}}(\bm{\theta}^{(k)}_{even}),
\end{equation}
where
\begin{equation}\label{eq:bricks}
    U^{(k)}_{\mathrm{even}}
    =
    \prod_{\substack{i=2\\ i\ \mathrm{even}}}^{N-1}
    G^{(k)}_{i,i+1}(\theta^{(k)}_{i,i+1}),
    ~~
    U^{(k)}_{\mathrm{odd}}
    =
    \prod_{\substack{i=1\\ i\ \mathrm{odd}}}^{N-1}
    G^{(k)}_{i,i+1}(\theta^{(k)}_{i,i+1}).
\end{equation}
and $G^{(k)}_{i,i+1}$ are parametrized two-qubit gates acting on qubits $(i,i+1)$ with $\theta^{(k)}_{i,i+1}\in\mathbb R$. Within each sub-layer, the gates act on disjoint pairs and can be applied in parallel. 
Between the even and odd sub-layers, the gates generally do not commute, so their ordering is part of the circuit definition. 
Single-qubit rotations can also be included in the generic parametrized two-qubit gates.

Throughout this paper, for the generic brickwall circuit we consider a fixed product input state
\begin{equation}\label{eq:input_st}
    \rho_{\rm in}
    =
    \bigotimes_{j=1}^N
    \rho_j,
\end{equation}
where each $\rho_j$ is an arbitrary single-qubit state. For a unitary model $U(\bm{\theta})$ as in Eq.~\eqref{eq:bw_cir}, measurement of the output state in the computational basis therefore produces the distribution
\begin{equation}\label{eq:unit_dist}
    P_{U(\bm{\theta})}(x)
    =
    \operatorname{Tr}
    \left[
        \ket{x}\!\bra{x}
        U(\bm{\theta})\rho_{\rm in}U(\bm{\theta})^\dagger
    \right],
    ~~ x\in\{0,1\}^N
\end{equation}

\subsubsection{Generic brickwall channel circuit}\label{sec:gen_bw_ch}
We now augment the unitary brickwall circuit with correlated stochastic Pauli operations. Let $P_{\mathcal{M}}$ be a Pauli string with
\begin{equation}
P_{\mathcal{M}}
=
\bigotimes_{j=1}^N Q_j,
\end{equation}
where ${\mathcal{M}}\subseteq[1,N]:=\{1,\ldots,N\}$ denotes the support of the Pauli string, $Q_j\in\{X,Z\}$ for $j\in {\mathcal{M}}$, and $Q_j=\id$ for $j\notin {\mathcal{M}}$. A single classical Bernoulli variable $s\in\{0,1\}$ determines whether the entire Pauli string is applied, such that
\begin{equation}
P_{\mathcal{M}}^0=\id,
\qquad
P_{\mathcal{M}}^1=P_{\mathcal{M}}.
\end{equation}
Thus, all local Pauli operators supported on ${\mathcal{M}}$ are controlled by the same random variable $s$, producing a spatially correlated stochastic Pauli operation. 

Consider an insertion slot $l\in\{1,\ldots,D-1\}$ between the circuit layers $U_l(\bm{\theta}_l)$ and $U_{l+1}(\bm{\theta}_{l+1})$. The unitary associated with branch $s$ is defined as
\begin{equation}\label{eq:min_channel_def}
U^{(s)}(\bm{\theta})
:=
U_D(\bm{\theta}_D)
\cdots
U_{l+1}(\bm{\theta}_{l+1})
P_{\mathcal{M}}^s
U_l(\bm{\theta}_l)
\cdots
U_1(\bm{\theta}_1).
\end{equation}
The Pauli string is applied with probability $\Pr(s=1)=p$, where $p\in[0,1]$, and is omitted with probability $\Pr(s=0)=1-p$. Equivalently,
\begin{equation}
\Pr(s)
=
p^s(1-p)^{1-s}.
\end{equation}
We refer to $p$ as the \textit{application probability}. The values $p=0$ and $p=1$ correspond to deterministic unitaries which we refer to as branches depending on $s\in\{0,1\}$, whereas $p\in(0,1)$ defines a nontrivial stochastic channel. 

The resulting channel model $\mathcal E_{\bm \theta,p}(\rho_{\rm in})$,  acting on the input state $\rho_{\rm in}$ becomes 
\begin{equation}\label{eq:channel_model}
    \mathcal E_{\bm \theta, p}(\rho_{\rm in})
    =
    \sum_{s\in\{0,1\}}
    \Pr(s)\,
    U^{(s)}(\bm{\theta})\rho_{\rm in} U^{(s)}(\bm{\theta})^\dagger,
\end{equation}
where $U^{(0)}(\bm \theta)$ denotes the branch without any correlated Pauli gates. Measuring the channel output in the computational basis gives the distribution
\begin{equation}\label{eq:channel_dist}
    P_{\mathcal E_{\bm{\theta},p}}(x)
    = \sum_{s\in\{0,1\}}
    \Pr(s)
    \operatorname{Tr}
    \left[
        \ket{x}\!\bra{x}
        U^{(s)}(\bm{\theta})\rho_{\rm in}U^{(s)}(\bm{\theta})^{\dagger}]
    \right]
\end{equation}
The construction extends directly to multiple stochastic Pauli strings inserted at different positions within the circuit. Let $\{P_{{\mathcal{M}}_i}\}_{i=1}^L$ be a set of $L$ Pauli strings, where ${\mathcal{M}}_i\subseteq[1,N]$. Each Pauli string $P_{{\mathcal{M}}_i}$ is controlled by a Bernoulli variable $s^i\in\{0,1\}$ with
\begin{equation}
\Pr(s^i=1)=p^i,
\qquad
\Pr(s^i=0)=1-p^i,
\end{equation}
where $p^i\in[0,1]$. We assume that the variables $\{s^i\}_{i=1}^L$ are mutually independent and denote the corresponding application probabilities by the set $\bm p=\{p^i\}_{i=1}^L$. Although the variables controlling different strings are independent, each individual string $P_{{\mathcal{M}}_i}^{s^i}$ represents a correlated operation because the same variable $s^i$ controls all Pauli operators supported on ${\mathcal{M}}_i$.

Each Pauli string $P_{{\mathcal{M}}_i}$, with $i\in\{1,\ldots,L\}$, is assigned to an intermediate layer denoted by an insertion slot $l_i\in\{1,\ldots,D-1\}$. The assignment $\{l_i\}_{i=1}^L$ is fixed as part of the circuit architecture. Different Pauli strings may be assigned to different slots, multiple Pauli strings may share the same slot, and some slots may contain no Pauli strings. Having several Pauli strings for the same slot allows to more directly accommodate correlations in which non-overlapping subsets of qubits are correlated within the same subset, but not across the subsets.

For a branch $\bm s=(s^1,\ldots,s^L)\in\{0,1\}^L$ and fixed assignments $\{l_i\}_{i=1}^L$, define
\begin{equation}\label{eq:multi_pauli_corr}
B_l(\bm s)
:=
\prod_{\substack{1\leq i\leq L,\\ l_i=l}}
P_{{\mathcal{M}}_i}^{s^i},
\qquad
l\in\{1,\ldots,D-1\}.
\end{equation}
When iterating over $1\leq i\leq L$, the condition $l_i=l$ selects precisely those Pauli strings whose assigned insertion slot is $l$ in the circuit definition, that is, those inserted between $U_l$ and $U_{l+1}$. Thus, $B_l(\bm s)$ is the product of all Pauli strings assigned to the insertion slot between $U_l$ and $U_{l+1}$. The product is taken in a fixed order, and $B_l(\bm s)=\id$ if no Pauli string is assigned to slot $l$. The branch $\bm s$ dependent unitary is

\begin{equation}\label{eq:unit_with_byp}
    U^{(\bm s)}(\bm\theta)
=
U_D(\bm\theta_D)
B_{D-1}(\bm s)
U_{D-1}(\bm\theta_{D-1})
\cdots
B_1(\bm s)
U_1(\bm\theta_1).
\end{equation}
and the probability of branch $\bm s$ being realized is
\begin{equation}\label{eq:branch_prob_decomp}
\pi_{\bm p}(\bm s)
=
\prod_{i=1}^L
(p^i)^{s^i}(1-p^i)^{1-s^i}=\prod_{i=1}^L\pi_{p^i}(s^i).
\end{equation}
The induced channel model can be written as
\begin{equation}\label{eq:channel_model_set}
    \mathcal E_{\bm\theta,\bm p}(\rho_{\rm in})
    =
    \sum_{\bm s\in\{0,1\}^L}
    \pi_{\bm p}(\bm s)\,
    U^{(\bm s)}(\bm\theta)
    \rho_{\rm in}
    U^{(\bm s)}(\bm\theta)^\dagger .
\end{equation}
and the corresponding Born distribution is
\begin{equation}\label{eq:channel_dist_all}
    P_{\mathcal E_{\bm\theta,\bm p}}(x)
    =
    \sum_{\bm s\in\{0,1\}^L}
    \pi_{\bm p}(\bm s)\,
    \Tr\!\left[
        \ket{x}\!\bra{x}\,
        U^{(\bm s)}(\bm\theta)
        \rho_{\rm in}
        U^{(\bm s)}(\bm\theta)^\dagger
    \right].
\end{equation}
where $\bm \theta=\{\bm\theta_l\}_{l=1}^D$ with $\bm \theta_l$ containing the set of parameters for layer $l$, and $\bm p=\{p^i\}_{i=1}^L$ contains the probabilities with which the correlated Paulis are applied.

\begin{definition}[Minimal correlated channel model] \label{def:min_channel} The minimal correlated channel model is obtained by inserting a single stochastic Pauli string $P_{\mathcal{M}}^s$ at an intermediate layer $l$ of the local quantum circuit defined in Eq.~\eqref{eq:min_channel_def}, where the support ${\mathcal{M}}$ contains two distinct sites $a,b\in {\mathcal{M}}$, i.e., $P^s_{\mathcal{M}}=P^s_a\otimes P^s_b$ with $P_q\in\{X,Z\}$, $P_q^0=\id$, and $q\in \{a,b\}$. The resulting channel, for an input state $\rho_{\rm in}$ is
\begin{align}\label{eq:min_channel_def_split}
\mathcal E_{\bm{\theta},p}(\rho_{\rm in})
=&
(1-p)
U^{(0)}(\bm{\theta})\rho_{\rm in}U^{(0)}(\bm{\theta})^\dagger\nonumber
+\\
&p
U^{(1)}(\bm{\theta})
\rho_{\rm in}
U^{(1)}(\bm{\theta})^\dagger
\end{align}
where $U^{(s)}(\bm \theta)$ with $s\in\{0,1\}$ follows from Eq.~\eqref{eq:min_channel_def}.
\end{definition}

\subsection{Learning model}\label{sec:mbqc_based_model}

\begin{figure}[t]
\centering
\includegraphics[width=0.48\textwidth]{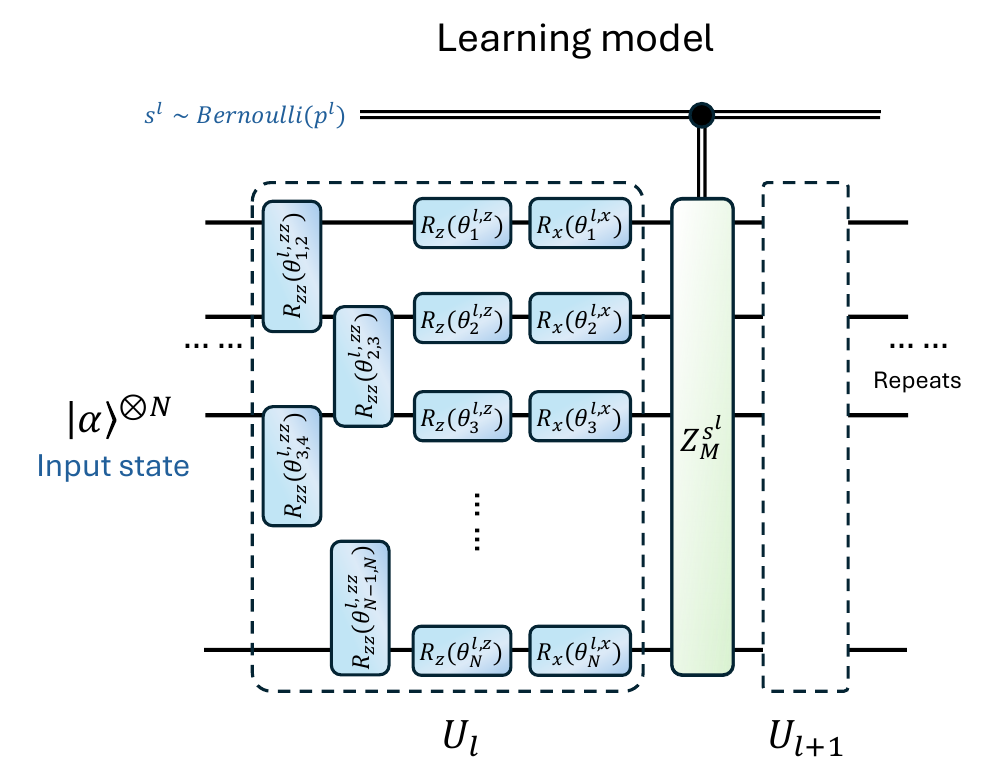}
\caption{\textbf{Learning model with shared randomness:} A depth-$D$ one-dimensional nearest-neighbor circuit acts on the product input state $\ket{\alpha}^{\otimes N}$. Each parametrized layer $U_l(\bm{\theta}_l)$ consists of nearest-neighbor $R_{zz}$ rotations followed by local $R_z$ and $R_x$ rotations. Between layers $U_l$ and $U_{l+1}$, a shared classical bit $s^l\sim\operatorname{Bernoulli}(p^l)$ allows the stochastic Pauli string $Z_{\mathcal M}^{s^l}$ to act on a subset $\mathcal M\subseteq[1,N]$ of qubits, such that the local $Z$ operations are either applied jointly or all omitted. Repeating the layer structure generates the full depth-$D$ learning model while the Pauli
string shown here is inserted only at the designated slot $l$.} 
\label{fig:learning_model_main}
\end{figure}

As an explicit learning model, we consider a depth-$D$ one-dimensional nearest-neighbor circuit on $N$ qubits with open boundary conditions as in Eq.~\eqref{eq:bw_cir},
where each layer $k\in[1,D]$ in the circuit can be written as 
\begin{equation}\label{eq:model_num_main}
U_k(\bm{\theta}_k)=\bigotimes_{i=1}^N R_x(\theta^{k,x}_{i}) \bigotimes_{i=1}^N R_z(\theta^{k,z}_{i})\prod_{i=1}^{N-1}R_{zz}(\theta^{k,zz}_{i,i+1}) 
\end{equation}
as shown in Fig.~\ref{fig:learning_model_main} with
\begin{equation}
R_x(\theta)=e^{-i\theta X/2},
~
R_z(\theta)=e^{-i\theta Z/2},
~
R_{zz}(\theta)=e^{-i\theta Z\otimes Z/2}.
\end{equation}
Here, $R_x$ and $R_z$ are single-qubit Pauli-rotations, while $R_{zz}$ is a two-qubit entangling Pauli-rotation acting on nearest-neighbor qubits, and the $k^{\rm th}$ layer parameters are
\begin{equation}\label{eq:layer_params_main}
\bm{\theta}_k
=
\left(
\{\theta^{k,x}_i\}_{i=1}^{N},
\{\theta^{k,z}_i\}_{i=1}^{N},
\{\theta^{k,zz}_{i,i+1}\}_{i=1}^{N-1}
\right).
\end{equation}
For an insertion slot after layer $l\in[1,D-1]$, write
\begin{equation}
U(\bm{\theta})
=
U_{>l}(\bm{\theta})U_{\leq l}(\bm{\theta}).
\end{equation}
where $U_{\leq l}(\bm{\theta})$ contains all circuit layers up to layer $l$, while $U_{>l}(\bm{\theta})$ contains the subsequent layers.
A shared bit $s^l\in\{0,1\}$ is sampled according to
$\Pr(s^l=1)=p^l$, and the stochastic Pauli string
\begin{equation}
Z_{\mathcal M}^{s^l}
:=
\bigotimes_{j\in\mathcal M} Z_j^{s^l},
\qquad
\mathcal M\subseteq[1,N],
\end{equation}
is inserted at this slot, between layer $l$ and $l+1$, as shown in Fig.~\ref{fig:learning_model_main}. The branch unitary becomes
\begin{equation}
U^{(s^l)}(\bm{\theta})
=
U_{>l}(\bm{\theta})
Z_{\mathcal M}^{s^l}
U_{\leq l}(\bm{\theta}).
\end{equation}

The resulting channel is
\begin{align}
\mathcal E_{\bm{\theta},p^l}(\rho_{\rm in})
={}&
(1-p^l)\,
U^{(0)}(\bm{\theta})\rho_{\rm in}U^{(0)}(\bm{\theta})^\dagger
\nonumber\\
&+
p^l\,
U^{(1)}(\bm{\theta})\rho_{\rm in}U^{(1)}(\bm{\theta})^\dagger,
\label{eq:num_channel_main}
\end{align}
and computational-basis measurements give, as output probability,
\begin{equation}
P_{\mathcal E_{\bm{\theta},p^l}}(\bm{x})
=
\operatorname{Tr}\!\left[
\ket{\bm{x}}\!\bra{\bm{x}}\,
\mathcal E_{\bm{\theta},p^l}(\rho_{\rm in})
\right].
\end{equation}
We also use a fixed product input state
\begin{equation}\label{eq:input_st_r}
    \rho_{\rm in}
    =
    \bigotimes_{j=1}^N
    R_x(\alpha)\ket{0}\!\bra{0}R_x(\alpha)^\dagger,
\end{equation}
with a fixed angle $\alpha\notin\frac{\pi}{2}\mathbb{Z}$. For our numerical simulations, we explicitly set $\alpha=\frac{\pi}{4}$.

Propagating the inserted Pauli string through $U_{>l}(\bm{\theta})$ leaves commuting Pauli rotations unchanged while flipping the signs of the angles of anti-commuting rotations, as illustrated in Sec.~\ref{def:input_byp_model} (also see Fig.~\ref{fig:two_learning_models}) in the appendix. The choice of a correlated Pauli-$Z$ string ensures that the shared randomness is encoded entirely in correlated sign flips of the non-Clifford angles of $R_x$ gates. These branch-dependent angle flips modify the circuit dynamics and hence the output distribution non-trivially, while the remaining propagated $Z$-string is diagonal in the computational basis and does not alter the final measured bit string. More generally, correlated Pauli-$X$ or mixed Pauli strings may also be considered, although their direct action on computational-basis outcomes must then be taken into account. Later, we show in Sec.~\ref{sec:end_pauli_corr} that any resulting non-$Z$ Pauli string at the circuit output can be corrected, leaving the branch-dependent intermediate angle flips as the effective action of the stochastic Pauli string.

The learning model defined in Eq.~\eqref{eq:num_channel_main} is used in the numerical analysis in Sec.~\ref{sec:num_results}. We note, however, that the analytical results presented in Sec.~\ref{sec:analytical_res} do not depend on this specific architecture and can be extended to more general finite-range local architectures.

\subsection{Native correlated channel models in variational measurement-based quantum computation }\label{sec:mbqc_realizing_sh_ran}

\subsubsection{Introduction to variational measurement-based quantum computation}\label{sec:intro_to_vmbqc}
Here we describe how shared classical randomness, represented by a \textit{stochastic Pauli string}, or \textit{correlated stochastic Pauli operations}, can be realized natively within the MBQC framework. In MBQC, universal quantum computation is performed through adaptive single-qubit measurements on an entangled resource state~\cite{raussendorf2001one,briegel2001persistent}. To illustrate this mechanism, we use the standard cluster state, prepared by initializing qubits in the state $\ket{+}$ on a rectangular lattice and applying nearest-neighbor controlled-$Z$ gates. The construction is not restricted to cluster states: other graph states supporting MBQC can likewise realize unitary circuits augmented with shared randomness. In the left panels of Fig.~\ref{fig:mbqc_explain}~(a,b), we consider a cluster state of size $N\times(D+1)=4\times 3$ with open boundary conditions, so that boundary qubits are not connected along the spatial direction. The first $D$ columns are measured in the $XY$ plane, while the final column forms the output layer and is measured in the computational $Z$ basis.

We label the measured qubits by $(l,j)$, where $l\in[1,D]$ is the layer index
and $j\in[1,N]$ is the spatial index. The measurement on qubit $j$ in layer
$l$ is performed in the basis
\begin{equation}
\ket{\pm_{\theta_j^l}}
=
\frac{1}{\sqrt{2}}
\left(
\ket{0}\pm e^{-i\theta_j^l}\ket{1}
\right),
\end{equation}
where $\theta_j^l$ determines the corresponding gate in the induced circuit
model~\cite{browne2006one}.

An individual measurement in this basis gives either outcome $+1$ or $-1$, each with
probability $1/2$. The $+1$ outcome implements the intended operation, while
the $-1$ outcome introduces an additional Pauli-$Z$ byproduct, since $
    \ket{-_{\theta}}\bra{-_{\theta}}
    =
    Z\ket{+_{\theta}}\bra{+_{\theta}}Z .
$
In standard MBQC, these measurement-induced byproducts are compensated by adapting subsequent measurement angles according to earlier measurement outcomes. A classical controller records the outcomes, determines the resulting byproduct operators, and updates later measurement bases through feedforward. This procedure is applied throughout the resource state, and the final readout is interpreted accordingly, so that the implemented logical transformation is independent of the stochastic measurement outcomes~\cite{majumder2024variational}.

For the cluster state considered here, MBQC with $XY$-plane measurements admits an equivalent circuit representation (see Fig.~\ref{fig:mbqc_explain}) consisting of trainable $Z$-axis
rotations, stochastic Pauli-$Z$ byproducts, and a fixed Clifford layer
$T_c$ defining a Clifford quantum cellular automaton (CQCA)~\cite{poulsen2024measurement,majumder2024variational,
majumder2026minimizing}. With open boundary conditions, we take
\begin{equation}
    T_c
    =
    H^{\otimes N}
    \prod_{j=1}^{N-1}CZ_{j,j+1}.
\end{equation}
and using the convention
\begin{equation}
    R_z(\theta_j^l)
    :=
    \exp\!\left(-i\theta_j^l Z_j/2\right),
\end{equation}
the circuit branch corresponding to measurement outcomes
$\bm s=\{s_j^l\}_{j,l}$ can be written as
\begin{equation}\label{eq:pure_unit}
    U^{(\bm s)}(\bm\theta)
    =
    \prod_{l=D,\ldots,1}
    \left[
        T_c
        \prod_{j=1}^{N}
        Z_{j}^{s_j^l}
        \exp\!\left(-i\theta_j^l Z_{j}/2\right)
    \right],
\end{equation}
where $s_j^l=0$ corresponds to the $+1$ measurement outcome and $s_j^l=1$ to the $-1$
outcome. This binary variable $s_j^l\in\{0,1\}$ decides whether a byproduct $Z^{s_j^l}_j$ should appear at position $(l,j)$ in the circuit picture (Fig.~\ref{fig:mbqc_explain}). The Eq.~\eqref{eq:pure_unit} gives rise to a variational model of MBQC, called variational MBQC (VMBQC)~\cite{majumder2024variational}. 

% As CQCAs map Pauli operators to Pauli operators under conjugation, for the open chain, its action on the single-qubit Pauli generators is
% %
% \begin{equation}
% \label{eq:T-func_3}
%     \begin{split}
%         T_c(X_1) &:= T_cX_1T^\dagger_c = Z_1 X_{2}\\
%         T_c(X_i) &:= T_cX_iT^\dagger_c = X_{i-1} Z_i X_{i+1},~~\mathrm{if}~i\in[2,N-1] \\
%         T_c(X_N) &:= T_cX_NT^\dagger_c = Z_N X_{N-1}\\
%         T_c(Z_i) &:= T_cZ_iT^\dagger_c = X_i 
%     \end{split}
% \end{equation} 

While standard MBQC always corrects byproducts, Refs.~\cite{majumder2024variational,majumder2026minimizing} allow for probabilistic correction of these byproducts to elevate the unitary VMBQC to a channel model. Let
$c_j^l\in\{0,1\}$ denote the correction decision, with
$c_j^l=1$ meaning that the byproduct is corrected and $c_j^l=0$ meaning that it
is retained or not corrected. Assuming that $c_j^l$ is sampled independently of the measurement outcome $s_j^l$, with
\begin{equation}
    \Pr(c_j^l=1)=p_j^l,
\end{equation}
then the retained byproduct bit is
\begin{equation}
    \widetilde{s}_j^l
    =
    (1-c_j^l)s_j^l .
\end{equation}
Since measurement outcomes are random and satisfy $\Pr(s_j^l=1)=1/2$, the retained byproduct probability is
\begin{equation}
\Pr(\widetilde{s}_j^l=1)
=
\frac{1}{2}(1-p_j^l)
\leq
\frac{1}{2}.
\end{equation}
This gives rise to a family of quantum channels
\begin{align}\label{eq:mbqc_ch}
\mathcal{E}_{(\bm{\theta}, \bm{p})}[\rho]
=
\sum_{\widetilde{\bm{s}}\in\{0,1\}^{N\times D}}
p(\widetilde{\bm{s}})
~U^{(\widetilde{\bm{s}})}(\bm\theta)
~\rho~
U^{(\widetilde{\bm{s}})}(\bm\theta)^{\dagger},
\end{align}
where $\rho=(\ket{+}\bra{+})^{\otimes N}$ is the input state. As the byproducts are independent of each other, the branch probability factorizes as $p(\bm{\tilde{s}})=\prod_{l,j}p(\tilde{s}_j^l)=\prod_{l,j}\left(\frac{1+(-1)^{\tilde{s}^l_j}p^l_j}{2}\right)$, with $\bm{\tilde{s}}$ being an $N\times D$ binary matrix with entries $\tilde{s}^j_i\in \{0,1\}$, and $p^j_i$ is the trainable correction probability for the qubit at $(i,j)$-location. In VMBQC, this channel character arises natively from random measurement-induced byproducts, whose retention or correction is governed entirely by classical processing and feedforward of the measurement outcomes~\cite{majumder2024variational}.

While it was shown in Refs.~\cite{majumder2024variational} that the trainable correction probability can give rise to models that are more expressive than the unitary model (even in a very restricted setting~\cite{majumder2026minimizing}), it has a rather strong shortcoming: this correction-only mechanism can suppress naturally occurring byproducts but cannot make any retained byproduct occur with probability greater than $1/2$.

In the present work, we allow more general classical control. In addition to deciding whether to retain a measurement-induced byproduct when $s_j^l=1$, the controller may deliberately introduce the same effective byproduct when $s_j^l=0$ through an appropriate modification of the subsequent feedforward rules. This allows the effective byproduct probability $\Pr(s_j^l=1)$ to vary over the full interval $[0,1]$. Moreover, a single classical controller can use the same sampled variable to apply a correlated Pauli byproduct string simultaneously to several spatially separated qubits. We illustrate these constructions using Fig.~\ref{fig:mbqc_explain}. The next Sec.~\ref{sec:ind_ran} introduces how one can deliberately add byproducts independently to individual qubits. Then, in Sec.~\ref{sec:mbqc_realization_of_co_ran}, we show the construction of shared randomness in the form of correlated byproducts on a subset of qubits similar to Eq.~\eqref{eq:channel_model_set} but implemented in VMBQC natively.

\begin{figure}[t]
\centering
\includegraphics[width=0.48\textwidth]{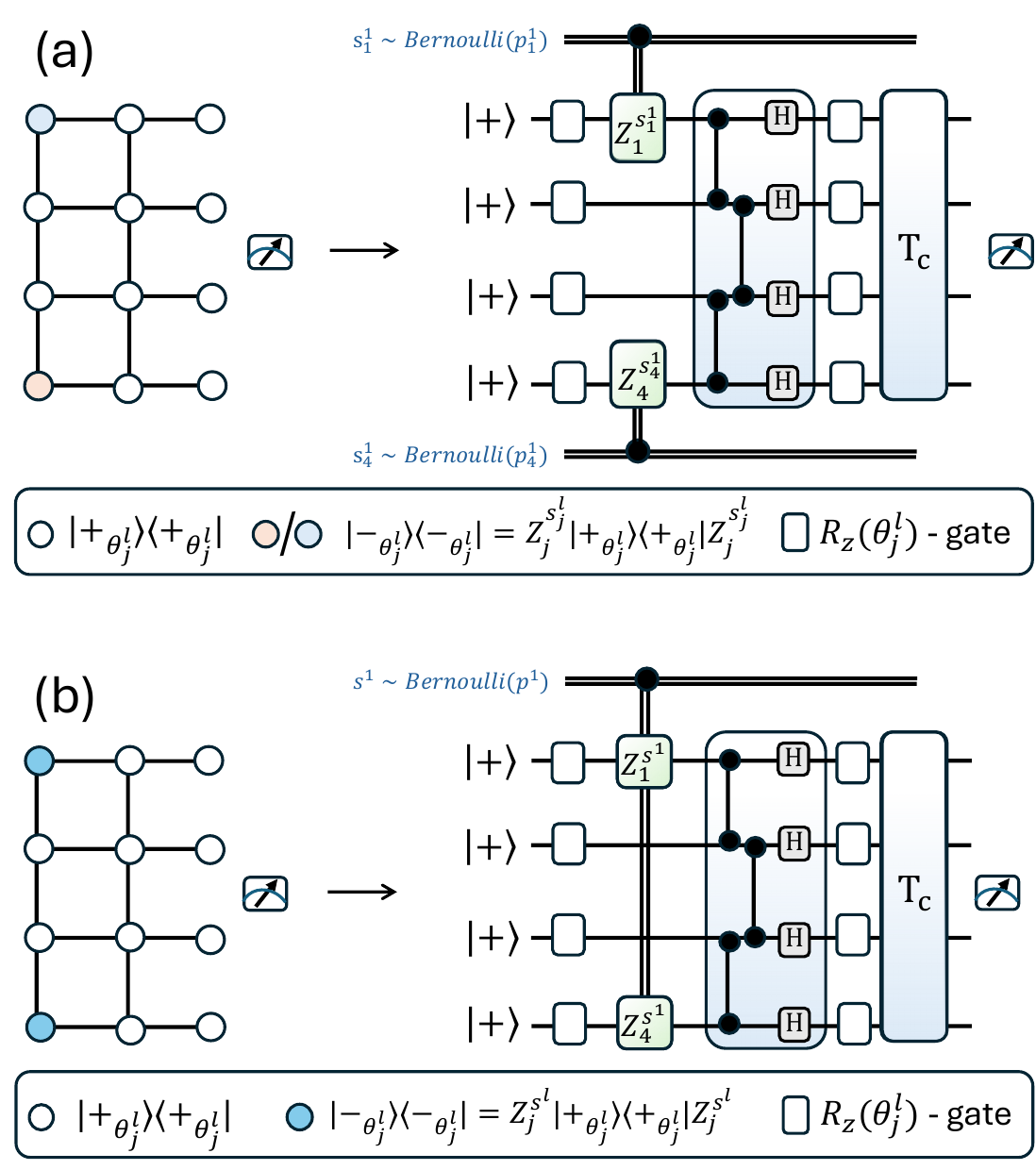}
\caption{\textbf{Independent and shared randomness within MBQC:} For illustration, we consider a $4\times 3$ cluster state, where the first two columns are measured in the $XY$ plane, and the final column forms the readout layer. The corresponding circuit representation contains parametrized $R_z(\theta_j^l)$ rotations, nearest-neighbor $CZ$ gates, and Hadamard gates. The colored boundary qubits indicate measurement-induced $Z$ byproducts that are either corrected by feedforward or deliberately added through anti-correction, all other qubits are corrected with unit probability. 
(a) The two boundary qubits are corrected or anti-corrected independently. The corresponding circuit therefore contains independent byproducts $Z_1^{s_1^1}$ and $Z_4^{s_4^1}$, controlled by independent Bernoulli variables $s_1^1\sim\operatorname{Bernoulli}(p_1^1)$ and $s_4^1\sim\operatorname{Bernoulli}(p_4^1)$, respectively, where $s_j^l=0$ denotes correction, $s_j^l=1$ denotes anti-correction of measurement-induced byproduct on qubit $j$ at layer $l$, and $\Pr(s_j^l=1)=p_j^l\in[0,1]$ as in Eq.~\eqref{eq:fig_3_a}. (b) The feedforward rule is chosen in a way so that the two boundary qubits (blue) are corrected or anti-corrected jointly. Their retained byproducts are therefore controlled by a single shared binary variable $s^1\sim\operatorname{Bernoulli}(p^1)$ (Eq.~\eqref{eq:fig_3_b}), yielding the stochastic Pauli string $Z_1^{s^1}\otimes Z_4^{s^1}$ in the equivalent circuit on the right.} 
\label{fig:mbqc_explain}
\end{figure}

\subsubsection{Independent randomness from effective classical control}
\label{sec:ind_ran}

We first describe the independent-randomness version of the construction within the MBQC framework.
For each measured qubit $(l,j)$, introduce an effective classical control variable
$c_j^l\in\{-1,+1\}$. Note that the range of this variable here is different from the variable $c_j^l\in\{0,1\}$ introduced in Sec.~\ref{sec:intro_to_vmbqc}. After obtaining the raw measurement outcome on qubit $(l,j)$, the choice $c_j^l=+1$ corrects the effective measurement-induced byproduct, whereas $c_j^l=-1$ produces the corresponding Pauli $Z$-byproduct. The latter may equivalently be interpreted as an \textit{anti-correction}, implemented by modifying the subsequent measurement angles if there is no measurement-induced byproduct. Thus, anti-correction can be viewed as the same operation as correction that treats $s=0$ as the ``incorrect" measurement outcome. Now, we define the
effective byproduct bit
\begin{equation}
    s_j^l
    :=
    \frac{1-c_j^l}{2}.
\end{equation}
Thus $s_j^l=0$ for $c_j^l=+1$ and $s_j^l=1$ for $c_j^l=-1$. Choosing
\begin{equation}
    \Pr(c_j^l=-1)=p_j^l,
    \qquad
    \Pr(c_j^l=+1)=1-p_j^l,
\end{equation}
is equivalent, in the induced circuit model of Fig.~\ref{fig:mbqc_explain} (a) (right),
to treating $s_j^l$ as a Bernoulli variable with tunable probability
\begin{equation}\label{eq:fig_3_a}
    s_j^l\sim \mathrm{Bernoulli}(p_j^l),
    ~~
    \Pr(s_j^l=1)=p_j^l,
    ~
    \Pr(s_j^l=0)=1-p_j^l .
\end{equation}
Operationally, the induced circuit applies the Pauli byproduct
$Z_j^{s_j^l}$ with probability $p_j^l$. Unlike the correction-only mechanism
described above in Eq.~\eqref{eq:mbqc_ch}, this effective control scheme allows
$\Pr(s_j^l=1)$ to range over the full interval $[0,1]$.

For the example in Fig.~\ref{fig:mbqc_explain}~(a), we use two independent
classical variables controlling the two byproducts $Z_1^{s_1^1}$ and
$Z_4^{s_4^1}$ in the circuit, with
\begin{equation}
    s_1^1\sim \mathrm{Bernoulli}(p_1^1),
    \qquad
    s_4^1\sim \mathrm{Bernoulli}(p_4^1).
\end{equation}
These two effective byproduct bits are sampled independently. The
corresponding branch circuit is
\begin{align}
    U^{(\bm s)}(\bm\theta)
    &=
    \left(
        T_c
        \prod_{j=1}^{4}
        \exp\!\left(-i\theta_j^2 Z_j/2\right)
    \right).\\
    &
    \left(
        T_c
        \left(Z_{1}^{s_1^1}\otimes Z_{4}^{s_4^1}\right)
        \prod_{j=1}^{4}
        \exp\!\left(-i\theta_j^1 Z_j/2\right)
    \right),
\end{align}
where $\bm s=(s_1^1,s_4^1)$. Making the probabilities $p_1^1$ and $p_4^1$
trainable gives the channel family
\begin{align}
\label{eq:independent_mbqc_channel_example}
    \mathcal{E}_{(\bm{\theta}, \bm{p})}[\rho]
    =
    \sum_{s_1^1,s_4^1\in\{0,1\}}
    \Pr(s_1^1)\Pr(s_4^1)
    ~U^{(\bm{s})}(\bm\theta)
    ~\rho~
    U^{(\bm{s})}(\bm\theta)^{\dagger}.
\end{align}
where $\Pr(s^l_j)=(p^l_j)^{s^l_j}(1-p^l_j)^{1-s^l_j}$ with $s^l_j\in\{0,1\}$ as shown in Eq.~\eqref{eq:fig_3_a}. This effective classical control can be incorporated directly into the standard MBQC feedforward procedure. For the unbiased measurement outcomes considered here, a measurement-induced byproduct occurs with probability $1/2$, and deterministic MBQC corrects it through feedforward. The anti-correction scheme instead modifies the same classical control rule to retain (when $s^l_j=1$) or deliberately introduce (when $s^l_j=0$) the corresponding byproduct, allowing the effective byproduct probability to span the full interval $[0,1]$. Since this uses the existing measurement record and feedforward mechanism, it does not increase the average classical-processing overhead relative to deterministic MBQC.

\subsubsection{Shared randomness}
\label{sec:mbqc_realization_of_co_ran}

The previous section described how independent randomness can be generated in MBQC through independently controlled byproducts. We now show how the same feedforward mechanism can realize shared randomness across spatially separated qubits in the cluster-state computation and, equivalently, in the induced circuit model.

Consider the cluster-state computation with $D$ computational layers, as described in Eq.~\eqref{eq:pure_unit}. After the qubits in layer $l$ are measured, their outcomes are recorded, and these raw measurement outcomes are independent. Classical feedforward can process them so that several qubits share the same effective byproduct bit $s^l\in\{0,1\}$. This shared bit controls the collective Pauli byproduct $
Z_{\mathcal{M}}^{s^l}:= \bigotimes_{j\in \mathcal{M}}Z^{s^l}_j$ which acts simultaneously on a subset $\mathcal{M}\subseteq[1,N]$ of qubits. Thus, when $s^l=1$, a $Z$-byproduct is retained on every qubit in $\mathcal M$, whereas all other measurement-induced byproducts are corrected.
As in Sec.~\ref{sec:ind_ran}, we define the effective control variable $c^l$ with 
\begin{equation}
\Pr(c^l=-1)=p^l,
\qquad
\Pr(c^l=+1)=1-p^l,
\end{equation}
such that the shared bit $s^l$ is sampled according to
\begin{equation}\label{eq:fig_3_b}
s^l\sim\mathrm{Bernoulli}(p^l),
~~
\Pr(s^l=1)=p^l,
~
\Pr(s^l=0)=1-p^l.
\end{equation}

As in Sec.~\ref{sec:ind_ran}, the probability $p^l$ can be treated as a tunable parameter in the interval $[0,1]$.

As an example, consider the cluster-state construction shown in the left panel of Fig.~\ref{fig:mbqc_explain}~(b). We choose $\mathcal{M}_l=\{1,4\}$ and use the same control variable $c^1$ for the boundary qubits $(1,1)$ and $(1,4)$. The resulting effective byproducts are therefore $Z_1^{s^1}$ and $Z_4^{s^1}$, which always appear or disappear together depending on $s^1\in\{0,1\}$.

For the corresponding induced circuit shown in the right panel of Fig.~\ref{fig:mbqc_explain}~(b), the branch unitary is
\begin{align}
    U^{(s^1)}(\bm\theta)
    &=
    \left(
        T_c
        \prod_{j=1}^{4}
        \exp\!\left(-i\theta_j^2 Z_j/2\right)
    \right).\\
    &\left(
        T_c
        \left(Z_{1}^{s^1}\otimes Z_{4}^{s^1}\right)
        \prod_{j=1}^{4}
        \exp\!\left(-i\theta_j^1 Z_j/2\right)
    \right).
\end{align}
and the associated correlated channel is
\begin{align}
\label{eq:correlated_mbqc_channel_example}
    \mathcal{E}_{(\bm{\theta}, p^1)}[\rho]
    =
    \sum_{s^1\in\{0,1\}}
    \Pr(s^1)
    ~U^{(s^1)}(\bm\theta)
    ~\rho~
    U^{(s^1)}(\bm\theta)^{\dagger},
\end{align}
where
\begin{equation}
    \Pr(s^1)
    =
    (p^1)^{s^1}(1-p^1)^{1-s^1}.
\end{equation}
The key difference here, from Eq.~\eqref{eq:independent_mbqc_channel_example}, is
that the two endpoint byproducts are not controlled by independently sampled bits. Instead, the same bit $s^1$ controls both byproducts, thereby producing shared randomness in the form of classically correlated byproducts across the two distant qubits. Thus, the example in Fig.~\ref{fig:mbqc_explain}~(b) realizes the model in Def.~\ref{def:min_channel}.

Moreover, this MBQC construction gives a concrete realization of the general stochastic correlated model in Eq.~\eqref{eq:channel_model_set}. More generally, for $L$ correlated byproduct strings $\{P_{{\mathcal{M}}_l}\}_{l=1}^L$, the shared binary variable $s^l$ controls the insertion of the string
\begin{equation}
    P_{\mathcal{M}_l}^{s^l}=\prod_{j\in \mathcal{M}_l} Z_j^{s^l},
    \qquad
    \mathcal{M}_l\subseteq[1,N],
\end{equation}
at a specified layer $l$ in the induced circuit. Assigning each $P_{\mathcal{M}_l}$ to an intermediate insertion slot $l\in\{1,\cdots,D-1\}$ gives a
generic branch unitary as in Eq.~\eqref{eq:unit_with_byp}. If different
shared binary variables are sampled independently, with
$\bm s=(s^1,\ldots,s^L)\in\{0,1\}^L$, then the branch probability is
\begin{equation}
    \pi_{\bm p}(\bm s)
    =
    \prod_{l=1}^{L}
    (p^l)^{s^l}(1-p^l)^{1-s^l}.
\end{equation}
Substituting these branch unitaries from MBQC along with the branch probabilities into
Eq.~\eqref{eq:channel_model_set} gives the correlated quantum channel, and
Eq.~\eqref{eq:channel_dist_all} gives its induced Born distribution.
Although the variables $\{s^1,\ldots,s^L\}$ controlling different Pauli strings are independent, the Pauli operators within each string are correlated because they are controlled by the same bit $s^l$. 

Note that the stochastic Pauli string is not an additional coherent quantum gate. In MBQC, it arises natively from classical processing of random measurement outcomes. This channel character of the model is therefore generated entirely classically.

\subsubsection{End point Pauli-correction}\label{sec:end_pauli_corr}
Here, we first recall the final Pauli-correction step used in the variational measurement-based quantum computation (VMBQC) channel model $\tilde{\mathcal E}_c(\bm \theta,\bm p)$ introduced in Ref.~\cite{majumder2024variational}.

In MBQC, measurement-induced Pauli byproducts, such as $Z^{s^1_1}_1$ and $Z^{s^1_4}_4$ in Fig.~\ref{fig:mbqc_explain}, can be propagated through the remaining cluster state using stabilizer relations. In the equivalent circuit representation, conjugation by Clifford layers maps Pauli operators to Pauli operators. For example, under open boundary conditions, $ T_cX_iT^\dagger_c = X_{i-1} Z_i X_{i+1},~\mathrm{if}~i\in[2,N-1]$. By contrast, propagating a Pauli operator through a non-Clifford Pauli rotation can change the sign of the rotation angle, for instance, $X^s R_z(\theta)=R_z((-1)^s\theta)X^s$. We refer the readers to Ref.~\cite{poulsen2024measurement} for the detailed propagation rules. After propagation through all subsequent layers, the byproducts accumulate into a final Pauli string. For example, for the independent-byproduct construction in Fig.~\ref{fig:mbqc_explain}~(a), the final accumulated byproduct is $\hat P(\bm s)=Z^{s^1_1}_1\otimes X^{s^1_1}_2\otimes X^{s^1_4}_3\otimes Z^{s^1_4}_4$. This final Pauli string can be removed by appending an outcome-dependent correction layer $\hat{P}_{D+1}(\bm{\tilde s})=\hat P(\bm s)^\dagger$ that cancels $\hat P(\bm s)$ before the final measurement. For computational-basis measurements, this is equivalent to classical post-processing of the output bit string. This removes the propagated Pauli byproduct while retaining the branch-dependent intermediate angle flips generated during its propagation. 

A similar endpoint-correction procedure applies to the parametrized circuits presented in Eq.~\eqref{eq:num_channel_main} and shown in Fig.~\ref{fig:learning_model_main}, which consists of parametrized single- and two-qubit Pauli rotations, and more generally to circuits containing parametrized multi-qubit Pauli rotations, as detailed in Sec.~\ref{def:input_byp_model} (see Fig.~\ref{fig:two_learning_models}) of the Appendix.

This distinction is essential for the learning models below. In computational-basis measurements, the effective Pauli operator at the end of the circuit transforms the circuit output trivially: the final $X$ and $Y$ components of the Pauli string merely relabel the output bit string, while the final $Z$ components leave the output probabilities unchanged. By contrast, the intermediate angle flips alter the non-Clifford rotations within the circuit and can therefore modify the output distribution nontrivially. Following Refs.~\cite{majumder2024variational,majumder2026minimizing}, we remove the final propagated Pauli byproduct and retain only the stochastic angle-flip effect in the learning model.

The quantum circuit shown in Eq.~\eqref{eq:model_num_main} along with the channel model with stochastic Pauli string in Eq.~\eqref{eq:num_channel_main} can be implemented using MBQC, which is illustrated in Appendix~\ref{def:input_byp_model}. This model is also used for the numerical experiments in Sec.~\ref{sec:num_results}. The analytical results, however, do not depend on this specific architecture and extend to other finite-range local circuit architectures.

\subsection{Loss and gradients}

For training, we employ the squared maximum mean discrepancy (MMD) \cite{gretton2012kernel} as an implicit loss function, defined below:

\begin{equation}\label{eq:mmd loss}
\begin{aligned}
    \mathcal{L}(\bm{\theta}, \bm{p})= {} & \mathop{\mathbb{E}}_{\substack{x \sim P_{\mathcal{E}_{(\bm{\theta}, \bm{p})}}\\ y \sim P_{\mathcal{E}_{(\bm{\theta}, \bm{p})}}}} [K(x,y)]-2\mathop{\mathbb{E}}_{\substack{x \sim P_{\mathcal{E}_{(\bm{\theta}, \bm{p})}}\\ y \sim Y}} [K(x,y)] \\
    & +\mathop{\mathbb{E}}_{\substack{x \sim Y\\ y \sim Y}} [K(x,y)]
\end{aligned}
\end{equation}
where all samples appearing in the same expectation are drawn independently. Here, $P_{\mathcal{E}_{(\bm{\theta},\bm{p})}}$ is the output distribution of the channel model defined in Eq.~\eqref{eq:channel_dist_all}, and $Y(x)$ is the target distribution generated from the family in Def.~\ref{def:branch_peaked_mixture}. The $K(x,y)$ in Eq.~\eqref{eq:mmd loss} is referred to as the kernel function that quantifies the similarity between samples $x$ and $y$. In practice, the expectations in Eq.~\eqref{eq:mmd loss} are estimated using finite samples from the model and target distributions. When $\bm p=\bm 0$, no stochastic Pauli string is applied and the channel model reduces to the corresponding unitary model, as stated in Observation~\ref{obs:channel_extension}.

During training, the circuit parameters $\bm\theta$ and the shared application probabilities $\bm p$ are updated iteratively using the gradients
\begin{equation}
\frac{\partial\mathcal L}{\partial\theta},
\qquad \rm and \qquad
\frac{\partial\mathcal L}{\partial p},
\end{equation}
for each $\theta\in\bm\theta$ and $p\in\bm p$. In the single Pauli-string model, this includes the derivatives with respect to the circuit parameters $\theta_i^{k,x}$, $\theta_i^{k,z}$, and $\theta_{i,i+1}^{k,zz}$, together with the shared probability $p^l$. These gradients can be evaluated analytically using parameter-shift and stochastic-channel differentiation rules~\cite{liu2018differentiable,majumder2024variational}. Further details are provided in Appendix~\ref{app:grad_details}.

\section{Results}

\subsection{Analytical results}\label{sec:analytical_res}

Shallow-depth local unitary circuits have limited ability to correlate distant
outputs because separated regions depend only on disjoint past light cones.
By contrast, the channel model with shared randomness can use a single classical random variable to apply a spatially extended Pauli string that can generate long-range output correlations without increasing the circuit depth or introducing long-range quantum gates.

\begin{observation}[weak inclusion of model families]
\label{obs:channel_extension}
Consider a fixed depth-$D$ parameterized circuit $U(\bm \theta)$ as in Eq.~\eqref{eq:bw_cir}, with an input state $\rho_{\rm in}$ and computational-basis measurements. Let
\begin{equation}\label{eq:main_unit_dist_fam}
    \mathcal P_{\rm unit}(D)
    :=
    \{P_{U(\bm\theta)}\}_{\bm\theta}
\end{equation}
denote the family of output distributions generated by the unitary model, and let
\begin{equation}\label{eq:main_ch_dist_fam}
    \mathcal P_{\rm ch}(D)
    :=
    \{P_{\mathcal E_{\bm\theta,\bm p}}\}_{\bm\theta,\bm p}
\end{equation}
denote the family of distributions generated by the corresponding correlated
channel model $\mathcal E_{\bm\theta,\bm p}(\rho_{\rm in})$ acting on input state $\rho_{\rm in}$ as defined in Eq.~\eqref{eq:channel_model_set}.

If the parameter domain includes $\bm p=\bm 0$, then
\begin{equation}
    \mathcal P_{\rm unit}(D)
    \subseteq
    \mathcal P_{\rm ch}(D).
\end{equation}
\end{observation}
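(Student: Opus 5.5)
The plan is to show directly that each unitary-model distribution is itself a channel-model distribution, obtained by setting $\bm p=\bm 0$ while keeping the same circuit parameters. Fix an arbitrary $P_{U(\bm\theta)}\in\mathcal P_{\rm unit}(D)$ and use the same $\bm\theta$ in the channel model of Eq.~\eqref{eq:channel_model_set}. Since $\bm p=\bm 0$ is in the parameter domain by hypothesis, we may choose $\bm p=\bm 0$ there.

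Next we evaluate the branch probabilities from Eq.~\eqref{eq:branch_prob_decomp} at $\bm p=\bm 0$. Each factor satisfies $\pi_{0}(s^i)=0^{s^i}1^{1-s^i}$, using the convention $0^0=1$. This factor equals $1$ if $s^i=0$ and $0$ otherwise, so $\pi_{\bm 0}(\bm s)=\delta_{\bm s,\bm 0}$. The sum in Eq.~\eqref{eq:channel_model_set} therefore collapses to the single branch $\bm s=\bm 0$.

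For that branch, every factor $P^{0}_{\mathcal M_i}=\id$, so each slot operator in Eq.~\eqref{eq:multi_pauli_corr} is $B_l(\bm 0)=\id$. Equation~\eqref{eq:unit_with_byp} then reduces to $U^{(\bm 0)}(\bm\theta)=U_D(\bm\theta_D)\cdots U_1(\bm\theta_1)=U(\bm\theta)$ by Eq.~\eqref{eq:bw_cir}. Hence $\mathcal E_{\bm\theta,\bm 0}(\rho_{\rm in})=U(\bm\theta)\rho_{\rm in}U(\bm\theta)^\dagger$. Substituting into Eq.~\eqref{eq:channel_dist_all} and comparing with Eq.~\eqref{eq:unit_dist} gives $P_{\mathcal E_{\bm\theta,\bm 0}}=P_{U(\bm\theta)}$.

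Since $\bm\theta$ was arbitrary, this establishes the inclusion $\mathcal P_{\rm unit}(D)\subseteq\mathcal P_{\rm ch}(D)$. There is no real obstacle. The only points needing care are the $0^0=1$ convention in the Bernoulli weights, and the fact that both families share the same circuit architecture, parameter set $\bm\theta$, and input state $\rho_{\rm in}$.
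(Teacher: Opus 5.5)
Your proof is correct and follows the same argument the paper relies on. The paper only asserts that at $\bm p=\bm 0$ no stochastic Pauli string is applied and the channel reduces to the unitary model; you make this explicit by checking that $\pi_{\bm 0}(\bm s)=\delta_{\bm s,\bm 0}$ and $B_l(\bm 0)=\id$, so that $U^{(\bm 0)}(\bm\theta)=U(\bm\theta)$ and $P_{\mathcal E_{\bm\theta,\bm 0}}=P_{U(\bm\theta)}$.
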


The previous observation establishes only a weak expressivity statement, i.e., adding the stochastic byproduct cannot reduce the reachable family of distributions. 
The central question is whether the inclusion is strict at a fixed shallow depth $D$, namely whether
\begin{equation}
    \mathcal P_{\rm unit}(D)
    \subsetneq
    \mathcal P_{\rm ch}(D),
\end{equation}
 Theorem~\ref{thm:informal_constant_depth_separation} answers this affirmatively for one-dimensional nearest-neighbor brickwall architectures.

\begin{definition}[Interaction-hypergraph distance]\label{def:graph_dist}
Let $G=(V,E)$ denote the interaction hypergraph of the circuit architecture under consideration (e.g., the local circuit in Eq.~\eqref{eq:bw_cir}), where each vertex in $V$ represents a qubit and each hyperedge $e\in E$ is the support of an allowed local gate. Two qubits are adjacent if they belong to a common hyperedge. For $a,b\in V$, the distance $\operatorname{dist}_{G}(a,b)$ is the minimum number of hyperedges in a sequence connecting $a$ to $b$.
\end{definition}

For a one-dimensional nearest-neighbor architecture with open boundary conditions, the interaction graph is the path graph on $N$ qubits, with edges connecting qubits ($j$) and ($j+1$). The interaction-graph distance is therefore $\operatorname{dist}_G(a,b)=|a-b|.$ In particular, the two boundary qubits $1$ and $N$ are separated by
$\operatorname{dist}_G(1,N)=N-1$. For brevity we write $\operatorname{dist}(a,b):=\operatorname{dist}_G(a,b)$.

\begin{theorem}[\textit{Informal}: shallow-depth channel--unitary separation]
\label{thm:informal_constant_depth_separation}

Consider a one-dimensional nearest-neighbor brickwall circuit architecture on
$N\geq 6$ qubits, together with the minimal correlated channel model of
Def.~\ref{def:min_channel}. In this model a single stochastic Pauli string
$P_{\mathcal{M}}^s$, $s\in\{0,1\}$, acts on two distant qubits $a,b\in M\subseteq [1,N]$ that are separated
by distance $\dist(a,b)$ (see Def.~\ref{def:graph_dist}), at an intermediate layer in the circuit.

Also consider $\mathcal{P}_{unit}(D)$ and $\mathcal P_{\rm ch}(D)$ defined in Eq.~\eqref{eq:main_unit_dist_fam}, and~\eqref{eq:main_ch_dist_fam}, respectively, as the family of distributions corresponding to the unitary and channel model.

Then, there exist a shallow channel model $\mathcal E^* \equiv \mathcal E_{(\bm \theta^*, p^*)}$ with output distribution $Q^\star:= P_{\mathcal E(\bm \theta^\star, p^\star)}$ in the computational basis such that for all depths $1 < D_0 < \frac{\dist(a,b)}{4}$,
%For any fixed depth $1<D_0<\frac{\operatorname{dist}(a,b)}{4}$, there exists channel parameters $(\bm \theta^\star, p^\star)$ such that
\begin{equation}
    % Q^\star
    % :=
    % P_{\mathcal E(\bm \theta^\star, p^\star)}
    % \in
    % \mathcal P_{\rm ch}(D_0), ~~\rm but ~ 
    Q^*\notin \mathcal{P}_{unit}(D_0)
\end{equation}
Moreover, for a constant $\delta^\star>0$ in the number of qubits $N$, and all such depths $D_0$, one can show
\begin{equation}
    \inf_{\theta}
    \TV\!\left(
        Q^\star,
        P_{U(\theta)}
    \right)
    \geq
    \delta^\star
\end{equation} 
where $\TV(\cdot)$ is the total variation distance. In particular, for $\operatorname{dist}(a,b)=\Theta(N)$, the corresponding unitary model requires depth $D_0=\Omega(N)$ to represent $Q^*$.
\end{theorem}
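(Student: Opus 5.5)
The argument has two parts. First, the unitary model at depth $D_0<\dist(a,b)/4$ cannot correlate the outputs at $a$ and $b$. Second, a shallow channel model produces a distribution $Q^\star$ in which these outputs are correlated by a constant amount, independent of $N$.

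\emph{Light-cone factorization.} For a brickwall circuit of depth $D_0$, each layer consists of an even and an odd sub-layer. Hence the backward light cone of qubit $a$ under $U(\bm\theta)$ is contained in the interval $[a-2D_0,a+2D_0]$, and likewise for $b$. If $4D_0<\dist(a,b)$, these two intervals are disjoint.

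Heisenberg-evolve $Z_a$ and $Z_b$ through $U(\bm\theta)$. The resulting operators are supported on disjoint sets $A$ and $B$. Because the input is a product state, Eq.~\eqref{eq:input_st}, we get
\begin{equation*}
\langle Z_aZ_b\rangle=\langle U^\dagger Z_aU\,U^\dagger Z_bU\rangle=\langle U^\dagger Z_a U\rangle\langle U^\dagger Z_b U\rangle,
\end{equation*}
so $\Cov(Z_a,Z_b)=0$. More strongly, the marginal of every $P\in\mathcal P_{\rm unit}(D_0)$ on $(x_a,x_b)$ is a product distribution.

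\emph{Construction of $Q^\star$.} Fix a small constant depth $D^\star\geq 2$ with an insertion slot at $l=1$, and take $P_{\mathcal M}=P_a\otimes P_b$. Choose local gates so that the branches act as follows:
\begin{itemize}
\item In branch $s=0$, qubits $a$ and $b$ are driven deterministically to $\ket{0}$.
\item In branch $s=1$, the inserted Pauli, propagated through the remaining local gates, flips both qubits, so they end in $\ket{1}$.
\end{itemize}
The simplest choice is $P_q=X$ with identity-like subsequent gates on $a$ and $b$, plus a first layer that rotates $\rho_a,\rho_b$ to $\ket{0}$. If the input qubits are mixed, use a neighbouring ancilla via a SWAP or CNOT, or use the $Z$-string mechanism with rotations chosen so that the two branches yield distinct pure computational states.

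With $p^\star=1/2$, the $(a,b)$ marginal of $Q^\star$ is $\tfrac12(\delta_{00}+\delta_{11})$, so $\Cov(Z_a,Z_b)=1$. The depth $D^\star$ must be at most every admissible $D_0$. Equivalently, we use the fact that the channel family at depth $D_0\geq D^\star$ contains $Q^\star$ by padding with identity layers. This is where the condition $D_0>1$ enters: $D^\star=2$ suffices.

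\emph{TV bound.} Total variation distance does not increase under marginalization. It therefore suffices to lower-bound
\begin{equation*}
\inf_{q_a,q_b}\TV\bigl(\tfrac12(\delta_{00}+\delta_{11}),\,q_a\otimes q_b\bigr).
\end{equation*}
This infimum is over a compact set and is attained. It is strictly positive, because the correlated distribution is not a product distribution. It is an explicit constant independent of $N$.

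For a direct estimate, note that any product $q_a\otimes q_b$ satisfies
\begin{equation*}
q(00)\,q(11)=q(01)\,q(10).
\end{equation*}
A short optimization then yields $\delta^\star$ of about $1/4$; it is at least $1/4$ via the covariance bound $|\Cov|\leq 4\,\TV$. Concretely, covariance differs by at most $4\,\TV$ between two distributions on $\{\pm1\}^2$, and the target has $\Cov=1$ while every product has $\Cov=0$. Hence $\delta^\star\geq 1/4$.

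It follows that $Q^\star\notin\mathcal P_{\rm unit}(D_0)$ for every $D_0<\dist(a,b)/4$. Taking $\dist(a,b)=N-1$ gives the $\Omega(N)$ depth requirement.

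\emph{Main obstacle.} The hardest step is making the branch construction robust for an arbitrary fixed product input $\rho_{\rm in}$, including mixed inputs or inputs aligned with the Pauli. The Pauli must not be absorbed into the input, as the paper notes with $P_{\mathcal M}\ket{\psi}\not\propto\ket{\psi}$. If perfect correlation is not achievable, I would settle for constant covariance $c>0$, using purity-dependent rotations, and obtain $\delta^\star=c/4$.
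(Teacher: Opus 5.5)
Your proof is correct in substance but takes a different route from the paper. The paper proves a general identity: if each branch has a factorized $(a,b)$ marginal, then $\Cov_{\mathcal E^\star}(\hat Z_a,\hat Z_b)=p^\star(1-p^\star)\Delta_a\Delta_b$. A triangle-inequality lemma with constant $6$ then gives $\delta^\star=\frac16 p^\star(1-p^\star)|\Delta_a\Delta_b|$. The nondegeneracy $\Delta_a\Delta_b\neq0$ is a \emph{hypothesis} of the formal theorem. It is realized in the cluster-state corollary with input $R_x(\alpha)\ket{0}$: a $Z$ byproduct only flips the intermediate angles $\theta^2_1,\theta^2_N$, the final byproduct is corrected, and $|\beta|\le 4/27$. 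You instead write down a maximally correlated instance, $\tfrac12(\delta_{00}+\delta_{11})$. This needs no branch-wise light-cone factorization on the channel side and gives a far larger gap. The price is that your $X$ string reaches the readout untouched, so the correlation is just one shared bit XORed onto $x_a$ and $x_b$. That is exactly the ``classical post-processing'' mechanism the paper says its separation does not rely on. Your construction proves the statement as written, but not the paper's stronger point that the separation survives endpoint correction.

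Two things need repair. First, the obstacle you flag cannot be removed. For $\rho_{\rm in}=(\id/2)^{\otimes N}$, every branch of every model outputs $\id/2^N$. Both families then contain only the uniform distribution, so the statement is false there, and no routing through neighbours helps. You must assume nondegeneracy, as the paper does. If $\rho_a,\rho_b$ have Bloch lengths $r_a,r_b>0$, rotate them to $+z$ and insert $X_aX_b$. This gives $\Delta_a=-2r_a$, $\Delta_b=-2r_b$ and $\Cov=4p(1-p)r_ar_b$, which is the correct form of your ``purity-dependent'' fallback.

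Second, the constants. The exact infimum for your target is $\sqrt2-1$, attained at $q_a(0)=q_b(0)=1/\sqrt2$, not ``about $1/4$''. Your bound $|\Cov_P-\Cov_Q|\le 4\TV(P,Q)$ is true and sharp, but you assert it without proof, and the paper's lemma only gives $6$. To prove it, set $P_t=(1-t)Q+tP$. Then $\frac{d}{dt}\Cov(P_t)=\sum_x h_t(x)\,(P-Q)(x)$, where $h_t(x)=(-1)^{x_a+x_b}-\mu_b(t)(-1)^{x_a}-\mu_a(t)(-1)^{x_b}$ and $\mu_a(t),\mu_b(t)$ are the single-site means of $P_t$. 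Since $P-Q$ sums to zero, this derivative is bounded by $(\max h_t-\min h_t)\TV(P,Q)$. A direct check shows the range of $h_t$ is $2+2\max(|\mu_a(t)|,|\mu_b(t)|)\le 4$. Integrating over $t$ gives the claim. Otherwise, use the constant $6$ or your compactness argument.
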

Therefore, the total variation distance between $Q^\star$ and any distribution $P_{U(\bm\theta)}$ generated by the nearest-neighbor unitary circuit is lower bounded by some finite non-zero number $\delta^\star$ for every unitary circuit depth $D_0 < \frac{\operatorname{dist}(a,b)}{4}$. In particular, with $a=1$ and $b=N$, this gives a linear-depth requirement $D=\frac{N-1}{4}$ for nearest-neighbor circuit. The formal statement and proof are given in Appendix~\ref{app:proof_thm_constant_depth_separation} along with the specific expression for $\delta^*$. 

This shallow-depth separation does not rely on Pauli operators acting directly on the final measurement outcomes which can be viewed as trivial classical post-processing. Instead, in the learning models considered here, the final byproducts are corrected or absorbed into the readout (see Sec.~\ref{sec:end_pauli_corr}), while the shared randomness remains encoded in correlated intermediate non-Clifford angle-flips that alter the output distribution non-trivially.

The separation in Theorem~\ref {thm:informal_constant_depth_separation} is not specific to the one-dimensional nearest-neighbor brickwall ansatz. 
\begin{corollary}[Separation for finite-range circuits]
\label{cor:finite_range_architecture_separation}
For any finite-range local circuit architecture, correlated stochastic Pauli operations can generate shallow-depth distributions that are not accessible to the corresponding purely unitary setup at shallow depth.
\end{corollary}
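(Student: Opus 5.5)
We prove Corollary~\ref{cor:finite_range_architecture_separation} by reducing it to the light-cone argument behind Theorem~\ref{thm:informal_constant_depth_separation}, replacing the path-graph distance by the interaction-hypergraph distance of Def.~\ref{def:graph_dist}.

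\textbf{Step 1: locality of unitary outputs.} Fix an architecture whose gates act on hyperedges of $G$, and let $r$ be the maximal hyperedge size, so the architecture has finite range. For a depth-$D_0$ unitary circuit, the Heisenberg-evolved observable $U^\dagger Z_a U$ is supported in the backward light cone of $a$. This light cone is contained in the ball $B(a,cD_0)$ of graph radius $cD_0$, where $c$ is a constant fixed by the layer structure; for the brickwall case $c$ is the factor giving the $\dist(a,b)/4$ threshold.

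If $2cD_0<\dist_G(a,b)$, the light cones of $a$ and $b$ are disjoint. With the product input $\rho_{\rm in}$, the reduced output state on $\{a,b\}$ is then a product, so $\Cov(\hat Z_a,\hat Z_b)=0$ for every $P\in\mathcal P_{\rm unit}(D_0)$.

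\textbf{Step 2: channel construction.} Pick two sites $a,b$ with $\dist_G(a,b)$ large. Place the stochastic string $P_a^s\otimes P_b^s$ at slot $l$ of a shallow channel circuit of depth $D$, with $D$ a small constant.

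Choose local gates in the neighbourhoods of $a$ and $b$, for example $R_x$ rotations on $a$ and $b$ before and after the slot. Choose them so that the $Z$ string flips the sign of a non-Clifford angle on each side. Then the one-site marginal at $a$ depends on $s$: in branch $s$ it gives $\langle Z_a\rangle_s$, and by construction $\langle Z_a\rangle_0\neq\langle Z_a\rangle_1$. The same holds at $b$. The input angle $\alpha\notin\frac{\pi}{2}\mathbb Z$ ensures the string is not absorbed into $\rho_{\rm in}$.

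Conditioned on $s$, the two sites are independent by the same light-cone argument. Hence
\[
\Cov_{Q^\star}(\hat Z_a,\hat Z_b)=p(1-p)\bigl(\langle Z_a\rangle_1-\langle Z_a\rangle_0\bigr)\bigl(\langle Z_b\rangle_1-\langle Z_b\rangle_0\bigr).
\]
This is nonzero for $p\in(0,1)$. It depends only on a few local gates, so it is a constant $\kappa$ independent of $N$ and of the rest of $G$.

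\textbf{Step 3: robust separation.} The covariance is a function of the two-site marginal and is Lipschitz in total variation, since $|\Cov_P-\Cov_Q|\le C\,\TV(P,Q)$ for a small constant $C$ such as $8$. Step 1 gives $\Cov_P=0$ for every $P\in\mathcal P_{\rm unit}(D_0)$. Therefore
\[
\inf_\theta \TV(Q^\star,P_{U(\theta)})\ge \kappa/C>0
\]
whenever $D_0<\dist_G(a,b)/(2c)$.

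\textbf{Main obstacle.} Step 2 is the delicate part: exhibiting explicit local gates in an arbitrary finite-range architecture for which the branch-dependent angle flips genuinely change $\langle Z_a\rangle$, rather than being removable by endpoint Pauli relabelling. I would handle this by using only gates supported on single hyperedges containing $a$ (respectively $b$) and setting all other gates to the identity. This reduces the computation to the one already done in the brickwall case, which the architecture can embed locally.

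Bounded degree is also needed for the light-cone ball to make sense. If degree is unbounded, a single layer can still only spread support across hyperedges, and that is exactly what $\dist_G$ counts, so the argument still holds.
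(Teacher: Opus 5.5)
Your proposal is correct and follows essentially the same route as the paper. The paper obtains this corollary by rerunning the proof of Theorem~\ref{thm:constant_depth_channel_separation} with a general light-cone velocity $v$ (your $c$): branch-wise factorization gives $\Cov=p(1-p)\Delta_a\Delta_b$, shallow unitaries with disjoint light cones give zero covariance, and Lemma~\ref{lem:covariance_tv_gap} (constant $6$ rather than your $8$) plus contractivity gives the TV gap for all depths below $\dist(a,b)/(2v)$. Your explicit construction adds something the paper only assumes, namely $\Delta_a\Delta_b\neq0$; single-qubit $R_x$ rotations on either side of the slot already achieve this, so your ``main obstacle'' is easy. Also, bounded degree plays no role, only the bounded per-layer growth $c$ does, so the first sentence of your last paragraph should be dropped.
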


Thus the advantage persists in architectures where long-range entangling gates are not directly available which is formally stated in Corollary~\ref {cor:finite_range_architecture_separation}.

Furthermore, the channel model with shared-randomness used in Theorem~\ref{thm:informal_constant_depth_separation} can be realized natively in MBQC as shown in Sec.~\ref{sec:mbqc_realization_of_co_ran}.
Hence, similar separation holds for the associated MBQC-based generative model.

\begin{remark}[MBQC realization of the separation] \label{cor:mbqc_realization}  As a concrete instantiation, Appendix~\ref{app:mbqc_cluster_proof} gives an explicit MBQC on a cluster-state construction, including the implemented gates, the byproduct propagation, and the verification of the shared randomness conditions used in Theorem~\ref{thm:informal_constant_depth_separation}. \end{remark}

\subsection{Class of distributions}

The separation theorem in the previous section shows that shared randomness can generate branch-dependent output structure in the channel model that is inaccessible to the corresponding shallow unitary model. In this section, we define and analyze a family of distributions that is particularly suitable for these learning models. We therefore consider finite mixtures of peaked distributions, where each component in the mixture is concentrated around a distinct dominant bit string. This structure is naturally matched to the channel model, since different stochastic branches in the channel model can concentrate output probabilities around different dominant bit strings. When the dominant strings differ coherently across distant qubits, the corresponding distribution may contain long-range correlations that the corresponding shallow local unitary model cannot reproduce.

\begin{definition}[Branch-peaked mixture distributions]
\label{def:branch_peaked_mixture}
A distribution $P$ on $\{0,1\}^N$ is called a branch-peaked mixture
distribution if
\begin{equation}\label{eq:branch_peak_class}
    P(x)
    =
    \sum_{\bm s\in\mathcal S}
    \pi(\bm s)
    \left[
        \delta_{\bm s}\mathbf 1[x=x_{\bm s}^\star]
        +(1-\delta_{\bm s})R_{\bm s}(x)
    \right],
\end{equation}
where $\mathcal S\subseteq\{0,1\}^L$ is a finite set satisfying
$|\mathcal S|\leq S_{\max}$, where $S_{\max}=O(1)$ is independent of $N$,
$\pi(\bm s)\geq 0$ and
$\sum_{\bm s\in\mathcal S}\pi(\bm s)=1$.
For each branch $\bm s$, the point
$x_{\bm s}^\star\in\{0,1\}^N$ refers to the peak or dominant bit string of the corresponding branch
distribution, $\delta_{\bm s}\in(0,1]$, and $R_{\bm s}$ is a normalized
residual distribution satisfying $R_{\bm s}(x_{\bm s}^\star)=0$ and
\begin{equation}
    \delta_{\bm s}
    \geq
    (1-\delta_{\bm s})
    \max_{x\neq x_{\bm s}^\star}R_{\bm s}(x).
\end{equation}
Here, $\mathbf 1[\cdot]$ denotes the indicator function.
\end{definition}

Branch-peaked mixtures allow different branches to have branch-dependent peaks and branch-dependent residual distributions. Further details are provided in Appendix~\ref{app:dist_classes}. In Sec.~\ref{sec:num_results}, we further show numerically that suitably designed correlated channel models learn representative distributions from this class more accurately than their unitary competitor.

\subsection{Numerical results}\label{sec:num_results}

The learning models considered in this work include a purely unitary model (Eq.~\eqref{eq:bw_cir}) and a correlated channel model defined in Eq.~\eqref{eq:channel_model_set}. We numerically compare the learning performances of these two models keeping the same width $N=6$ and depth $D=2$.

For the numerical experiment, we consider $N=6$ and construct a
two-branch target distribution,
\begin{equation}\label{eq:num_example_explicit}
    Y(x)
    =
    0.4\,Y_0(x)+0.6\,Y_1(x),
\end{equation}
where
\begin{equation}
    Y_0(x)
    =
    (1-\epsilon_0)^{6-|x|}\epsilon_0^{|x|},
    \qquad
    Y_1(x)
    =
    \epsilon_1^{6-|x|}(1-\epsilon_1)^{|x|},
\end{equation}
with $\epsilon_0=0.15$ and $\epsilon_1=0.25$. Here, $|x|$ denotes the
Hamming weight of $x$. The first branch is peaked at
$x_0^\star=000000$, while the second is peaked at
$x_1^\star=111111$. In the notation of
Definition~\ref{def:branch_peaked_mixture}, the corresponding peak
weights are
$\delta_0=(1-\epsilon_0)^6$ and
$\delta_1=(1-\epsilon_1)^6$, and
\begin{equation}
    R_s(x)
    =
    \begin{cases}
        \dfrac{Y_s(x)}{1-\delta_s},
        & x\neq x_s^\star,\\[6pt]
        0,
        & x=x_s^\star.
    \end{cases}
\end{equation}
Thus,
$Y_s(x)=\delta_s\mathbf{1}[x=x_s^\star]
+(1-\delta_s)R_s(x)$. So the target distribution $Y(x)$ belongs exactly to
the branch-peaked mixture class in Eq.~\eqref{eq:branch_peak_class} with $L=1$ and $\mathcal{S}=\{0,1\}$. Since $\epsilon_0,\epsilon_1<1/2$, the first branch has a unique peak at
$x_0^\star=000000$, while the second has a unique peak at
$x_1^\star=111111$.

We first train the unitary model $U(\bm\theta)$ defined in Eqs.~\eqref{eq:bw_cir} and~\eqref{eq:model_num_main}. Figure~\ref{fig:results_1}~(a) shows the distribution of the minimum MMD loss attained over $300$ training epochs across $20$ independent random initializations. Each value entering the box plot is therefore the best loss obtained in one training run.

We then train the correlated channel model $\mathcal{E}_{\bm\theta,\bm p}$ defined in Eq.~\eqref{eq:num_channel_main} and shown in Fig.~\ref{fig:learning_model_main}. The unitary circuit is now augmented, in the first layer, by the stochastic Pauli string $\bigotimes_{i=1}^6 Z^{s^1}_i$ acting on all qubits immediately after the sublayer $\bigotimes_i R_z(\theta_i^{1,z})$. Since $\bigotimes_{i=1}^6 Z^{s^1}_i$ commutes with the preceding $R_z$ and $R_{zz}$ sub-layers, this placement is equivalent to applying $\bigotimes_{i=1}^6 Z^{s^1}_i$ before the entire first layer. Any circuit layers preceding this insertion slot are set to the identity in the numerical implementation. This Pauli string is controlled by a shared binary variable $s^1\sim\operatorname{Bernoulli}(p^1)$, with $\Pr(s^1=1)=p^1$. For comparison, this corresponds to the circuit shown on the right of Fig.~\ref{fig:model_num}, with the endpoint string $Z_1^{s^1}\otimes Z_N^{s^1}$ replaced by the global string $\bigotimes_{i=1}^6 Z^{s^1}_i$. The trainable parameters are the circuit angles $\bm\theta$, in Eq.~\eqref{eq:layer_params_main}, together with the branch probability $\bm p=\{p^1\}$. The box plot in Fig.~\ref{fig:results_1}~(b) reports the minimum MMD loss attained in each of the $20$ independent random initializations. The correlated channel consistently achieves substantially lower losses than the unitary model and exhibits less variation across random initializations.

For every run, the circuit parameters $\bm\theta$ are initialized independently and uniformly from $[0,1]$, while $p^1$ is initialized uniformly from $[0.5,0.7]$. We use $6000$ samples and a fixed learning rate of $0.5$ for the variational angles $\bm \theta$ and $0.01$ for $p^1$ within the Adagrad optimizer. Further details on the box-plot statistics are provided in Appendix~\ref{app:box_details}.

These finite-size experiments show that,
for the selected target and training protocol, the model containing the matching shared
branch structure reaches lower final error than the unitary baselines. The analytical separation, rather than the numerical experiment, establishes
the representational result.

\begin{figure}[t]
\centering
\includegraphics[width=0.45\textwidth]{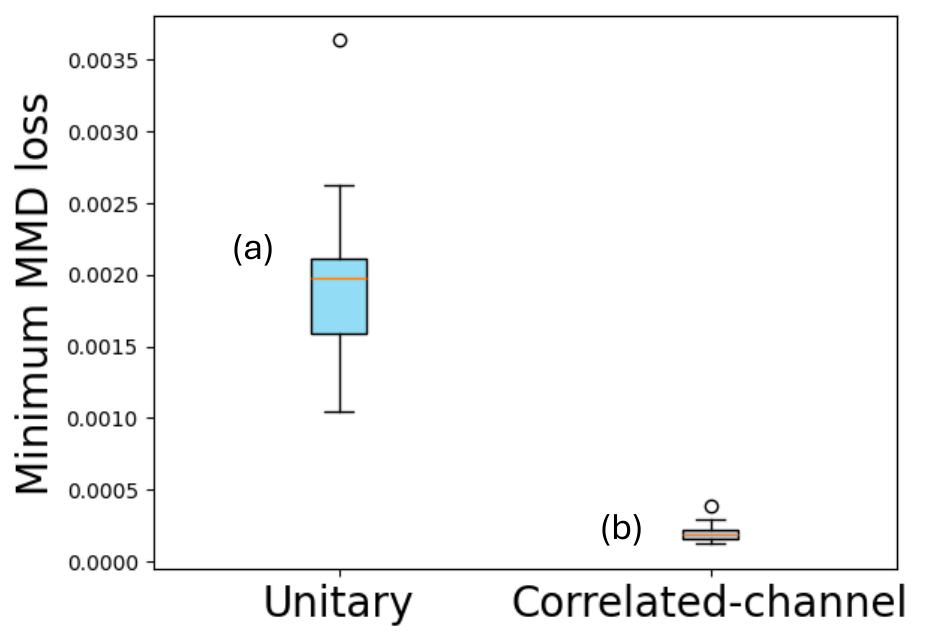}
\caption{\textbf{ Learning performance of MBQC-based model:} Here we examine how well the unitary model $U(\bm \theta)$ and the correlated channel model $\mathcal{E}_{\bm \theta, \bm p}$ can learn a distribution belonging to the family of distributions defined in Def.~\ref{def:branch_peaked_mixture}. The $x$-axis labels the two learning models, while the $y$-axis reports the distribution of minimum MMD losses achieved by $20$ independently initialized unitary and channel learning models over $300$ epochs. Both learning models have $N=6$ qubits and depth $D=2$. (a) Distribution of minimum MMD losses achieved by the unitary model. (b) Distribution of minimum MMD losses achieved by the correlated channel model where we apply a correlated $\bigotimes_{i=1}^6 Z^{s^1}_i$ string on all qubits, in the first layer, with $s^1\in\{0,1\}$ being a binary variable, and probability $\Pr(s^1=1)=p^1\in [0,1]$. For more details regarding the box plot, see Appendix~\ref{app:box_details}.}
\label{fig:results_1}
\end{figure}

\section{Conclusion}

In this work, we demonstrated both analytically and numerically that shared classical randomness can strictly enlarge the set of output distributions accessible to shallow local quantum generative models. The separation does not arise from increasing the quantum depth, adding ancillas, or using long-range entangling gates. Instead, a single stochastic Pauli string, supported on spatially separated qubits and inserted at an intermediate circuit layer, is sufficient to create a channel model whose output distribution cannot be reproduced by the corresponding shallow local unitary Born model. The obstruction is architectural, as in shallow bounded-connectivity circuits, separated output regions can remain outside each other's backward light cones, while the shared random Pauli string can correlate their classical outputs without increasing the quantum circuit depth. For one-dimensional nearest-neighbour architectures, reproducing such distributions, obtained from the channel model, with a purely unitary model requires depth growing linearly with the distance between the correlated regions and can therefore require depth $\Omega(N)$ in the worst case.

Note that unitary circuits with linear depth typically suffer from expressivity-induced barren plateaus, whereas shallow circuits do not~\cite{larocca2025barren, mcclean2018barren, ragone2024lie}. Consistent with this, we observed no significant optimization difficulties in our numerical experiments in Sec.~\ref{sec:num_results}. Moreover, extending the shallow unitary model with shared randomness requires only a few additional trainable parameters. In most of our constructions, a single probability parameter controls the stochastic application of a fixed Pauli string. 

More generally, the same insights will likely apply to any bounded-degree hardware graph, with the required unitary depth controlled by the graph distance between the correlated regions. Thus the result is not specific to a particular gate set or to a strictly one-dimensional geometry. Instead, it is a statement about finite-speed information propagation in local quantum architectures.

The implication for hardware is that classically coordinated stochastic Pauli operations can be a cheap and useful resource for generative modeling on devices where quantum connectivity is limited. On nearest-neighbor superconducting layouts or neutral-atom arrays with local gates, shared randomness can create distributional correlations that would otherwise require deeper routing or longer quantum evolution. In measurement-based implementations, the resource is even more natural as measurement outcomes are already random, and suitable adaptation or retention of selected outcomes realizes the correlated Pauli operations directly. On hardware with native long-range gates, including architectures with high connectivity, the separation based on shared randomness becomes weaker because the relevant graph distances are shorter. Nevertheless, for generative modeling, shared classical randomness can still reduce circuit depth by replacing some coherent long-range operations in the output distribution. Furthermore, on hardware that natively supports the required parametrized Pauli rotations~\cite{debnath2016demonstration,McKay2017eff,kandala2017hardware}, the correlated stochastic Pauli gates can be absorbed into coordinated shifts of existing rotation parameters across the selected qubits, requiring only shared classical control and no additional quantum-gate depth.

These results should not be interpreted as a quantum advantage claim over classical computation. Instead, it is a separation between two quantum model classes under the same shallow local quantum resources. The additional resource is classically cheap: a shared random variable controlling local Pauli operations simultaneously. This shows that stochasticity is not merely a way to include unitary models as special cases, but rather, when structured as shared randomness, it can provide a genuine representational advantage at shallow depth.

Several directions remain open. One is to characterize more completely which target distributions benefit from shared-randomness beyond the branch-peaked mixture distribution used here. In particular, a classification and optimization of different choices of shared randomness for quantum generative modeling is desirable. Another is to optimize the placement and support of the stochastic Pauli strings for realistic hardware graphs. A third is to study trainability: shared randomness enlarges expressivity, but its practical value depends on whether the tunable probabilities corresponding to shared randomness and circuit parameters can be trained efficiently for a particular learning task. Finally, the MBQC realization suggests a broader connection between measurement-induced randomness, adaptive feedforward, and stochastic quantum generative modeling, which may lead to new shallow learning architectures based on classical coordination rather than deeper quantum circuits.

\section{Acknowledgments}
We thank Sofiene Jerbi, Vedran Dunjko, and Damian Markham for useful discussions. This research was funded in part by the Austrian Science
Fund (FWF) [SFB BeyondC F7102, DOI: 10.55776/F71;
WIT9503323, DOI: 10.55776/WIT9503323]. For open access purposes, the author has applied a CC BY public copy-right license to any author accepted manuscript version
arising from this submission. This work was also supported
by the European Union (ERC Advanced Grant, QuantAI,
No. 101055129). The views and opinions expressed in
this article are however those of the author(s) only and do
not necessarily reflect those of the European Union or the
European Research Council - neither the European Union
nor the granting authority can be held responsible for them.

GPT-5.6 Sol was used to help establish the consistency of the mathematical statements and proofs during the preparation of the manuscript. All scientific ideas, analyses, and conclusions are those of the authors, and all AI-improved text was carefully reviewed, manually checked, and validated prior to inclusion.
\section{Data availability}
The code and numerical data supporting the findings of this study are available at \cite{Majumder_Shared_randomness}.
\bibliography{ref}% Produces the bibliography via BibTeX.

\newpage
\onecolumngrid
\section{Related work}
\label{sec:related_work}

Generically, previous works adding classical randomness into parameterized quantum circuits mostly rely on the observation that quantum channels are more general than unitaries. However, we are not aware of other works systematically investigating the crucial benefits of shared classical randomness for generative modeling with shallow quantum circuits. Furthermore, we focus on keeping the number of additional trainable parameters small. Inspired by MBQC, we often have just few additional trainable probabilities deciding whether a few fixed Pauli strings get placed or not at an intermediate layer in the circuit. This requires no long-range entangling gates, no mid-circuit measurements, and no controlled arbitrary unitaries. Despite this minimal modification, the resulting channel model can generate long-range output correlations that shallow local unitary models cannot represent, because the relevant backward light cones of the latter remain disjoint. Thus, our contribution is not the generic claim that ``randomness helps'', nor an empirical optimization effect, but strict complexity-theoretical shallow-depth separations for quantum generative models enabled by shared randomness.

Prior works have demonstrated that stochasticity, mixtures, and non-unitary extensions of variational quantum circuits can improve expressivity or trainability compared with purely unitary architectures. Wu et al.~\cite{wu2024randomness} introduced randomness-enhanced quantum neural networks, where a parameterized unitary circuit is enhanced by a random unitary layer sampled from an ensemble. These randomly sampled unitaries consist of trainable single qubit rotations. Instead of such trainable random unitaries, we chose to sample only fixed Pauli strings to keep the parameter count smaller. Using majorization theory of observables, they analytically and also numerically show that stochasticity can increase the expressive power of quantum models for supervised learning tasks. Different from our work, the analytical results take advantage of restrictions given by unitarity itself (unitary channels cannot change eigenvalues of observables), rather than restrictions introduced by shallow ansätze of local connectivity.

Similarly, Coyle et al.~\cite{coyle2025training} proposed density quantum neural networks, in which the model outputs an average over a weighted ensemble of trainable unitary branches. Equivalently, each circuit execution samples one branch from the ensemble, runs the corresponding parameterized sub-unitary, and estimates observables by averaging over branches. This results in a density matrix similar to Local-Combination-of-Unitaries (LCU), but with smaller resource overhead. However, also here, each branch may still be a nontrivial trainable circuit with its own parameters, depth, and entangling structure. The formal results analyze the training cost and in particular the number of samples, and compare the resource costs to the standard LCU scheme. The interplay of shallowness and shared randomness is not explicitly explored.  

The LCU scheme itself has been analyzed in Heredge et al.~\cite{heredge2025nonunitary} for adapting classical machine learning principles such as residual connections and pooling layers. Such constructions can increase expressivity, but they typically require coherent control over multiple unitary components, which can substantially increase circuit depth and implementation overhead. Furthermore, the non-unitary is not achieved by the injection of randomness, but by post-selection.

Wen et al.~\cite{wen2026trainable} formulate trainable quantum channels as computational primitives, emphasizing that generically, channels and their parameterized Kraus operators strictly generalize unitary models. Also here, the argument that unitaries cannot change eigenvalues of observables is used. They also argue that the loss landscape is not constrained to a unitary manifold anymore. As channel ans\"{a}tze, they investigate unitary circuits supplemented with tunable amplitude- and phase-damping channels, which they interpret as tunable dissipation.

Noise and dissipation have also been used productively in quantum learning. In quantum reservoir computing, Domingo et al.~\cite{domingo2023taking}, Fry et al.~\cite{fry2023noise_induced_qrc}, Sannia et al.~\cite{sannia2024dissipation_qrc}, and Kubota et al.~\cite{kubota2023temporal_noise} show that noise or dissipation can improve temporal information processing or reservoir performance. These works mostly consider single qubit noise, although the last reference also considers noise affecting two nearby qubits. These works acknowledge the usefulness of noise or dissipation for reducing quantum resources, but they do not investigate the benefits of shared randomness over distant qubits. Their noise is also fixed, while we consider trainable shared randomness.

Other works use stochastic noise for regularization, robustness, or optimization: Scala et al.~\cite{scala2025dropout} study quantum dropout, Kuzmin et al.~\cite{kuzmin2025noise_regularization} study noise-induced regularization, Huang et al.~\cite{huang2023certified_robustness} and Winderl et al.~\cite{winderl2024optimal_noise} use noise for adversarial robustness, and Liu et al.~\cite{liu2025stochastic_noise_vqa} show that stochastic noise can help variational algorithms escape strict saddles. These results show that noise need not be detrimental, but none identify shared classical randomness as a resource for overcoming locality constraints in shallow generative Born models.

Variational measurement-based quantum computation (VMBQC) was introduced as a framework in which
measurement-induced Pauli byproducts are retained with trainable probabilities and exploited for quantum generative modeling
~\cite{majumder2024variational}. More recently, it was shown that even a
restricted VMBQC channel with a single additional trainable probability
can represent distributions inaccessible to the corresponding unitary
model~\cite{majumder2026minimizing}. The present work extends these
architecture-specific observations by identifying shared classical
randomness as the relevant resource and establishing a strict
shallow-depth separation for general local quantum circuits.

Several other works have shown that one can generate long-range correlations by preparing pure states with long-range entanglement. Dynamic circuits and LOCC-assisted protocols use ancilla qubits, measurements and classical communication to generate long-range entanglement or reduce state-preparation depth~\cite{baumer2024efficient,piroli2024approximating}. Those works show that non-unitary operations and classical communication can overcome geometric limitations of local unitary circuits. Our setting uses the same broad physical principle, but for a different task: learning classical output distributions. However, our scheme needs no ancilla qubits to prepare superposition of  states with long-range entanglement, and no classical communication or feed-forward operations between qubits. Instead, our scheme can be interpreted as using shared classical randomness. 

Distributed QML schemes connect quantum processors using classical channels or real-time communication~\cite{hwang2025distributed}. This motivates classical communication as a scalable architectural primitive. Our model is lower-level and more specific: the classical channel samples a shared random variable that coordinates Pauli operations across qubits, thereby modifying the generated probability distribution. Our scheme relies on shared classical randomness rather than measurement-conditioned classical communication. 

More generally, quantum generative models aim to exploit structured correlations that are difficult to capture with restricted classical models. For example, Gao et al.~\cite{gao2022enhancing} showed that quantum correlations can enhance generative modeling by comparing classical Bayesian-network models with basis-enhanced Bayesian quantum circuits. Their separations use quantum non-locality and contextuality to show advantages over classical $k$-gram and hidden-Markov models.

\newpage
\section*{Supplementary Materials}\label{sec:end_matter}

\appendix

\section{Background and definitions}\label{app1:background}

\paragraph{Brickwall circuit architecture.}
Throughout the separation proofs, the local unitary model is a one-dimensional nearest-neighbor brickwall circuit on $N$ qubits with open boundary conditions. It is written as
\begin{equation}\label{eq:app_bw_cir}
U(\bm{\theta})
=
U_D(\bm{\theta}_D)\cdots U_1(\bm{\theta}_1),
\end{equation}
where $D$ denotes the number of brickwall layers. Each layer consists of two sub-layers of nearest-neighbor two-qubit gates,
\begin{equation}\label{eq:app_sub_layers}
U_k(\bm{\theta}_k)
=
U^{(k)}_{\mathrm{odd}}
U^{(k)}_{\mathrm{even}},
\end{equation}
with
\begin{equation}
U^{(k)}_{\mathrm{even}}
=
\prod_{\substack{i=2,~ i\ \mathrm{even}}}^{N-1}
G^{(k)}_{i,i+1},
\qquad
U^{(k)}_{\mathrm{odd}}
=
\prod_{\substack{i=1,~ i\ \mathrm{odd}}}^{N-1}
G^{(k)}_{i,i+1}.
\end{equation}
Here, ($G^{(k)}_{i,i+1}$) is an arbitrary two-qubit gate acting on the neighboring sites $(i,i+1)$. The gates within each sublayer have disjoint supports and therefore can be applied in parallel, whereas the ordering of the even- and odd-bond sublayers is part of the circuit definition. Arbitrary single-qubit gates may be inserted before, after, or between these sublayers without changing the spatial locality of the architecture. Open boundary conditions restrict the two-qubit gates to the edges $(i,i+1)$, with ($i=1,\ldots,N-1$). This architecture defines the local unitary baseline used in the separation results.

\paragraph{Input state.}
We consider a fixed product input state
\begin{equation}
\rho_{\rm in}
=
\bigotimes_{j=1}^N \rho_j,
\end{equation}
where each $\rho_j$ is an arbitrary single-qubit density operator. The state may therefore be pure or mixed and need not be identical across the qubits. The product-state assumption ensures that the input contains no correlations between spatially separated regions.

\paragraph{Stochastic Pauli string.}
Let $M\subseteq\{1,\ldots,N\}$ denote a subset of qubits containing at least two distinct sites. A Pauli string supported on $M$ is defined as
\begin{equation}
P_{\mathcal{M}}
:=
\prod_{j\in {\mathcal{M}}} Q_j,
\qquad
Q_j\in\{X_j,Z_j\},
\end{equation}
with the identity acting on all qubits outside $M$. A single Bernoulli random variable $s\in\{0,1\}$, with
\begin{equation}
\Pr(s=1)=p,
\qquad
\Pr(s=0)=1-p,
\end{equation}
controls the application of the entire string,
\begin{equation}
P_{\mathcal{M}}^s
=
\begin{cases}
\id, & s=0,\\
P_{\mathcal{M}}, & s=1.
\end{cases}
\end{equation}
The same random bit therefore determines the Pauli operations on each site in $M$ simultaneously. Averaging over $s$ gives the stochastic Pauli channel
\begin{equation}
\mathcal{E}_{p}(\rho)
=
(1-p)\rho
+
pP_{\mathcal{M}}\rho P_{\mathcal{M}}^\dagger.
\end{equation}
for some input state $\rho$. The stochastic Pauli string $P_{\mathcal{M}}^s$ may be inserted between any two consecutive layers of the brickwall circuit in Eq.~\eqref{eq:app_bw_cir}.

\paragraph{Unitary and channel output distributions}
For a fixed depth $D$, input state $\rho_{\rm in}$, and local circuit architecture, let $U(\bm{\theta})$ denote the parametrized unitary circuit. After computational-basis measurement, its output distribution, as shown in Eq.~\eqref{eq:unit_dist}, is
\begin{equation}
P_{U(\bm{\theta})}(x)
=
\Tr\left[
\ket{x}\bra{x}
U(\bm{\theta})\rho_{\rm in}U(\bm{\theta})^\dagger
\right],
\qquad
x\in\{0,1\}^N.
\end{equation}
The corresponding family of unitary output distributions is
\begin{equation}\label{eq:unit_dist_family}
\mathcal P_{\rm unit}(D)
:=
\left\{
P_{U(\bm{\theta})}
:
\bm{\theta}\in \bm \Theta_D
\right\}.
\end{equation}
where $\bm \Theta_D$ denotes the parameter domain of the depth-$D$ circuit.

The corresponding correlated channel model, as defined in Eq.~\eqref{eq:channel_model_set}, is
\begin{equation}\label{eq:app_channel_model}
\mathcal E_{\bm\theta,\bm p}(\rho_{\rm in})
    =
    \sum_{\bm s\in\{0,1\}^L}
    \pi_{\bm p}(\bm s)\,
    U^{(\bm s)}(\bm\theta)
    \rho_{\rm in}
    U^{(\bm s)}(\bm\theta)^\dagger 
\end{equation}
where \(U^{(\bm{s})}(\bm{\theta})\) denotes the branch unitary corresponding to a fixed realization
\(\bm{s}=(s^1,\ldots,s^L)\in\{0,1\}^L\)
of the stochastic control variables. For a single stochastic Pauli string, this reduces to
\(s\in\{0,1\}\) and the corresponding branch unitary \(U^{(s)}(\bm{\theta})\). The computational-basis output distribution is
\begin{equation}
P_{\mathcal E_{\bm{\theta},\bm p}}(x)
=
\Tr\left[
\ket{x}\bra{x}
\mathcal E_{\bm{\theta},\bm p}(\rho_{\rm in})
\right].
\end{equation}
The family of channel output distributions is therefore
\begin{equation}\label{eq:ch_dist_family}
\mathcal P_{\rm ch}(D)
:=
\left\{
P_{\mathcal E_{\bm{\theta},\bm p}}
:
\bm{\theta}\in \bm \Theta_D,~~\bm p\in[0,1]^L
\right\}.
\end{equation}
where $L$ is the number of stochastic Pauli strings in the circuit. Here, $\ket{x}=\ket{x_1}\otimes\cdots\otimes\ket{x_N}$ denotes the computational-basis state associated with $x=(x_1,\ldots,x_N)\in \{0,1\}^N$.

\paragraph{Total variation distance.}
Total variation distance is a metric for comparing two output
probability distributions. For distributions $P$ and $Q$ on the same finite sample space $\Omega$, it is defined as
\begin{equation}
    \TV(P,Q)
    :=
    \frac12\sum_{x\in\Omega}|P(x)-Q(x)|.
\end{equation}

Thus, a lower bound on total variation distance gives a strong distributional separation: if $\TV(P,Q)\geq \delta$, then the two models cannot approximate each other closer than error $\delta$ at the level of output distributions.
\begin{definition}[$Z$-covariance]
\label{def:z_covariance}
Let $\rho$ be an $N$-qubit state and let $\rho_{ab}$ denote its reduced state on qubits $a$ and $b$. The $Z$-covariance between these qubits is defined as the connected two-point correlator
\begin{equation}
    \Cov_{\rho_{ab}}(\hat Z_a,\hat Z_b)
    :=
    \langle \hat Z_a\otimes \hat Z_b\rangle_{\rho_{ab}}
    -
    \langle \hat Z_a\rangle_{\rho_{ab}}
    \langle \hat Z_b\rangle_{\rho_{ab}},
\end{equation}
where
\begin{equation}
    \langle \hat Z_a\rangle_{\rho_{ab}}
    :=
    \Tr[\rho_{ab}(\hat Z_a\otimes \id)],
    \qquad
    \langle \hat Z_b\rangle_{\rho_{ab}}
    :=
    \Tr[\rho_{ab}(\id\otimes \hat Z_b)],
    \qquad \langle \hat Z_a\otimes \hat Z_b\rangle_{\rho_{ab}}
    :=
    \Tr[\rho_{ab}(\hat Z_a\otimes \hat Z_b)].
\end{equation}
\end{definition}

\begin{lemma}[Backward light cones in a nearest neighbor brickwall circuit]
\label{lem:light-cone}
Consider an open-chain nearest-neighbour brickwall circuit on $N$ qubits as in Eq.~\eqref{eq:app_bw_cir}, where each layer in the circuit can have arbitrary single-qubit gates and nearest-neighbour two-qubit gates. For any qubit $a\in\{1,\ldots,N\}$, let $L_D(a)$ denote the support of the backward light cone of its final computational-basis measurement after $D$ layers of the circuit. Then
\begin{equation}
L_D(a)
\subseteq
\left\{
\max\{1,a-2D\},
\ldots,
\min\{N,a+2D\}
\right\}.
\end{equation}
Consequently, for any two distinct qubits $a,b\in\{1,\ldots,N\}$, 
\begin{equation}
D
<
\frac{\operatorname{dist}(a,b)}{4},
\qquad
\operatorname{dist}(a,b):=|a-b|,
\end{equation}
implies that their backward light cones are disjoint:
\begin{equation}
L_D(a)\cap L_D(b)=\varnothing.
\end{equation}
In particular, for the boundary qubits $a=1$ and $b=N$,
\begin{equation}
L_D(1)
\subseteq
\{1,\ldots,2D+1\},
\qquad
L_D(N)
\subseteq
\{N-2D,\ldots,N\},
\end{equation}
and these supports are disjoint whenever
\begin{equation}
D<\frac{N-1}{4}.
\end{equation}
\end{lemma}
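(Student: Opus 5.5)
We prove the light-cone bound by working in the Heisenberg picture and inducting on the number of brickwall layers; the disjointness statements then follow by simple arithmetic. The final measurement on qubit $a$ is fully described by the observable $\hat Z_a$, or equivalently by the projectors $\ket{0}\!\bra{0}_a$ and $\ket{1}\!\bra{1}_a$. Its backward light cone is the support of the Heisenberg-evolved operator
\begin{equation}
O_D := U(\bm\theta)^\dagger \hat Z_a U(\bm\theta) = U_1^\dagger\cdots U_D^\dagger \hat Z_a U_D\cdots U_1 .
\end{equation}
Only the support of $O_D$ as a function of the gate pattern matters. We therefore define $L_k(a)$ as the support of the operator after conjugating by the last $k$ layers, and we show $L_k(a)\subseteq[a-2k,a+2k]\cap[1,N]$ by induction on $k$. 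The base case is $L_0(a)=\{a\}$.

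For the inductive step, we conjugate by one layer $U_k=U^{(k)}_{\rm odd}U^{(k)}_{\rm even}$, which is done one sublayer at a time. Conjugating an operator $O$ with support $S$ by a product of gates $G_{i,i+1}$ with disjoint supports works as follows.
\begin{itemize}
\item Every gate whose support is disjoint from $S$ commutes with $O$ and cancels against its adjoint.
\item Every gate that overlaps $S$ can enlarge the support by at most its partner site.
\end{itemize}
If $S=[l,r]$ is an interval, one sublayer therefore gives a support contained in $[l-1,r+1]$. A boundary gate enlarges only one side, and open boundary conditions clip the interval to $[1,N]$. Single-qubit gates inserted anywhere leave the support unchanged. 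Two sublayers per layer give growth of at most $2$ on each side, which proves the claimed interval $[\max\{1,a-2D\},\min\{N,a+2D\}]$.

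The bound is deliberately crude, since the alternating parity would actually give growth of about $1$ per layer. Over-counting is harmless here, because we need only an upper bound on the support. The one point to state carefully is that the support of the Heisenberg-evolved projector is exactly what determines which input qubits and gates influence the marginal of the outcome at $a$. This holds because $\Tr[\rho_{\rm in} O_D]$ depends only on the reduced input state on $L_D(a)$. The boundary cases $a=1$ and $a=N$ follow by clipping: $L_D(1)\subseteq\{1,\ldots,2D+1\}$ and $L_D(N)\subseteq\{N-2D,\ldots,N\}$.

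For disjointness, assume without loss of generality $a<b$. The largest element of $L_D(a)$ is at most $a+2D$, and the smallest element of $L_D(b)$ is at least $b-2D$. The cones are therefore disjoint whenever $a+2D<b-2D$, that is, $4D<b-a=\operatorname{dist}(a,b)$, which is exactly the hypothesis $D<\operatorname{dist}(a,b)/4$. Taking $a=1$, $b=N$ gives the condition $2D+1<N-2D$, i.e.\ $D<(N-1)/4$. No step poses a real obstacle; the only care needed is to phrase the one-sublayer support growth precisely for gates with disjoint supports.
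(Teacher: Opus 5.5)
Your proof is correct and follows the paper's argument: you propagate $\hat Z_a$ backwards, note that single-qubit gates preserve support while each nearest-neighbour sublayer can enlarge it by at most one site per side, and then compare $a+2D$ with $b-2D$; your sublayer-by-sublayer induction with disjoint gate supports just makes the paper's one-line growth claim precise. One aside is wrong but harmless: the $2D$ bound is not crude, because with alternating parity every sublayer after the first extends the interval on \emph{both} sides (away from the boundary the causal cone is $[a-2D+1,a+2D]$ or its mirror image), so the growth really is about two sites per layer per side, and it is the cluster-state layer convention used later in the paper that gives one site per layer.
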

Similar bounds hold for more general local circuit architectures, where the depth $D$ is determined by the interaction range and geometry of the underlying connectivity graph.
\begin{proof}
Consider two qubits $a,b\in\{1,\ldots,N\}$, and assume without loss of generality that $a<b$. The backward light cone of the final computational-basis measurement on qubit $a$ is obtained by propagating the observable $\hat Z_a$ backwards through the circuit. Similarly, the backward light cone of the measurement on qubit $b$ is obtained by propagating $\hat Z_b$ backwards.

Single-qubit gates do not enlarge the support of an operator. A nearest-neighbor two-qubit gate acting on qubits $(i,i+1)$ can enlarge the support only across the edge $(i,i+1)$. In particular, if the operator is supported on one endpoint of the gate, conjugation may extend its support to the other endpoint, whereas a gate whose support is disjoint from that of the operator leaves the operator support unchanged.

Each circuit layer contains at most two nearest-neighbor two-qubit sublayers (see Eq.~\eqref{eq:app_sub_layers}). Hence, during one full layer, the support of a backward-propagated operator can expand by at most two sites in each direction. Starting from the single-site support $a$, after $D$ layers the backward light cone therefore satisfies
\begin{equation}
L_D(a)
\subseteq
\left\{
\max\{1,a-2D\},
\ldots,
\min\{N,a+2D\}
\right\}.
\end{equation}
The same argument gives
\begin{equation}
L_D(b)
\subseteq
\left\{
\max\{1,b-2D\},
\ldots,
\min\{N,b+2D\}
\right\}.
\end{equation}

Since $a<b$, these two intervals are disjoint whenever $a+2D<b-2D.$ Equivalently, $4D<b-a = \operatorname{dist}(a,b)$ or $D< \frac{\operatorname{dist}(a,b)}{4}$ Therefore,
\begin{equation}
L_D(a)\cap L_D(b)=\varnothing
\end{equation}
under the stated depth condition.
\end{proof}

\begin{lemma}[Two-qubit computational-basis marginal]
\label{lem:two_qubit_z_marginal}
Let $\rho$ be an arbitrary $N$-qubit state, and let $\rho_{ab}$ denote its reduced state on qubits $a$ and $b$. Measuring these two qubits in the computational basis produces the joint distribution $P(x_a,x_b)$, where $x_a,x_b\in\{0,1\}$. Then
\begin{equation}
\label{eq:joint_marginal_1}
    P(x_a,x_b)
    =
    \frac{1}{4}
    \left(
        1
        +
        (-1)^{x_a}\langle \hat Z_a\rangle
        +
        (-1)^{x_b}\langle \hat Z_b\rangle
        +
        (-1)^{x_a+x_b}
        \langle \hat Z_a\otimes \hat Z_b\rangle
    \right),
\end{equation}
where
\begin{equation}
    \langle \hat Z_a\rangle
    :=
    \Tr[\rho_{ab}(\hat Z_a\otimes \id)],
    \qquad
    \langle \hat Z_b\rangle
    :=
    \Tr[\rho_{ab}(\id\otimes \hat Z_b)],
    \qquad
    \langle \hat Z_a\otimes \hat Z_b\rangle
    :=
    \Tr[\rho_{ab}(\hat Z_a\otimes \hat Z_b)].
\end{equation}
\end{lemma}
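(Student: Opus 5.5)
The plan is to compute the two-qubit marginal directly from the diagonal entries of the reduced state in the computational basis, and to expand the projectors $\ket{x_a}\!\bra{x_a}$ and $\ket{x_b}\!\bra{x_b}$ in terms of $\id$ and $\hat Z$.

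First, by the definition of the partial trace and Born's rule, the joint probability of outcomes $(x_a,x_b)$ on qubits $a,b$ is obtained by summing over all outcomes of the remaining qubits:
\begin{equation}
P(x_a,x_b)=\Tr\!\left[\rho\,\big(\ket{x_a}\!\bra{x_a}_a\otimes\ket{x_b}\!\bra{x_b}_b\otimes \id_{\rm rest}\big)\right]=\Tr\!\left[\rho_{ab}\,\big(\ket{x_a}\!\bra{x_a}\otimes\ket{x_b}\!\bra{x_b}\big)\right].
\end{equation}

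Second, I use the single-qubit identity $\ket{x}\!\bra{x}=\tfrac12\big(\id+(-1)^{x}\hat Z\big)$ for $x\in\{0,1\}$, which follows from $\hat Z\ket{x}=(-1)^x\ket{x}$. The tensor product of the two projectors then expands into four terms:
\begin{equation}
\tfrac14\big(\id\otimes\id+(-1)^{x_a}\hat Z_a\otimes\id+(-1)^{x_b}\id\otimes\hat Z_b+(-1)^{x_a+x_b}\hat Z_a\otimes\hat Z_b\big).
\end{equation}

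Third, I insert this into the trace and use linearity together with $\Tr\rho_{ab}=1$. Each term then reproduces exactly one of the expectation values defined in the statement, which gives Eq.~\eqref{eq:joint_marginal_1}.

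There is no real obstacle here. The only point needing care is the first step, where we must justify that the marginal over the remaining $N-2$ computational-basis outcomes equals the reduced-state expression. This holds because $\sum_{x_{\rm rest}}\ket{x_{\rm rest}}\!\bra{x_{\rm rest}}=\id_{\rm rest}$ by completeness, and $\Tr[\rho(A\otimes\id)]=\Tr[\rho_{ab}A]$ by the definition of the partial trace.
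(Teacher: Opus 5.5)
Your proposal is correct and follows the paper's proof essentially step for step: you write $P(x_a,x_b)=\Tr[\rho_{ab}(\Pi_{x_a}\otimes\Pi_{x_b})]$, expand each projector as $\tfrac12(\id+(-1)^{x}\hat Z)$, and conclude by linearity together with $\Tr[\rho_{ab}]=1$. Your extra justification of the first step, via completeness over the remaining outcomes and the defining property of the partial trace, is a small addition that the paper takes for granted.
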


\begin{proof}
The joint probability of outcomes $x_a,x_b\in\{0,1\}$ is
\begin{equation}
\label{eq:joint_marginal_2}
    P(x_a,x_b)
    =
    \Tr[
        \rho_{ab}
        (\Pi_{x_a}\otimes \Pi_{x_b})
    ],
\end{equation}
where
\begin{equation}
    \Pi_{x_i}
    =
    \ket{x_i}\bra{x_i}
    =
    \frac{1}{2}
    \left(
        \id
        +
        (-1)^{x_i}\hat Z_i
    \right),
    \qquad
    i\in\{a,b\}.
\end{equation}
Therefore,
\begin{equation}
    \Pi_{x_a}\otimes \Pi_{x_b}
    =
    \frac{1}{4}
    \left(
        \id\otimes\id
        +
        (-1)^{x_a}\hat Z_a\otimes\id
        +
        (-1)^{x_b}\id\otimes\hat Z_b
        +
        (-1)^{x_a+x_b}
        \hat Z_a\otimes\hat Z_b
    \right).
\end{equation}
Substituting this expression into Eq.~\eqref{eq:joint_marginal_2} gives exactly Eq.~\eqref{eq:joint_marginal_1},
since $\Tr[\rho_{ab}]=1$.
\end{proof}

\begin{lemma}[Disjoint light cones imply factorized two-site marginal]
\label{lem:disjoint_lightcone_factorized_marginal}
Consider an open chain depth-$D$ one-dimensional nearest-neighbor brickwall unitary circuit
$U(\bm\theta)$ (Eq.~\eqref{eq:app_bw_cir}) acting on an $N$-qubit product input state $\rho_{\rm in}
    =
    \bigotimes_{i=1}^N \rho_i $.
Let $a$ and $b$ be two output sites separated by distance $\dist(a,b)$, and let
\begin{equation}
    \rho_{ab}
    :=
    \Tr_{[1,N]\setminus\{a,b\}}
    \left[
        U(\bm\theta)\rho_{\rm in}U(\bm\theta)^\dagger
    \right]
\end{equation}
be the reduced output state on sites $a$ and $b$. Then $\Cov_{\rho_{ab}}(\hat Z_a,\hat Z_b)=0$ for all depth $D<\frac{\dist(a,b)}{4}$.
Consequently, the computational-basis two-site joint marginal factorizes as
\begin{equation}
    P_{U(\bm\theta)}(x_a,x_b)
    =
    P_{U(\bm\theta)}(x_a)P_{U(\bm\theta)}(x_b),
    \qquad
    x_a,x_b\in\{0,1\}.
\end{equation}
\end{lemma}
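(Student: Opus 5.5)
We work in the Heisenberg picture and combine Lemma~\ref{lem:light-cone} with the product structure of $\rho_{\rm in}$. Write $\rho_{\rm out}=U(\bm\theta)\rho_{\rm in}U(\bm\theta)^\dagger$. For any single-site observable $O_q$ (we will use $O_q=\hat Z_q$), set $\tilde O_q:=U(\bm\theta)^\dagger O_q U(\bm\theta)$. Then $\langle O_q\rangle_{\rho_{\rm out}}=\Tr[\rho_{\rm in}\tilde O_q]$. By the argument in the proof of Lemma~\ref{lem:light-cone}, $\tilde O_q$ is supported on $L_D(q)$. Concretely, peeling off gates one at a time, a gate whose support is disjoint from the current support of the conjugated operator cancels as $G^\dagger G=\id$. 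A gate overlapping the current support enlarges it by at most the gate's support, so after $D$ layers the support lies in the stated interval.

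\textbf{Step 1: factorization of $\langle\hat Z_a\hat Z_b\rangle$.} Since $U^\dagger(\hat Z_a\hat Z_b)U=(U^\dagger\hat Z_aU)(U^\dagger\hat Z_bU)=\tilde Z_a\tilde Z_b$, we obtain
\begin{equation}
\langle \hat Z_a\otimes\hat Z_b\rangle_{\rho_{ab}}=\Tr\!\left[\rho_{\rm in}\,\tilde Z_a\tilde Z_b\right].
\end{equation}
For $D<\dist(a,b)/4$, Lemma~\ref{lem:light-cone} gives $L_D(a)\cap L_D(b)=\varnothing$. Hence $\tilde Z_a=A\otimes\id$ and $\tilde Z_b=\id\otimes B$, with $A$ acting on $L_D(a)$ and $B$ acting on $L_D(b)$. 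Because $\rho_{\rm in}=\bigotimes_i\rho_i$ is a product state, it factorizes across the partition $L_D(a)\,|\,L_D(b)\,|\,\text{rest}$. The trace therefore splits as
\begin{equation}
\Tr[\rho_{\rm in}\tilde Z_a\tilde Z_b]=\Tr\!\left[\rho_{L_D(a)}A\right]\Tr\!\left[\rho_{L_D(b)}B\right]=\langle\hat Z_a\rangle\langle\hat Z_b\rangle .
\end{equation}
It follows that $\Cov_{\rho_{ab}}(\hat Z_a,\hat Z_b)=0$ by Definition~\ref{def:z_covariance}.

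\textbf{Step 2: the two-site marginal.} We insert this factorization into Lemma~\ref{lem:two_qubit_z_marginal}:
\begin{equation}
P(x_a,x_b)=\tfrac14\left(1+(-1)^{x_a}\langle\hat Z_a\rangle\right)\left(1+(-1)^{x_b}\langle\hat Z_b\rangle\right).
\end{equation}
Applying the same lemma to a single site gives $P(x_q)=\tfrac12(1+(-1)^{x_q}\langle\hat Z_q\rangle)$. Comparing the two expressions yields $P_{U(\bm\theta)}(x_a,x_b)=P_{U(\bm\theta)}(x_a)P_{U(\bm\theta)}(x_b)$.

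The main obstacle is making the support claim for $\tilde Z_q$ rigorous, since Lemma~\ref{lem:light-cone} is phrased informally in terms of a "support of the backward light cone". We would state the support claim explicitly as an induction over gates, processed in reverse order. The inductive invariant is that after conjugating by the last $k$ gates, the operator acts as identity outside an interval $I_k$. Each gate either cancels or enlarges $I_k$ by at most one site per side, and single-qubit gates leave $I_k$ unchanged. The rest of the argument is direct. If the $\rho_i$ are mixed, nothing changes, because the argument only uses the tensor-product form of $\rho_{\rm in}$.
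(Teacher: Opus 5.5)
Your proof is correct and follows essentially the same route as the paper. You Heisenberg-evolve $\hat Z_a$ and $\hat Z_b$, use Lemma~\ref{lem:light-cone} to get disjoint supports, split $\Tr[\rho_{\rm in}\tilde Z_a\tilde Z_b]$ via the product structure to obtain $\gamma=\alpha\beta$, and then factor the expression from Lemma~\ref{lem:two_qubit_z_marginal}.

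One small point in your rigour sketch: the one-site-per-side growth must be counted per sublayer, whose gates have disjoint supports, rather than per gate. Counted per gate, your induction only bounds the support growth by the total number of gates, not by the $2D$ of Lemma~\ref{lem:light-cone}.
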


\begin{proof}
Define the Heisenberg-evolved observables
\begin{equation}
    \hat O_a
    :=
    U(\bm\theta)^\dagger
    (\hat Z_a\otimes \id_{[1,N]\setminus \{a\}})
    U(\bm\theta),
    \qquad
    \hat O_b
    :=
    U(\bm\theta)^\dagger
    (\hat Z_b\otimes \id_{[1,N]\setminus \{b\}})
    U(\bm\theta).
\end{equation}
From Lemma~\ref{lem:light-cone}, for all depth $D<\frac{\dist(a,b)}{4}$ the backward light cones of $a$ and $b$ are disjoint. Hence
$\hat O_a$ and $\hat O_b$ are supported on disjoint sets of input qubits. Therefore, using the product structure of $\rho_{\rm in}$,

\begin{align}
    \Tr[\rho_{ab}(\hat Z_a\otimes\hat Z_b)]
    &=
    \Tr[\rho_{\rm in}\hat O_a\hat O_b] \\
    &=
    \Tr[\rho_{\rm in}\hat O_a]\Tr[\rho_{\rm in}\hat O_b] \\
    &=
    \Tr[\rho_{ab}(\hat Z_a\otimes\id)]
    \Tr[\rho_{ab}(\id\otimes\hat Z_b)] .
\end{align}

Thus
\begin{align}
    \Cov_{\rho_{ab}}(\hat Z_a,\hat Z_b)
    =
    \Tr[\rho_{ab}(\hat Z_a\otimes\hat Z_b)]
    -
    \Tr[\rho_{ab}(\hat Z_a\otimes\id)]
    \Tr[\rho_{ab}(\id\otimes\hat Z_b)]
    =
    0.
\end{align}

Now apply Lemma~\ref{lem:two_qubit_z_marginal}. Write
\begin{align}
    \alpha
    :=
    \Tr[\rho_{ab}(\hat Z_a\otimes\id)],
    \qquad
    \beta
    :=
    \Tr[\rho_{ab}(\id\otimes\hat Z_b)],
    \qquad
    \gamma
    :=
    \Tr[\rho_{ab}(\hat Z_a\otimes\hat Z_b)].
\end{align}

The zero-covariance condition gives $\gamma=\alpha\beta$. Therefore,

\begin{align}
    P_{U(\bm\theta)}(x_a,x_b)
    &=
    \frac14
    \left(
        1
        +
        (-1)^{x_a}\alpha
        +
        (-1)^{x_b}\beta
        +
        (-1)^{x_a+x_b}\gamma
    \right) \\
    &=
    \frac14
    \left(
        1
        +
        (-1)^{x_a}\alpha
        +
        (-1)^{x_b}\beta
        +
        (-1)^{x_a+x_b}\alpha\beta
    \right) \\
    &=
    \frac12
    \left(
        1+(-1)^{x_a}\alpha
    \right)
    \frac12
    \left(
        1+(-1)^{x_b}\beta
    \right) \\
    &=
    P_{U(\bm\theta)}(x_a)P_{U(\bm\theta)}(x_b).
\end{align}

This proves the factorization of the two-site computational-basis marginal.
\end{proof}

\begin{lemma}[Marginal obstruction implies full-distribution obstruction]
\label{lem:marginal_obstruction_classical}
Let $P$ be a target probability distribution on $\{0,1\}^N$, and let
\begin{equation}
    \mathcal Q
    :=
    \{Q_\theta:\theta\in\Theta\}
\end{equation}
be a model class of probability distributions on $\{0,1\}^N$.

For any subset $M\subseteq [1,N]$, denote the marginal of $P$ on $M$ by $P_{\mathcal{M}}$, and the corresponding marginal of $Q_\theta$ by $Q_{\theta,M}$. 
If there exists a subset $M\subseteq [1,N]$ such that
\begin{equation}
    P_{\mathcal{M}}
    \notin
    \{Q_{\theta,M}:\theta\in\Theta\},\qquad
    \mathrm{then}\qquad
    P
    \notin
    \mathcal Q.
\end{equation}
Equivalently, if no model distribution has the correct marginal on $M$, then no model distribution can equal the full target distribution.
\end{lemma}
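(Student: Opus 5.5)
This is a direct contrapositive argument based on marginalization. Suppose $P\in\mathcal Q$. Then there is some $\theta^\star\in\Theta$ with $Q_{\theta^\star}=P$ pointwise on $\{0,1\}^N$. I will show that this forces $P_{\mathcal M}=Q_{\theta^\star,M}$, which contradicts the hypothesis. So the proof has three steps.

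\emph{Step 1: write the marginalization map.} For $x_M\in\{0,1\}^{|M|}$,
\begin{equation*}
P_{\mathcal M}(x_M)=\sum_{x_{[1,N]\setminus M}}P(x_M,x_{[1,N]\setminus M}),
\end{equation*}
and the same formula holds for each $Q_\theta$. This map is a well-defined function $\mu_M$ from distributions on $\{0,1\}^N$ to distributions on $\{0,1\}^{|M|}$. It depends only on the distribution, not on how the distribution was parametrized.

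\emph{Step 2: equal distributions have equal marginals.} If $Q_{\theta^\star}=P$, then the two sums above agree term by term. Hence $\mu_M(P)=\mu_M(Q_{\theta^\star})$, that is, $P_{\mathcal M}=Q_{\theta^\star,M}$, and therefore $P_{\mathcal M}\in\{Q_{\theta,M}:\theta\in\Theta\}$.

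\emph{Step 3: conclude.} Step 2 contradicts the assumption that $P_{\mathcal M}\notin\{Q_{\theta,M}:\theta\in\Theta\}$. Hence $P\notin\mathcal Q$. The stated equivalent formulation is simply this implication read in words.

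There is no real obstacle here. The only point needing care is notation: $Q_{\theta,M}$ must mean $\mu_M(Q_\theta)$, the marginal of the full model distribution, so that the marginalization map is applied identically to target and model. This matters for the later use of the lemma. There one takes $M=\{a,b\}$, and Lemma~\ref{lem:disjoint_lightcone_factorized_marginal} shows that every unitary marginal on $M$ factorizes, while the channel's marginal will be shown not to factorize.
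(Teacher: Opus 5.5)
Your proof is correct and is essentially the paper's own one-line argument: if $P=Q_{\theta}$ for some $\theta\in\Theta$, marginalizing both sides gives $P_{\mathcal M}=Q_{\theta,M}$, which contradicts the hypothesis. Your extra steps only spell out the marginalization map explicitly and add nothing the argument requires.
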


\begin{proof}
If $P=Q_\theta$, then marginalizing both sides gives $P_{\mathcal{M}}=Q_{\theta,M}$, contradicting the assumption.
\end{proof}

\begin{lemma}[Quantitative marginal obstruction]
\label{lemma:marginal_obstruction}
Let $d$ be any statistical distance or divergence that is contractive under marginalization, namely
\begin{equation}
    d(P_{\mathcal{M}},Q_{\mathcal{M}})
    \leq
    d(P,Q)
\end{equation}
for all probability distributions $P,Q$ on $\{0,1\}^N$ and all subsets $M\subseteq [1,N]$. Then, for every target distribution $P$ and every model class
$ \mathcal Q
    :=
    \{Q_\theta:\theta\in\Theta\},$
one has
\begin{equation}
    \inf_{\theta\in\Theta}
    d(P,Q_\theta)
    \geq
    \inf_{\theta\in\Theta}
    d(P_{\mathcal{M}},Q_{\theta,M})
\end{equation}
for every $M\subseteq [1,N]$. In particular, if for some $M\subseteq [1,N]$,
\begin{equation}
    \inf_{\theta\in\Theta}
    d(P_{\mathcal{M}},Q_{\theta,M})
    \geq
    \varepsilon
    >
    0,\qquad \mathrm{then} \qquad \inf_{\theta\in\Theta}
    d(P,Q_\theta)
    \geq
    \varepsilon.
\end{equation}
Thus, a nonzero obstruction at the marginal level implies a nonzero obstruction at the full-distribution level.
\end{lemma}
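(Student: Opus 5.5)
The statement to prove is the last one, Lemma~\ref{lemma:marginal_obstruction}. The whole argument rests on one inequality: a distance that contracts under marginalization bounds every marginal discrepancy from above by the full discrepancy.

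I will fix an arbitrary subset $M\subseteq[1,N]$ and an arbitrary parameter $\theta\in\Theta$. I then apply the contractivity hypothesis to the pair $(P,Q_\theta)$, which is legitimate because both are probability distributions on $\{0,1\}^N$. This gives $d(P_{\mathcal M},Q_{\theta,M})\leq d(P,Q_\theta)$.

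Next I lower-bound the left-hand side by $\inf_{\theta'\in\Theta} d(P_{\mathcal M},Q_{\theta',M})$, which holds simply because an infimum sits below each of its members. So this infimum is a lower bound for $d(P,Q_\theta)$ for every $\theta$. Taking the infimum over $\theta$ on the right then yields $\inf_\theta d(P,Q_\theta)\geq \inf_\theta d(P_{\mathcal M},Q_{\theta,M})$, by the defining property of the infimum as the greatest lower bound.

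The ``in particular'' clause follows by chaining this with the assumed bound $\inf_\theta d(P_{\mathcal M},Q_{\theta,M})\geq\varepsilon$. I will also note that the argument never uses $\varepsilon>0$; that condition only gives the conclusion its interpretation as a genuine obstruction.

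There is no real obstacle, only two points of bookkeeping. First, $d$ may take the value $+\infty$, as KL divergence can; the infima are then taken in the extended reals and every inequality above remains valid. Second, the notation $P_{\mathcal M}$ and $Q_{\theta,M}$ must denote the same marginalization map applied to both distributions, which is how the paper defines them. For later use I would add a short remark that total variation satisfies the hypothesis: $|\sum_{y} (P-Q)(x_M,y)|\leq \sum_y |(P-Q)(x_M,y)|$ by the triangle inequality. Summing over $x_M$ and halving gives $\TV(P_{\mathcal M},Q_{\mathcal M})\leq\TV(P,Q)$, so the lemma applies directly to the $\TV$ bound in Theorem~\ref{thm:informal_constant_depth_separation}.
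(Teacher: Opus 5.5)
Your proof is correct and is essentially the paper's argument: apply the contractivity inequality $d(P_{\mathcal M},Q_{\theta,M})\leq d(P,Q_\theta)$ for each fixed $\theta$, then take the infimum over $\theta$. Your side remarks are accurate additions that the paper leaves implicit: infinite values are handled in the extended reals, and the triangle inequality shows that total variation is contractive under marginalization.
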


\begin{proof}
The claim follows by contractivity under marginalization:
\begin{equation}
    d(P,Q_\theta)\ge d(P_{\mathcal{M}},Q_{\theta,{\mathcal{M}}})
\end{equation}
for every $\theta$, and then taking the infimum over $\theta$. 
\end{proof}
For all analytical bounds in this paper, we mostly use the total variation (TV) distance as the divergence metric.

\begin{lemma}[Diagonal observable gap implies total-variation gap]
\label{lem:diagonal_observable_tv_gap}
Let $\rho$ and $\sigma$ be two $N$-qubit states, and let $P_\rho$ and $P_\sigma$ be their Born distributions obtained by measuring all qubits in the computational basis. Let
\begin{equation}
    A
    =
    \sum_{x\in\{0,1\}^N} a_x \Pi_x,
    \qquad
    \Pi_x := |x\rangle\langle x|
\end{equation}
be a diagonal observable satisfying $\|A\|_\infty\leq 1$. Then
\begin{equation}
    \left|
        \Tr[A\rho] - \Tr[A\sigma]
    \right|
    \leq
    2\TV(P_\rho,P_\sigma).
\end{equation}
Equivalently,
\begin{equation}
    \TV(P_\rho,P_\sigma)
    \geq
    \frac12
    \left|
        \Tr[A\rho] - \Tr[A\sigma]
    \right|.
\end{equation}
\end{lemma}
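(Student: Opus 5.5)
The plan is to write the expectation gap as a sum over computational-basis outcomes and then use $|a_x|\leq 1$. Since $A$ is diagonal in the computational basis, we have $\Tr[A\rho]=\sum_x a_x\Tr[\Pi_x\rho]=\sum_x a_x P_\rho(x)$, and likewise $\Tr[A\sigma]=\sum_x a_x P_\sigma(x)$. Here we use the Born rule $P_\rho(x)=\Tr[\Pi_x\rho]$, which matches the definitions of the output distributions earlier in the paper. So the quantum expectation gap reduces to a purely classical quantity,
\begin{equation}
    \Tr[A\rho]-\Tr[A\sigma]=\sum_{x\in\{0,1\}^N} a_x\bigl(P_\rho(x)-P_\sigma(x)\bigr).
\end{equation}

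Next we bound the eigenvalues. Because $A$ is a diagonal Hermitian observable, its eigenvalues are exactly the $a_x$. The assumption $\|A\|_\infty\leq 1$ therefore gives $|a_x|\leq 1$ for every $x$. Applying the triangle inequality termwise,
\begin{equation}
    \bigl|\Tr[A\rho]-\Tr[A\sigma]\bigr|\leq \sum_x |a_x|\,|P_\rho(x)-P_\sigma(x)|\leq \sum_x |P_\rho(x)-P_\sigma(x)| = 2\TV(P_\rho,P_\sigma),
\end{equation}
where the last equality is the definition of total variation distance given above. Rearranging gives the equivalent form $\TV(P_\rho,P_\sigma)\geq \tfrac12|\Tr[A\rho]-\Tr[A\sigma]|$.

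There is no real obstacle. The only point requiring care is to note that $\|A\|_\infty$ for a diagonal operator equals $\max_x|a_x|$, which holds by the spectral theorem.

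This lemma is what the separation argument needs. For example, taking $A=\hat Z_a\otimes\hat Z_b$ or $A=\hat Z_a$ (both diagonal with norm $1$) turns a gap in $Z$-covariance-type quantities, such as the nonzero covariance produced by the channel versus the zero covariance forced by Lemma~\ref{lem:disjoint_lightcone_factorized_marginal}, into a total-variation lower bound.
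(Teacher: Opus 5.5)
Your proof is correct and matches the paper's argument step for step. Both rewrite the gap as $\sum_x a_x\bigl(P_\rho(x)-P_\sigma(x)\bigr)$, apply the triangle inequality, and use $|a_x|\leq\|A\|_\infty\leq 1$ to bound it by $\sum_x|P_\rho(x)-P_\sigma(x)|=2\TV(P_\rho,P_\sigma)$.
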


\begin{proof}
Since $A$ is diagonal in the measurement basis,

\begin{align}
    \left|
        \Tr[A\rho] - \Tr[A\sigma]
    \right|
    &=
    \left|
        \sum_x a_x\bigl(P_\rho(x)-P_\sigma(x)\bigr)
    \right| \\
    &\leq
    \sum_x |a_x|\,|P_\rho(x)-P_\sigma(x)| \\
    &\leq
    \sum_x |P_\rho(x)-P_\sigma(x)| \\
    &=
    2\TV(P_\rho,P_\sigma).
\end{align}

\end{proof}

\begin{lemma}[Covariance gap implies total-variation gap]
\label{lem:covariance_tv_gap}
Let $\rho$ and $\sigma$ be two arbitrary $N$-qubit states, and let $\rho_{ab}$ and $\sigma_{ab}$ denote their respective reduced states on qubits $a$ and $b$. Let $P_{\rho_{ab}}$ and $P_{\sigma_{ab}}$ be the corresponding joint distributions obtained by measuring qubits $a$ and $b$ in the computational basis. Define
\begin{align}
    \Cov_{\rho}(\hat Z_a,\hat Z_b)=\Cov_{\rho_{ab}}(\hat Z_a,\hat Z_b)
    :=
    \Tr[\rho_{ab}(\hat Z_a\otimes \hat Z_b)]
    -
    \Tr[\rho_{ab}(\hat Z_a\otimes \id)]
    \Tr[\rho_{ab}(\id\otimes \hat Z_b)] ,
\end{align}
and similarly for $\sigma_{ab}$. Then
\begin{align}
    \TV(P_{\rho_{ab}},P_{\sigma_{ab}})
    \geq
    \frac{1}{6}
    \left|
        \Cov_{\rho_{ab}}(\hat Z_a,\hat Z_b)
        -
        \Cov_{\sigma_{ab}}(\hat Z_a,\hat Z_b)
    \right|.
\end{align}
\end{lemma}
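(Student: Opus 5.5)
Write $\alpha_\rho:=\langle\hat Z_a\rangle_\rho$, $\beta_\rho:=\langle\hat Z_b\rangle_\rho$ and $\gamma_\rho:=\langle\hat Z_a\otimes\hat Z_b\rangle_\rho$, and similarly $\alpha_\sigma,\beta_\sigma,\gamma_\sigma$. The plan is to bound the covariance difference by a sum of three diagonal-observable differences, each of which Lemma~\ref{lem:diagonal_observable_tv_gap}, applied on the two-qubit marginal, controls by $2\TV(P_{\rho_{ab}},P_{\sigma_{ab}})$. The three observables $\hat Z_a\otimes\id$, $\id\otimes\hat Z_b$ and $\hat Z_a\otimes\hat Z_b$ are diagonal in the computational basis of qubits $a,b$ and have operator norm $1$. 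Lemma~\ref{lem:diagonal_observable_tv_gap} is stated for $N$ qubits, but its proof uses only the measurement distributions, so it applies verbatim to the two-qubit states $\rho_{ab},\sigma_{ab}$ with Born distributions $P_{\rho_{ab}},P_{\sigma_{ab}}$. Writing $T:=\TV(P_{\rho_{ab}},P_{\sigma_{ab}})$, this gives
\[
|\alpha_\rho-\alpha_\sigma|\le 2T,\qquad |\beta_\rho-\beta_\sigma|\le 2T,\qquad |\gamma_\rho-\gamma_\sigma|\le 2T .
\]

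Next I would split the covariance difference with the standard telescoping trick for the product term:
\[
\Cov_{\rho_{ab}}-\Cov_{\sigma_{ab}}=(\gamma_\rho-\gamma_\sigma)-\alpha_\rho(\beta_\rho-\beta_\sigma)-\beta_\sigma(\alpha_\rho-\alpha_\sigma).
\]
Since $|\alpha_\rho|\le 1$ and $|\beta_\sigma|\le 1$, the triangle inequality yields
\[
\left|\Cov_{\rho_{ab}}-\Cov_{\sigma_{ab}}\right|\le |\gamma_\rho-\gamma_\sigma|+|\beta_\rho-\beta_\sigma|+|\alpha_\rho-\alpha_\sigma|\le 6\,T,
\]
which is exactly the claimed inequality after dividing by $6$.

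There is no real obstacle here; the only point needing care is to check that the single-qubit expectations appearing as coefficients are bounded by $1$ in absolute value, which holds because $\hat Z$ has spectrum $\{\pm1\}$. The constant $1/6$ is not tight (one could sharpen it using Lemma~\ref{lem:two_qubit_z_marginal}), but the simple bound is all the statement asks for.
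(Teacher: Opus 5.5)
Your proof is correct and follows the paper's proof essentially verbatim. The paper also applies Lemma~\ref{lem:diagonal_observable_tv_gap} to the three diagonal observables $\hat Z_a\otimes\id$, $\id\otimes\hat Z_b$, $\hat Z_a\otimes\hat Z_b$, then uses the same telescoping split of the product term with $|\alpha_\rho|,|\beta_\sigma|\le 1$ to obtain the constant $6$. Your explicit remark that the lemma must be applied to the reduced two-qubit states, so that the bound is in terms of $\TV(P_{\rho_{ab}},P_{\sigma_{ab}})$, makes precise a step the paper leaves implicit.
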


\begin{proof}
Set
\begin{align}
    \alpha_\rho:=\Tr[\rho_{ab}(\hat Z_a\otimes\id)],
    \qquad
    \beta_\rho:=\Tr[\rho_{ab}(\id\otimes\hat Z_b)],
    \qquad
    \gamma_\rho:=\Tr[\rho_{ab}(\hat Z_a\otimes\hat Z_b)],
\end{align}
and define $\alpha_\sigma,\beta_\sigma,\gamma_\sigma$ analogously. By
Lemma~\ref{lem:diagonal_observable_tv_gap}, applied to
$\hat Z_a\otimes\id$, $\id\otimes\hat Z_b$, and
$\hat Z_a\otimes\hat Z_b$, we have
\begin{align}
    |\alpha_\rho-\alpha_\sigma|
    \leq 2\TV(P_{\rho_{ab}},P_{\sigma_{ab}}),
\qquad
    |\beta_\rho-\beta_\sigma|
    \leq 2\TV(P_{\rho_{ab}},P_{\sigma_{ab}}),
\qquad
    |\gamma_\rho-\gamma_\sigma|
    \leq 2\TV(P_{\rho_{ab}},P_{\sigma_{ab}}).
\end{align}
Since $|\alpha_\rho|,|\beta_\sigma|\leq 1$, we get

\begin{align}
&
\left|
    \Cov_{\rho_{ab}}(\hat Z_a,\hat Z_b)
    -
    \Cov_{\sigma_{ab}}(\hat Z_a,\hat Z_b)
\right|
\\
&=
\left|
    \gamma_\rho-\alpha_\rho\beta_\rho
    -
    \gamma_\sigma+\alpha_\sigma\beta_\sigma
\right|
\\
&\leq
|\gamma_\rho-\gamma_\sigma|
+
|\alpha_\rho\beta_\rho-\alpha_\sigma\beta_\sigma|
\\
&\leq
|\gamma_\rho-\gamma_\sigma|
+
|\alpha_\rho|\,|\beta_\rho-\beta_\sigma|
+
|\beta_\sigma|\,|\alpha_\rho-\alpha_\sigma|
\\
&\leq
6\TV(P_{\rho_{ab}},P_{\sigma_{ab}}).
\end{align}

Rearranging proves the claim.
\end{proof}
\section{Proof of Theorem \ref{thm:informal_constant_depth_separation}}\label{app:proof_thm_constant_depth_separation}
\begin{theorem}[\textit{Formal}: shallow-depth channel--unitary separation]
\label{thm:constant_depth_channel_separation}

Consider a one-dimensional nearest-neighbor brickwall circuit architecture on
$N\geq 6$ qubits, together with the minimal correlated channel model of
Def.~\ref{def:min_channel}, in which a single stochastic Pauli string
$P_{\mathcal{M}}^s=P^s_a\otimes P^s_b$, $s\in\{0,1\}$, acts on two distant qubits $a,b\in \mathcal{M}$ that are separated
by distance $\dist(a,b)$ (see Def.~\ref{def:graph_dist}). Also consider circuit depths $1<D< \frac{\operatorname{dist}(a,b)}{4}$. Let
$\mathcal P_{\rm unit}(D)$ and $\mathcal P_{\rm ch}(D)$ denote the family of output distributions defined in Eqs.~\eqref{eq:unit_dist_family} and \eqref{eq:ch_dist_family}, respectively.

Let $\hat Z_a$ and $\hat Z_b$ denote the final local Pauli observables 
at output sites $a$ and
$b$. Suppose that there exists a parameter choice $(\bm\theta^\star,p^\star)$ of the channel model with
$p^\star\in(0,1)$ such that the two branches, $\rho_{(0)}$ and $\rho_{(1)}$ (Eq.~\eqref{eq:min_channel_def_split}), induce different local
quantum responses at both output sites:
\begin{equation}\label{eq:main_assumption_theo_1}
    \Delta_a
    :=
    m_a(1)-m_a(0)
    \neq 0,
    \qquad
    \Delta_b
    :=
    m_b(1)-m_b(0)
    \neq 0, ~~~~\mathrm{where}~~~m_a(s)
    :=
    \Tr[\hat Z_a\rho_{(s)}],
    ~~
    m_b(s)
    :=
    \Tr[\hat Z_b\rho_{(s)}].
\end{equation}

Let $\mathcal{E}^\star
    :=
    \mathcal{E}_{(\bm\theta^\star,p^\star)}$, and $Q^\star
    :=
    P_{\mathcal E_{(\bm\theta^\star,p^\star)}}$
be the Born distribution obtained by measuring $\mathcal{E}^\star$ in the
computational basis. Then for any depth $1<D< \frac{\operatorname{dist}(a,b)}{4}$
\begin{equation}\label{eq:equation_b2}
    % Q^\star
    % \in
    % \mathcal P_{\rm ch}(D),\qquad
% \mathrm{but}\qquad
    Q^\star
    \notin
    \mathcal P_{\rm unit}(D)
\end{equation}
% where $v$ is the maximum light-cone growth per circuit layer. For brickwall circuits with nearest-neighbor two-qubit gates $v=2$.

Moreover, there is a constant $\delta^\star=
    \frac{1}{6}
    p^\star(1-p^\star)
    |\Delta_a\Delta_b|
    >0$, such that the infimum of the total variation distance between $Q^*$ and the unitary distribution $P_{U(\bm \theta)}$ is lower bounded by
\begin{equation}
    \inf_{\bm\theta}
    \TV\!\left(
        Q^\star,
        P_{U(\bm\theta)}
    \right)
    \geq
    \delta^\star
\end{equation}
for every such depth $D$. In particular, if $\operatorname{dist}(a,b)=\Theta(N)$, then any purely unitary
nearest-neighbor model that represents $Q^\star$ requires linear depth, $D=\Omega(N)$.

\end{theorem}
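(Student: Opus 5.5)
Our plan is to compute the $Z$-covariance of $Q^\star$ on sites $a,b$, show it is nonzero and equals $p^\star(1-p^\star)\Delta_a\Delta_b$, and then invoke the earlier lemmas. By Lemma~\ref{lem:disjoint_lightcone_factorized_marginal}, every depth-$D$ unitary distribution with $D<\dist(a,b)/4$ has zero covariance. Lemma~\ref{lem:covariance_tv_gap} combined with Lemma~\ref{lemma:marginal_obstruction} then converts the covariance gap into a total-variation gap.

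\textbf{Step 1 (channel covariance).} Write $\rho^\star=(1-p^\star)\rho_{(0)}+p^\star\rho_{(1)}$ and set $c(s):=\Tr[(\hat Z_a\otimes\hat Z_b)\rho_{(s)}]$. By linearity,
\begin{equation}
\langle \hat Z_a\rangle=(1-p^\star)m_a(0)+p^\star m_a(1),
\end{equation}
and similarly for $\hat Z_b$ and $\hat Z_a\hat Z_b$. Each branch $\rho_{(s)}=U^{(s)}\rho_{\rm in}U^{(s)\dagger}$ is generated by a depth-$D$ local unitary circuit: the inserted Pauli string consists only of single-qubit gates and does not enlarge light cones. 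Hence Lemma~\ref{lem:disjoint_lightcone_factorized_marginal} applies branchwise, giving $c(s)=m_a(s)m_b(s)$. Here we use that the channel is realized at the same shallow depth $D$ as the unitary class. The law of total covariance for a two-component mixture then gives
\begin{equation}
\Cov_{\rho^\star}(\hat Z_a,\hat Z_b)=\sum_s \Pr(s)\,\Cov_{\rho_{(s)}}+p^\star(1-p^\star)\bigl(m_a(1)-m_a(0)\bigr)\bigl(m_b(1)-m_b(0)\bigr).
\end{equation}
The first term vanishes, so the covariance equals $p^\star(1-p^\star)\Delta_a\Delta_b$. This is nonzero because $p^\star\in(0,1)$ and $\Delta_a,\Delta_b\neq 0$ by Eq.~\eqref{eq:main_assumption_theo_1}.

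\textbf{Step 2 (unitary side and the bound).} For any $\bm\theta$ at depth $D<\dist(a,b)/4$, Lemma~\ref{lem:disjoint_lightcone_factorized_marginal} gives $\Cov_{U(\bm\theta)}(\hat Z_a,\hat Z_b)=0$. Applying Lemma~\ref{lem:covariance_tv_gap} to the $ab$-marginals yields
\begin{equation}
\TV(Q^\star_{ab},P_{U(\bm\theta),ab})\ge \tfrac16 p^\star(1-p^\star)|\Delta_a\Delta_b|=\delta^\star .
\end{equation}
Total variation is contractive under marginalization: summing absolute differences over the other coordinates can only increase the sum, by the triangle inequality. Lemma~\ref{lemma:marginal_obstruction} therefore lifts this bound to the full distributions. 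Taking the infimum over $\bm\theta$ gives the quantitative statement, and $\delta^\star>0$ implies $Q^\star\notin\mathcal P_{\rm unit}(D)$. (Alternatively, Lemma~\ref{lem:marginal_obstruction_classical} gives non-membership directly.) The bound depends only on $p^\star$ and the branch responses, not on $D$, so it holds uniformly for every admissible depth.

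\textbf{Step 3 (linear depth).} Contrapositively, any unitary nearest-neighbor model that reproduces $Q^\star$ must have $D\ge \dist(a,b)/4$. When $\dist(a,b)=\Theta(N)$, this gives $D=\Omega(N)$.

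\textbf{Main obstacle.} The delicate point is Step 1's claim that each branch has zero covariance. This needs the channel's own depth to lie below $\dist(a,b)/4$, so the theorem should be read with $\bm\theta^\star$ at the same shallow depth $D$. It also needs the Pauli insertion and any end-point correction to act locally. If the channel's depth were instead fixed independently, one would still get $\Cov\neq 0$ by the explicit mixture formula, provided the branch covariances do not cancel the cross term. I would first secure the same-depth reading, since that is how $\mathcal P_{\rm ch}(D)$ is defined.
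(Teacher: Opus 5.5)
Your proposal is correct and follows essentially the same route as the paper. You factorize $\langle \hat Z_a\hat Z_b\rangle$ branchwise to get $\Cov_{\mathcal E^\star}(\hat Z_a,\hat Z_b)=p^\star(1-p^\star)\Delta_a\Delta_b$ (the paper expands the mixture directly instead of citing the law of total covariance), and you then compare this with the vanishing covariance of shallow unitaries via Lemma~\ref{lem:covariance_tv_gap} and contractivity of total variation, exactly as the paper does.

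The requirement you flag as the main obstacle is the one the paper's proof also relies on: the branchwise factorization needs the channel itself to lie below the light-cone-overlap depth. The paper assumes this implicitly when it states that the light-cone regions are disjoint within each fixed branch, and it uses the same symbol $D$ for the channel depth and the unitary depth.
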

%%%%%%%%%%
In Corollary~\ref{cor:cluster_angle_flip_endpoint_separation}, we will give explicit shallow-depth MBQC constructions of $Q^*$ satisfying Eq.~\eqref{eq:equation_b2} on a cluster state. There, we will show that already depth-$2$ in the channel model is sufficient to obtain such distributions $Q^*$.

%%%%%%%%%%
\begin{proof}
We prove the statement by exhibiting a two-qubit marginal that is generated by the correlated channel at some fixed shallow depth but cannot be generated by any unitary circuit whose backward light cones around the two relevant sites are disjoint.
%%%%%%%%%%%%

Write the branches of the channel model $\mathcal E^*$ as $U^{(s)}(\theta)=U_{D}(\theta_D)\cdots U_{l+1}(\theta_{l+1})P^s_{\mathcal{M}}\cdots U_1(\theta_1)$. Here, $P^s_{\mathcal{M}}$ is the stochastic Pauli string, acting on qubits $a$ and $b$, inserted with probability $p^*$ at an intermediate layer (compare Def.~\ref{def:min_channel}), and $D$ denotes the depth of $\mathcal E^*$. Let $s\in\{0,1\}$, with
\begin{equation}
    \Pr(s=1)=p^
    \star,
    \qquad
    \Pr(s=0)=1-p^
    \star.
\end{equation}
such that the channel model can be written as 
\begin{align}
    \mathcal{E}_{\theta^*,p^*}(\rho_{\rm in})&=\mathcal{E}^*=\Pr(s=0)U^{(0)}(\bm \theta)\rho_{\rm in}U^{(0)}(\bm \theta)^{\dagger}+\Pr(s=1)U^{(1)}(\bm \theta)\rho_{\rm in}U^{(1)}(\bm \theta)^{\dagger} \\
    &= (1-p^*)\rho_{(0)}+p^*\rho_{(1)}
\end{align}

Because the branch variable $s$ is classical, expectation values in the averaged channel are obtained by averaging the branch-conditioned expectation values. For the one-site observable $\hat{Z}_a$

\begin{equation}
    \Tr[\hat{Z}_a\mathcal{E}^*]=\langle\hat{Z}_a\rangle_{\mathcal{E}^*}=(1-p^*)\Tr[\hat{Z}_a\rho_{(0)}]+p^*\Tr[\hat{Z}_a\rho_{(1)}]
\end{equation}
and similarly
\begin{equation}
    \Tr[\hat{Z}_b\mathcal{E}^*]=\langle\hat{Z}_b\rangle_{\mathcal{E}^*}=(1-p^*)\Tr[\hat{Z}_b\rho_{(0)}]+p^*\Tr[\hat{Z}_b\rho_{(1)}]
\end{equation}
Define the branch-dependent local biases for both sites as
\begin{equation}
    m_a(s)
    :=
    \Tr[\hat{Z}_a\rho_{(s)}]=\langle\hat{Z}_a\rangle_s,
    \qquad
    m_b(s)
    :=
    \Tr[\hat{Z}_b\rho_{(s)}]=\langle\hat{Z}_b\rangle_s .
\end{equation}
Then
\begin{align}
    \langle\hat{Z}_a\rangle_{\mathcal{E}^*}=&(1-p^*)m_a(0)+p^*m_a(1),
    \qquad\langle\hat{Z}_b\rangle_{\mathcal{E}^*}=(1-p^*)m_b(0)+p^*m_b(1)
\end{align}

Regarding $\langle\hat{Z}_a\hat{Z}_b\rangle_{\mathcal{E}^*}$, within each fixed branch, the two relevant light-cone regions are disjoint at the insertion layer. 
Since the state on these regions factorizes, the conditional two-point expectation factorizes:
\begin{equation}
    \langle \hat{Z}_a\hat{Z}_b\rangle_s
    =
    \langle \hat{Z}_a\rangle_s
    \langle \hat{Z}_b\rangle_s
    =
    m_a(s)m_b(s).
\end{equation}
Therefore
\begin{equation}
\langle\hat{Z}_a\hat{Z}_b\rangle_{\mathcal{E}^*}=(1-p^*)m_a(0)m_b(0)+p^*m_a(1)m_b(1)
\end{equation}

The covariance of the two-site marginal of $Q^\star$ is hence
\begin{align}
    \Cov_{\mathcal{E}^*}(\hat Z_a,\hat Z_b)
    &=
    \langle \hat{Z}_a\hat{Z}_b\rangle_{\mathcal{E}^*}
    -
    \langle \hat{Z}_a\rangle_{\mathcal{E}^*}
    \langle \hat{Z}_b\rangle_{\mathcal{E}^*}
    \\
    &=
    (1-p^\star)m_a(0)m_b(0)
    +
    p^\star m_a(1)m_b(1)
    \\
    &\quad
    -
    \Bigl((1-p^\star)m_a(0)+p^\star m_a(1)\Bigr)
    \Bigl((1-p^\star)m_b(0)+p^\star m_b(1)\Bigr)
    \\
    &=
    p^\star(1-p^\star)
    \bigl[m_a(1)-m_a(0)\bigr]
    \bigl[m_b(1)-m_b(0)\bigr].
\end{align}
Thus the averaged channel has nonzero two-site covariance whenever
\begin{equation}
    \bigl[m_a(1)-m_a(0)\bigr]
    \bigl[m_b(1)-m_b(0)\bigr]
    \neq 0
\end{equation}
and $p^\star\in(0,1)$. We assume that the correlated branch changes both local responses, namely
\begin{equation}\label{eq:main_assumption}
    \Delta_a:=m_a(1)-m_a(0)\neq 0,
    \qquad
    \Delta_b:=m_b(1)-m_b(0)\neq 0 .
\end{equation}
% This assumption is satisfied whenever the inserted $Z^s$ byproduct is propagated through a later parametrized gate that anticommutes with $Z$ and the corresponding effective state has a nonzero response in the conjugate direction. The cluster-state construction in Corollary~\ref{cor:cluster_angle_flip_endpoint_separation} gives an explicit realization.

Now consider any purely unitary depth-$D$ circuit from the corresponding local nearest neighbor architecture with
\begin{equation}
    D<\frac{\operatorname{dist}(a,b)}{4}.
\end{equation}

By the light-cone bound, the backward light cones of the two measured sites $a$ and $b$ are disjoint (Lemma~\ref{lem:light-cone}).  Since the input state is a product state, according to Lemma~\ref{lem:disjoint_lightcone_factorized_marginal}
\begin{equation}
    \Cov_{U(\bm{\theta})}(\hat Z_a,\hat Z_b)=0
\end{equation}
for every parameter choice $\theta$ and every such depth $D$. Thus the joint marginal distribution on $(a,b)$ for the unitary model is a product distribution (see Lemma~\ref{lem:disjoint_lightcone_factorized_marginal})
\begin{equation}
    P_{U(\bm \theta)}(x_a,x_b)=P_{ab}=P_{U(\bm \theta)}(x_a)P_{U(\bm \theta)}(x_b),\qquad x_i\in\{0,1\}
\end{equation}
whereas under the condition $\Delta_a,\Delta_b\neq 0$, the marginal over $(a,b)$ for the channel distribution $Q^*$ can not be written as a product distribution for any $q(x_a)$ and $q(x_b)$ as below
\begin{equation}
    Q^*(x_a,x_b)=Q^*_{ab}\neq q(x_a)q(x_b),\qquad x_i\in\{0,1\}
\end{equation}
Therefore no such unitary distribution can have the same two-site marginal as $Q^\star$.  By the marginal-obstruction Lemma~\ref{lem:marginal_obstruction_classical}, no unitary distribution can equal the full distribution $Q^\star$.  Consequently,
\begin{equation}
    Q^\star\notin \mathcal P_{\rm unit}(D)
\end{equation}
for every
\begin{equation}
    D<\frac{\operatorname{dist}(a,b)}{4}.
\end{equation}

It remains to justify the quantitative total-variation gap.  Let $Q^\star_{ab}$ be the two-site marginal of $Q^\star$, and let $P_{ab}$ be any factorized two-site marginal generated by a unitary circuit below light-cone overlap, i.e., $D<\frac{\dist(a,b)}{4}$.  From Lemma~\ref{lem:covariance_tv_gap} one can write

\begin{equation}
    \left|
        \Cov_{\mathcal{E}^*}(\hat Z_a,\hat Z_b)
        -
        \Cov_{U(\theta
        )}(\hat Z_a,\hat Z_b)
    \right|
    \leq
    6\TV(Q^\star_{ab},P_{ab}).
\end{equation}
Since $\Cov_{U(\bm \theta)}(\hat Z_a,\hat Z_b)=0$, we obtain
\begin{equation}
    \TV(Q^\star_{ab},P_{ab})
    \geq
        \frac{1}{6}
    p^\star(1-p^\star)
    |\Delta_a\Delta_b|
    .
\end{equation}
Total variation is contractive under marginalization, so for the full distributions (see Lemma~\ref{lemma:marginal_obstruction})
\begin{equation}
    \TV(Q^\star,P_{U(\bm \theta)})
    \geq
    \TV(Q^\star_{ab},P_{U(\bm \theta),ab}).
\end{equation}
Thus
\begin{equation}
    \inf_\theta
    \TV(Q^\star,P_{U(\bm \theta)})
    \geq
    \delta^\star,
\end{equation}
where one may take
\begin{equation}
    \delta^\star
    :=
    \frac{1}{6}
    p^\star(1-p^\star)
    |\Delta_a\Delta_b|
    >0.
\end{equation}
The positivity follows from $p^*\in (0,1)$ and $\Delta_a\Delta_b\neq 0$. The gap is maximized at $p^\star=1/2$, giving
\begin{equation}
    \delta^\star_{\max}
    =
    \frac{1}{24}|\Delta_a\Delta_b|.
\end{equation}
Since $|\Delta_a|,|\Delta_b|\leq 2$ for Pauli observables, the universal upper bound is
\begin{equation}
    \delta^\star_{\max}
    \leq
    \frac{1}{6}.
\end{equation}

Finally, if $\operatorname{dist}(a,b)=\Theta(N)$, then the condition for light-cone overlap implies that any corresponding purely unitary nearest-neighbor realization of $Q^*$ requires depth $D=\Omega(N)$.
\end{proof}

\begin{corollary}[Separation for finite-range circuits]
\label{cor:finite_range_architecture_separation}
The separation in Theorem~\ref{thm:constant_depth_channel_separation} is not specific to the particular nearest-neighbor brickwall ansatz considered above. 
Consider any local circuit architecture in which the past light cone of any measured qubit expands by at most $vD$ sites after depth $D$, for some constant velocity $v=\mathcal O(1)$ independent of $N$. Then there exist output distributions generated by the correlated channel model at a fixed shallow depth $D$, independent of $N$, that cannot be generated by the corresponding purely unitary model at any depth
\begin{equation}
    D < \frac{\operatorname{dist}(a,b)}{2v},
\end{equation}
where $a$ and $b$ are the two sites linked by the stochastic Pauli string, and $\operatorname{dist}(a,b)$ denotes their distance in the circuit layout (see Def.~\ref{def:graph_dist}).

\end{corollary}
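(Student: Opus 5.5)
The plan is to rerun the proof of Theorem~\ref{thm:constant_depth_channel_separation} with the brickwall light-cone bound of Lemma~\ref{lem:light-cone} replaced by the generic velocity-$v$ bound. The argument has three parts.

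\emph{Unitary side.} First I would state the generalized light-cone lemma. For any architecture whose backward light cone of a measured qubit grows by at most $vD$ sites after depth $D$, we have $L_D(a)\subseteq\{q:\dist(a,q)\le vD\}$. If $2vD<\dist(a,b)$, the triangle inequality for $\dist_G$ (Def.~\ref{def:graph_dist}) forces $L_D(a)\cap L_D(b)=\varnothing$. The proof of Lemma~\ref{lem:disjoint_lightcone_factorized_marginal} never uses the brickwall structure beyond this disjointness. So for any unitary of depth $D<\dist(a,b)/(2v)$ acting on a product input, the Heisenberg-evolved observables $\hat O_a$ and $\hat O_b$ have disjoint supports. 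It follows that $\Cov_{U(\bm\theta)}(\hat Z_a,\hat Z_b)=0$ and the $(a,b)$ marginal factorizes, for every $\bm\theta$.

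\emph{Channel side.} Next I need a channel distribution with nonzero covariance that is generated at a fixed depth $D_{\rm ch}$ independent of $N$. I would build it as follows.
\begin{itemize}
\item Choose sites $a,b$ with $\dist(a,b)>2vD_{\rm ch}$, so the branch light cones are themselves disjoint. This is what justifies $\langle\hat Z_a\hat Z_b\rangle_s=m_a(s)m_b(s)$ within each branch.
\item Insert $P^s_{\mathcal M}=Z_a^sZ_b^s$ at slot $l=1$.
\item Follow it with a local non-Clifford rotation acting on $a$ (and one on $b$) whose generator anticommutes with $Z$, for example an $R_x$.
\item Choose an input state with $\alpha\notin\frac{\pi}{2}\mathbb Z$, as in Eq.~\eqref{eq:input_st_r}.
\end{itemize}
Because every architecture allows arbitrary single-qubit gates, this construction is always available.

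Propagating $Z$ through $R_x(\theta)$ flips the sign of the angle. A one-qubit computation then gives $m_a(1)-m_a(0)\propto\sin\theta\cdot(\text{nonzero function of }\alpha)$, which is nonzero for suitable $\theta$; the same holds for $b$. The gates outside the light cones of $a$ and $b$ are set to the identity. The covariance identity $\Cov_{\mathcal E^\star}=p^\star(1-p^\star)\Delta_a\Delta_b$ is derived in the proof of Theorem~\ref{thm:constant_depth_channel_separation}, and it then gives a nonzero value for $p^\star\in(0,1)$.

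\emph{Conclusion.} Finally, Lemma~\ref{lem:covariance_tv_gap} together with Lemmas~\ref{lem:marginal_obstruction_classical} and \ref{lemma:marginal_obstruction} gives $Q^\star\notin\mathcal P_{\rm unit}(D)$, with TV gap $\tfrac16p^\star(1-p^\star)|\Delta_a\Delta_b|$, for every $D<\dist(a,b)/(2v)$.

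I expect the main obstacle to be making the light-cone statement precise for a generic architecture. The corollary's hypothesis is phrased as ``expands by at most $vD$ sites,'' and this must be read as growth in graph distance, so that the triangle inequality yields disjointness exactly when $2vD<\dist(a,b)$; this also matches the constant, since the brickwall case has $v=2$ and threshold $\dist/4$. I would state this interpretation explicitly as the hypothesis. A secondary point is to check that a single-qubit anticommuting rotation is actually available in the architecture. If it is not, I would fall back on any parametrized gate containing a generator that anticommutes with $Z_a$, and use the nondegeneracy of the input state to get $\Delta_a\ne0$.
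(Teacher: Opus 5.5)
Your proposal is correct and follows the route the paper intends. The paper gives no separate proof of this corollary; it obtains it by rerunning the proof of Theorem~\ref{thm:constant_depth_channel_separation} with the brickwall light-cone bound ($v=2$) replaced by a velocity-$v$ bound, which, read as growth in graph distance as you do, gives disjoint backward light cones by the triangle inequality exactly when $2vD<\dist(a,b)$. Your explicit witness is a genuine addition: $Z^s$ followed by $R_x(\theta)$ on the input $R_x(\alpha)\ket{0}$ gives $\Delta=2\sin\theta\sin\alpha$, which supplies the nondegeneracy $\Delta_a\Delta_b\neq 0$ that the theorem only assumes, valid whenever the architecture admits such a single-qubit rotation.
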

% Consequently, whenever $a$ and $b$ are separated by a distance growing with system size, the channel model, at shallow depth, realizes distributions that require growing depth in the corresponding purely unitary local circuit architecture. 
%In particular, if the distance between $a$ and $b$ 

In particular, the growth of $\dist(a,b)$ with system size induces lower bounds on the depths of the unitary competitors that can reproduce the shallow channel model distributions. Corollary~\ref{cor:finite_range_architecture_separation} also recovers the linear lower bound for one-dimensional nearest-neighbor brickwall circuits where $v=2$.
\begin{corollary}[No covariance separation with independent random Paulis]
\label{cor:no_covariance_separation_independent_paulis}
Consider the setting of Theorem~\ref{thm:constant_depth_channel_separation}.
Instead of the stochastic Pauli string $P_a^{s}\otimes P_b^{s}$ controlled by a shared variable $s\in\{0,1\}$, consider independent stochastic Pauli operations $P_a^{s_a}$ and $P_b^{s_b}$ on two distant sites $a$ and $b$,
where $s_a,s_b\in\{0,1\}$ are independent binary variables with $\Pr(s_a=1)=p_a$,
    and 
    $\Pr(s_b=1)=p_b$. Assume that, within each fixed branch $(s_a,s_b)\in\{0,1\}^2$, the two relevant light-cone
regions factorize, as in Theorem~\ref{thm:constant_depth_channel_separation}.
Then the averaged output state of the channel model satisfies 
\begin{equation}
        \Cov_{\mathcal E^\star}(\hat Z_a,\hat Z_b)=0 .
\end{equation}
Consequently, the two-site computational-basis marginal factorizes as $Q^\star_{ab}(x_a,x_b)
    =
    Q^\star_a(x_a)Q^\star_b(x_b),
    ~~
    x_a,x_b\in\{0,1\}.$
\end{corollary}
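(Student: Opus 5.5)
The plan is to mirror the covariance computation in the proof of Theorem~\ref{thm:constant_depth_channel_separation}, now averaging over four branches instead of two. Write the averaged state as $\mathcal E^\star=\sum_{s_a,s_b}\Pr(s_a)\Pr(s_b)\,\rho_{(s_a,s_b)}$, where $\Pr(s_a)=p_a^{s_a}(1-p_a)^{1-s_a}$ and likewise for $s_b$. The key structural input is the assumed branchwise factorization, taken in the stronger form actually available here: because $P_a^{s_a}$ lies in the light-cone region of $a$ and $P_b^{s_b}$ in that of $b$, and these regions are disjoint, the local response at $a$ should depend only on $s_a$ and the one at $b$ only on $s_b$. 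Concretely, I would define $m_a(s_a):=\langle\hat Z_a\rangle_{(s_a,s_b)}$ and $m_b(s_b):=\langle\hat Z_b\rangle_{(s_a,s_b)}$ and verify that these are well defined. Using the Heisenberg picture as in Lemma~\ref{lem:disjoint_lightcone_factorized_marginal}, the back-propagated operator $\hat O_a$ has support disjoint from $b$, so inserting $P_b^{s_b}$ leaves it unchanged; the symmetric statement holds for $\hat O_b$. Together with the factorization assumption this gives $\langle\hat Z_a\hat Z_b\rangle_{(s_a,s_b)}=m_a(s_a)m_b(s_b)$.

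I then average over the branches. By linearity in the classical mixture, $\langle\hat Z_a\hat Z_b\rangle_{\mathcal E^\star}=\sum_{s_a,s_b}\Pr(s_a)\Pr(s_b)\,m_a(s_a)m_b(s_b)$. Since $s_a$ and $s_b$ are independent, this sum splits as $\bigl(\sum_{s_a}\Pr(s_a)m_a(s_a)\bigr)\bigl(\sum_{s_b}\Pr(s_b)m_b(s_b)\bigr)$, which is exactly $\langle\hat Z_a\rangle_{\mathcal E^\star}\langle\hat Z_b\rangle_{\mathcal E^\star}$. It follows that $\Cov_{\mathcal E^\star}(\hat Z_a,\hat Z_b)=0$. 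For the marginal, I apply Lemma~\ref{lem:two_qubit_z_marginal} to $\mathcal E^\star$ and repeat the algebra from Lemma~\ref{lem:disjoint_lightcone_factorized_marginal}: the relation $\gamma=\alpha\beta$ factorizes the expression into $\tfrac12(1+(-1)^{x_a}\alpha)\cdot\tfrac12(1+(-1)^{x_b}\beta)=Q^\star_a(x_a)Q^\star_b(x_b)$.

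The main obstacle is the first step: justifying that $m_a$ depends only on $s_a$. If the assumption is read only as ``within each branch the two-point function factorizes,'' then in principle $m_a$ could depend on both bits, and the average would not split; an arbitrary $m_a(s_a,s_b)m_b(s_a,s_b)$ can carry covariance. I would therefore make the locality argument explicit. Under the depth condition, the insertion site $b$ lies outside $L_D(a)$ restricted to the layers after insertion, so conjugating $\hat Z_a$ back through the later layers yields an operator supported away from $b$ that commutes with $P_b$. If needed, I would state this as the precise interpretation of the hypothesis ``the two relevant light-cone regions factorize.''
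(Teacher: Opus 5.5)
Your proposal is correct and follows essentially the same route as the paper's proof. Both define the branch responses $m_a(s_a)$ and $m_b(s_b)$, use branchwise factorization, split the four-branch average by the independence of $s_a$ and $s_b$, and turn zero covariance into a factorized marginal via the algebra of Lemma~\ref{lem:two_qubit_z_marginal}. The one difference is that you explicitly justify why $m_a$ does not depend on $s_b$: $P_b$ commutes with $\hat Z_a$ back-propagated through the post-insertion layers, so $P_b^{s_b}$ cancels. The paper simply asserts this from the disjointness of the relevant regions.
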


\begin{proof}
For fixed branch values $(s_a,s_b)$, define
\begin{equation}
    U^{s_a,s_b}(\bm\theta)
    =
    U_D(\bm \theta_D)\cdots U_k(\bm \theta_k)
    \left(P_a^{s_a}\otimes P_b^{s_b}\right)
    \cdots U_1(\bm \theta_1),
\end{equation}
and for some input state $\rho_{\rm in}$
\begin{equation}
    \rho_{(s_a,s_b)}
    =
    U^{s_a,s_b}(\bm\theta)
    \rho_{\rm in}
    U^{s_a,s_b}(\bm\theta)^\dagger .
\end{equation}
The averaged output state is then

\begin{align}
    \mathcal {E^\star}
    &=
    \sum_{s_a,s_b\in\{0,1\}}
    \Pr(s_a)\Pr(s_b)\rho_{(s_a,s_b)} 
\end{align}

Define the branch-dependent local responses
\begin{equation}
    m_a(s_a)
    :=
    \Tr[\hat Z_a\rho_{(s_a,s_b)}],
    \qquad
    m_b(s_b)
    :=
    \Tr[\hat Z_b\rho_{(s_a,s_b)}].
\end{equation}
Here $m_a$ depends only on $s_a$ and $m_b$ depends only on $s_b$, because the
two branches are independent and act on disjoint relevant regions. By the branch-dependent factorization assumption,
\begin{equation}
    \Tr[\hat Z_a\hat Z_b\rho_{(s_a,s_b)}]
    =
    \Tr[\hat Z_a\rho_{(s_a,s_b)}]
    \Tr[\hat Z_b\rho_{(s_a,s_b)}]
    =
    m_a(s_a)m_b(s_b).
\end{equation}
Therefore,

\begin{align}
    \langle \hat Z_a\rangle_{\mathcal E^\star}
    &=
    \sum_{s_a,s_b}
    \Pr(s_a)\Pr(s_b)m_a(s_a)=\sum_{s_a}
    \Pr(s_a)m_a(s_a)\sum_{s_b}
    \Pr(s_b) \\
    &=
    \sum_{s_a}
    \Pr(s_a)m_a(s_a),
\end{align}
since $m_a$ has no $s_b$ dependence and similarly $\langle \hat Z_b\rangle_{\mathcal E^\star}
    =
    \sum_{s_b}
    \Pr(s_b)m_b(s_b).$ For the two-site expectation,

\begin{align}
    \langle \hat Z_a\hat Z_b\rangle_{\mathcal E^\star}
    &=
    \sum_{s_a,s_b}
    \Pr(s_a)\Pr(s_b)m_a(s_a)m_b(s_b) \\
    &=
    \left(
        \sum_{s_a}
        \Pr(s_a)m_a(s_a)
    \right)
    \left(
        \sum_{s_b}
        \Pr(s_b)m_b(s_b)
    \right) \\
    &=
    \langle \hat Z_a\rangle_{\mathcal E^\star}
    \langle \hat Z_b\rangle_{\mathcal E^\star}.
\end{align}

Hence
\begin{equation}
    \Cov_{\mathcal E^\star}(\hat Z_a,\hat Z_b)
    =
    \langle \hat Z_a\hat Z_b\rangle_{\mathcal E^\star}
    -
    \langle \hat Z_a\rangle_{\mathcal E^\star}
    \langle \hat Z_b\rangle_{\mathcal E^\star}
    =
    0.
\end{equation}
By Lemma~\ref{lem:disjoint_lightcone_factorized_marginal}, the two-site
computational-basis marginal therefore factorizes:
\begin{equation}
    Q^\star_{ab}(x_a,x_b)
    =
    Q^\star_a(x_a)Q^\star_b(x_b).
\end{equation}
\end{proof}
Thus independent random Pauli byproducts do not generate the non-zero two-site
covariance used in Theorem~\ref{thm:constant_depth_channel_separation}. Hence
the covariance-based marginal obstruction does not yield a strict separation
for this independent-byproduct model. Nevertheless, the weak inclusion
\begin{equation}
    \mathcal P_{\rm unit}(D)
    \subseteq
    \mathcal P_{\rm ch}(D)
\end{equation}
still holds whenever the channel model contains the unitary model as the
special case with trivial byproduct probabilities.

\subsection{MBQC on a cluster state}\label{app:mbqc_cluster_proof}

\begin{figure}[t]
\centering
\includegraphics[width=0.4\textwidth]{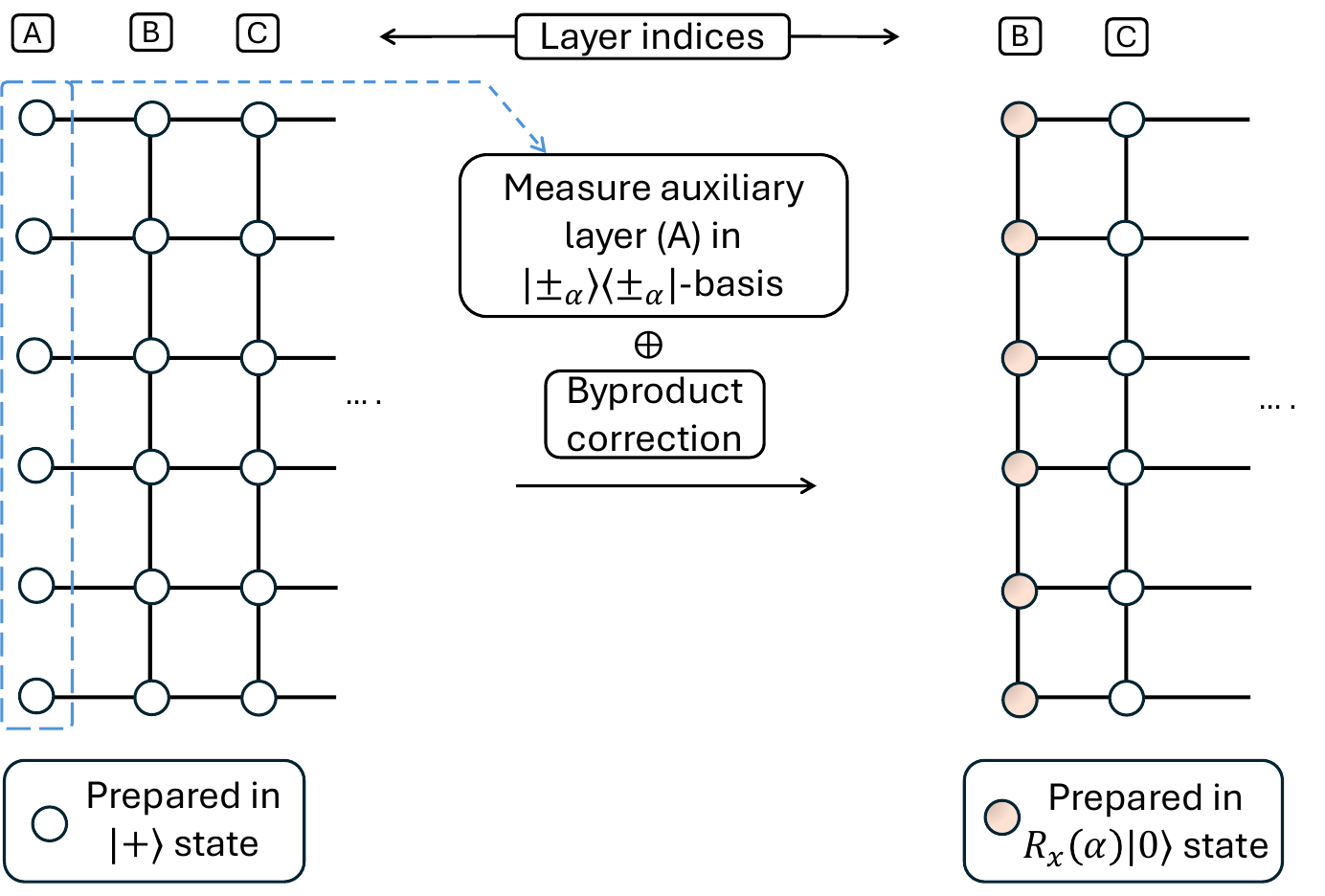}
\caption{\textbf{MBQC preparation of product input state.}
An auxiliary column $A$, initialized in $\ket{+}^{\otimes N}$, is attached to the first computational column. Measuring each auxiliary qubit in the $\{\ket{\pm_\alpha}\}$ basis and applying the corresponding byproduct correction prepares $R_x(\alpha)\ket{0}$ on every qubit of the first computational column, yielding $\ket{\psi_{\rm in}}=\left(R_x(\alpha)\ket{0}\right)^{\otimes N}$.} 
\label{fig:input_st_prep}
\end{figure}

As an illustrative realization of a one-dimensional nearest-neighbor architecture used in Theorem~\ref{thm:constant_depth_channel_separation}, we consider measurement-based quantum computation on an $N\times(D+1)$ cluster state, where the final column is the readout layer. We focus on an open chain of $N\geq6$ qubits with two computational layers, corresponding to depth $D=2$ (see Fig.~\ref{fig:cluster_ex}~(a)). The input is the product state of the form
\begin{align}
\rho_{\rm in}
=
\left(
\Rx(\alpha)\ket{0}\bra{0}\Rx(\alpha)^\dagger
\right)^{\otimes N},
\qquad
\sin\alpha\cos\alpha\neq0, \quad\alpha\notin\frac{\pi}{2}\mathbb{Z}
\end{align}
where $R_x(\alpha)=\exp{(-i\alpha X/2)}$. Because this state is separable, in the equivalent circuit picture, it can be prepared using local single-qubit rotations before the depth-$2$ entangling circuit. Its preparation therefore does not contribute to the circuit depth considered in the separation proof.

The same input state can be prepared directly within the cluster-state construction. Consider an $N\times(D+1)$ cluster state augmented by an auxiliary input column, denoted by column $A$ in Fig.~\ref{fig:input_st_prep} (left panel). All qubits are initialized in $\ket{+}$, and each auxiliary qubit is coupled by a $CZ$ gate to the corresponding qubit in the first computational column $B$ in Fig.~\ref{fig:input_st_prep}. The auxiliary qubits are then measured in the basis $\ket{\pm_{\alpha}}=\left(\ket{0}\pm e^{-i\alpha}\ket{1}\right)/\sqrt{2}$.

For the measurement outcome $s_j\in\{0,1\}$ on the auxiliary qubit associated with site $j$, the state teleported to the site $j$ in the first computational column $B$ is, up to an irrelevant global phase,
\begin{equation}
H Z^{s_j} R_z(\alpha)\ket{+}
=
X^{s_j} H R_z(\alpha)\ket{+}
=
X^{s_j} R_x(\alpha)\ket{0}.
\end{equation}
Applying the standard byproduct correction therefore prepares $R_x(\alpha)\ket{0}$ at site $j$ in the computational layer $B$. Repeating this procedure independently for all $j=1,\ldots,N$ prepares
\begin{equation}
\ket{\psi_{\rm in}}
=
\left(
R_x(\alpha)\ket{0}
\right)^{\otimes N}
\end{equation}
on the first computational column.

This construction is included solely to establish compatibility with the MBQC architecture and is not part of the depth-separation argument.

% \end{remark}

Performing $XY$-plane measurements on the qubits in the computational layers implements, as shown in Sec.~\ref{sec:mbqc_realizing_sh_ran}, the depth-$2$ unitary
\begin{align}\label{eq:app_cluster_unit}
&\qquad ~~~~~~~~~~~~~~~~~~~~~~~~~~~~~~~~~U(\bm{\theta})=\prod_{l=1}^2
    U_l(\bm\theta_l)
    \qquad \\
    \mathrm{where}\qquad &U_l(\bm\theta_l)=
    H^{\otimes N}\CZ_{\open}\Rz(\bm\theta_l),
    \qquad
    \CZ_{\open}:=\prod_{j=1}^{N-1}\CZ_{j,j+1},\qquad \Rz(\bm\theta_l)
    :=
    \prod_{j=1}^{N}\Rz(\theta_{j}^l)=\exp\!\left(-i\theta_j^l Z_j/2.\right)\nonumber
\end{align}

As discussed in Sec.~\ref{sec:mbqc_realization_of_co_ran}, shared classical randomness can be generated in MBQC by choosing the feedforward rules such that the same effective measurement-outcome bit controls byproducts on spatially separated qubits. For instance, the same stochastic bit acts on the boundary qubits $(1, 6)$ for a total $N=6$ qubits in Fig.~\ref{fig:cluster_ex}.

We next show how these correlated byproducts produce an output distribution that belongs to the correlated-channel family but cannot be generated by the corresponding shallow unitary model.

\begin{figure}[t]
\centering
\includegraphics[width=0.8\textwidth]{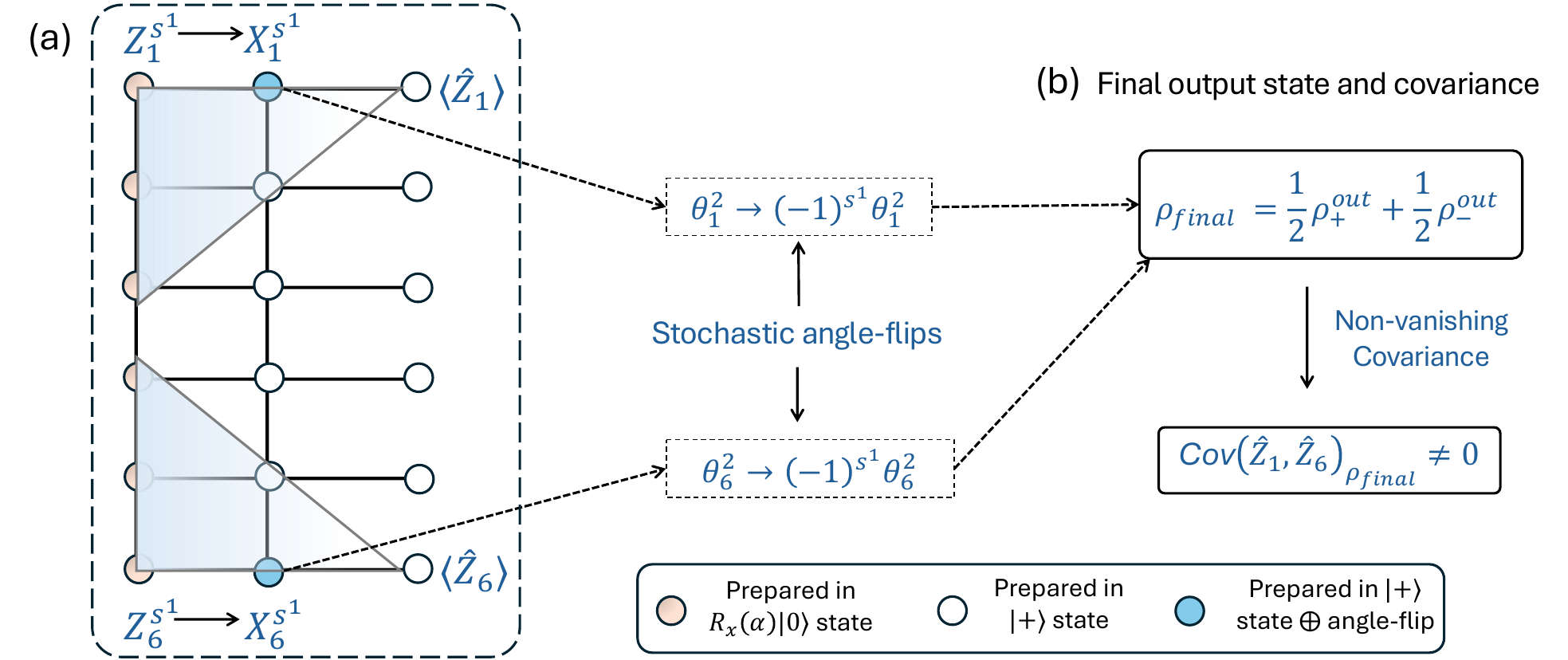}
\caption{\textbf{MBQC realization of the shallow-depth endpoint separation.}
(a) For illustration, we consider a $6\times3$ cluster state whose final column forms the readout layer. Classical feedforward is chosen so that the measurement-induced byproducts on the two boundary qubits of the first computational layer share the same binary variable $s^1\in\{0,1\}$, yielding correlated $Z_1^{s^1}$ and $Z_6^{s^1}$. Propagating these byproducts through the subsequent layer induces correlated sign flips of the corresponding rotation angles, $\theta_1^2\mapsto(-1)^{s^1}\theta_1^2$ and $\theta_6^2\mapsto(-1)^{s^1}\theta_6^2$.
(b) Averaging over the two stochastic branches $\rho_{+}^{\mathrm{out}}$ and $\rho_{-}^{\mathrm{out}}$ corresponding to $s^1=0$ and $s^1=1$, respectively, gives the final output state $\rho_{\mathrm{final}}=\frac{1}{2}\rho_{+}^{\mathrm{out}}+\frac{1}{2}\rho_{-}^{\mathrm{out}}$, whose endpoint observables exhibit a nonvanishing covariance, $\operatorname{Cov}(\hat Z_1,\hat Z_6)_{\rho_{\mathrm{final}}}\neq0$. The white qubits are initialized in $\ket{+}$, while the orange qubits are prepared in $R_x(\alpha)\ket{0}$. The blue qubits indicate the locations at which the shared byproduct induces the stochastic angle flips. This construction realizes the separation established in Corollary~\ref{cor:cluster_angle_flip_endpoint_separation}.} 
\label{fig:cluster_ex}
\end{figure}

\begin{corollary}
[Shallow-depth separation using MBQC on a cluster state]
\label{cor:cluster_angle_flip_endpoint_separation}

Consider the VMBQC channel model $\tilde{\mathcal E}_c$ introduced in Sec.~\ref{sec:end_pauli_corr} in which the propagated Pauli byproducts are corrected before the final measurements. We implement this model on the $N\times3$ cluster state shown in Fig.~\ref{fig:cluster_ex}~(a), with $N\geq6$, computational depth $D=2$, and open boundary conditions. Suppose that the effective measurement-induced byproducts associated with the two boundary qubits, $Z_1^{s^1}$ and $Z_N^{s^1}$, in the first computational layer are controlled by the same binary variable $s^1\in\{0,1\}$, sampled according to $\Pr(s^1=1)=p$, with $p\in(0,1)$ (see Sec.~\ref{sec:mbqc_realization_of_co_ran}).

Let $\rho_{(s^1)}$ denote the branch-conditioned output state corresponding to $s^1\in\{0,1\}$. Then the averaged output state is
\begin{equation}
    \mathcal{E}^\star
    :=
    (1-p)\rho_{(0)}+p\rho_{(1)}
\end{equation}
Let $Q^\star$ be the computational-basis Born distribution of
$ \mathcal{E}^\star$. For compactness, define
$c_{l,j}:=\cos(\theta^l_{j})$ and $s_{l,j}:=\sin(\theta^l_{j})$.
If the parameters of the model are chosen such that
\begin{equation}  
    \beta
    :=
    4p(1-p)
    s_{2,1}s_{2,N}s_{1,1}c_{1,2}c_{1,N-1}s_{1,N}
    \sin^4(\alpha)\cos^2(\alpha)
    \neq 0,
\end{equation}
then the two boundary sites have nonzero covariance when measured in $\hat Z$ basis,
\begin{equation}  
    \Cov_{\mathcal{E}^*}(\hat Z_1,\hat Z_{N})
    =
    \beta .
\end{equation}
Consequently, the infimum of the total variation distance between $Q^\star$ and the output distribution generated by the corresponding shallow unitary model satisfies
\begin{equation}  \inf_\theta\TV\!\left(Q^\star,P_{U(\bm{\theta})}\right)
    \geq
    \frac{|\beta|}{6}.
\end{equation}
In particular, the total variation bound above holds against every purely unitary model of depth $1<D<\frac{N-1}{2}$ under the cluster-state layer convention used above.
Hence, when the boundary separation scales as $\Theta(N)$, the required unitary depth scales as $D=\Omega(N)$.
\end{corollary}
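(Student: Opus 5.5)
The plan is to obtain the statement as a direct instance of Theorem~\ref{thm:constant_depth_channel_separation}. The work is to check its hypotheses for the explicit depth-$2$ cluster-state circuit in Eq.~\eqref{eq:app_cluster_unit} and to compute the branch responses $m_1(s)$ and $m_N(s)$ in closed form.

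\textbf{Step 1: branch unitaries.} In branch $s^1$, the byproduct $Z_1^{s^1}Z_N^{s^1}$ is inserted after $\Rz(\bm\theta_1)$. Conjugation by $T_c=H^{\otimes N}\CZ_{\open}$ maps $Z_j\mapsto X_j$. Using $X^{s}\Rz(\theta)=\Rz((-1)^{s}\theta)X^{s}$, commuting $X_1^{s^1}X_N^{s^1}$ through $\Rz(\bm\theta_2)$ flips exactly $\theta^2_1$ and $\theta^2_N$. The remaining Pauli string is pushed through the final $T_c$ and removed by the endpoint correction of Sec.~\ref{sec:end_pauli_corr}. Hence $\rho_{(s)}$ is the output of the unitary circuit with $\theta^2_{1},\theta^2_N\mapsto(-1)^{s}\theta^2_{1},(-1)^{s}\theta^2_N$.

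\textbf{Step 2: factorization within a branch.} Within each branch the Heisenberg-evolved observables have the following supports. $\hat Z_1$ is supported on $\{1,2,3\}$: each $\CZ_{\open}$ layer enlarges the support by one site per direction. $\hat Z_N$ is supported on $\{N-2,N-1,N\}$. These sets are disjoint for $N\geq6$, so the product input gives $\langle\hat Z_1\hat Z_N\rangle_s=m_1(s)m_N(s)$. This is exactly the branch-factorization used in the proof of Theorem~\ref{thm:constant_depth_channel_separation}, and it yields
\begin{equation}
\Cov_{\mathcal E^\star}(\hat Z_1,\hat Z_N)=p(1-p)\,\Delta_1\Delta_N .
\end{equation}

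\textbf{Step 3: explicit responses (main obstacle).} I will evolve $\hat Z_1$ backwards through the circuit:
\begin{equation}
\hat Z_1\xrightarrow{H}X_1\xrightarrow{\CZ}X_1Z_2\xrightarrow{\Rz(\pm\theta^2_1)}\bigl(c_{2,1}X_1\pm s_{2,1}Y_1\bigr)Z_2 ,
\end{equation}
and then continue through the first $H^{\otimes N}\CZ_{\open}$ and $\Rz(\bm\theta_1)$. Finally I take expectations in $R_x(\alpha)\ket0$, which has $\langle X\rangle=0$, $\langle Y\rangle=-\sin\alpha$ and $\langle Z\rangle=\cos\alpha$. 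Only the term odd in the flipped angle depends on $s$, so $\Delta_1=\mp2s_{2,1}\,\Gamma_1$. Here $\Gamma_1$ is the expectation of the corresponding product Pauli. I expect $\Gamma_1$ to reduce to $s_{1,1}c_{1,2}$ times a factor $\sin^2\alpha\cos\alpha$, and the mirror computation at the right boundary should give $\Delta_N=\mp2s_{2,N}s_{1,N}c_{1,N-1}\sin^2\alpha\cos\alpha$. Multiplying the two gives $p(1-p)\Delta_1\Delta_N=\beta$. The delicate part is the sign and boundary bookkeeping: the open-chain action of $T_c$ at sites $1$ and $N$, and which input expectations survive. I would carry this out carefully once for site $1$ and obtain site $N$ by reflection symmetry, checking the result numerically for $N=6$.

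\textbf{Step 4: conclusion.} Whenever $\beta\neq0$ we have $\Delta_1\Delta_N\neq0$ and $p\in(0,1)$, so Theorem~\ref{thm:constant_depth_channel_separation} applies. Alternatively, one can use Lemma~\ref{lem:covariance_tv_gap} and Lemma~\ref{lemma:marginal_obstruction} directly, with $\Cov_{U(\bm\theta)}(\hat Z_1,\hat Z_N)=0$ from Lemma~\ref{lem:disjoint_lightcone_factorized_marginal}. Either route gives $\inf_\theta\TV\geq|\beta|/6$. For the depth range, I will redo the light-cone count of Lemma~\ref{lem:light-cone} for the layer convention $H^{\otimes N}\CZ_{\open}\Rz$. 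There each layer grows the support by one site per direction, so the light cones of sites $1$ and $N$ are disjoint iff $2D<N-1$, i.e.\ $D<\frac{N-1}{2}$. This gives the stated range and the $\Omega(N)$ depth requirement.
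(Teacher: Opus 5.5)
Your proposal is correct and takes essentially the same route as the paper. You propagate the shared byproduct to flip $\theta^2_1,\theta^2_N$, then back-propagate $\hat Z_1,\hat Z_N$ through the two cluster layers to get $m_j(s)=c_{2,j}A_j-(-1)^{s}s_{2,j}B_j$, with $B_1=-s_{1,1}c_{1,2}\sin^2\alpha\cos\alpha$ and $B_N=-c_{1,N-1}s_{1,N}\sin^2\alpha\cos\alpha$; this confirms your expected $\Gamma_1$ and its mirror image. Branch-wise factorization then gives $\Cov_{\mathcal E^\star}(\hat Z_1,\hat Z_N)=4p(1-p)s_{2,1}s_{2,N}B_1B_N=\beta$, and you conclude with Lemma~\ref{lem:covariance_tv_gap}, contractivity under marginalization, and the one-site-per-layer light-cone count.

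As you already note, the stated range $D<\frac{N-1}{2}$ comes from this direct lemma route, which is also how the paper argues. A literal appeal to Theorem~\ref{thm:constant_depth_channel_separation} would not give it, since that theorem is phrased for brickwall layers with $D<\dist(a,b)/4$.
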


\begin{proof}
Here, we consider the VMBQC $\tilde{\mathcal{E}}_c(\bm \theta,\bm p)$ model as described in Sec.~\ref{sec:end_pauli_corr}, in which all Pauli byproducts propagated to the end of the computation are corrected before the final measurement. Now consider the cluster state shown in Fig.~\ref{fig:cluster_ex} with $N\geq 6$, and suppose that the boundary qubits in the first computational layer acquire the correlated measurement-induced byproducts $Z_1^{s^1}$ and $Z_N^{s^1}$, as described in Sec.~\ref{sec:mbqc_realization_of_co_ran}. Using the commutation relations given in Eq.~(1) of Ref.~\cite{poulsen2024measurement}, one can show that the $Z^{s^1}_1$ byproduct becomes $X^{s^{1}}_1$ in the second layer, and therefore changes the corresponding measurement angle of the $R_z(\theta^2_1)$ gate as $X^{s^1}_1R_z(\theta^{2}_1)\mapsto R_z((-1)^{s^1}\theta^2_1)X^{s^1}_1$. In the next propagation step, $X^{s^{1}}_1$ becomes $Z^{s^{1}}_1X^{s^{1}}_2$ at the readout layer. Since these final Pauli byproducts are corrected as shown in Sec.~\ref{sec:end_pauli_corr}, the only remaining effect of $Z_1^{s^1}$ is the intermediate sign change of $\theta^{2}_1\mapsto(-1)^{s^1}\theta^2_1$. The same argument applies to the other boundary byproduct $Z_N^{s^1}$, whose only remaining effect is $\theta_N^2\mapsto(-1)^{s^1}\theta_N^2$.

Thus, for any $N\geq6$, the shared variable $s^1$ changes only the two boundary angles $\theta_1^2$ and $\theta_N^2$ in the second computational layer, and the rest of the $\{\theta^2_{j}\}_{j\in [2,N-1]}$ remain unaffected as shown in Fig.~\ref{fig:cluster_ex} (with cluster state width $N=6$).

For the input state $\Rx(\alpha)\ket{0}$, with $\Rx(\alpha)=\exp\left(-i\alpha X/2\right)$, on each qubit, the relevant Bloch
components are
\begin{equation}  
    \langle X\rangle=0,
    \qquad
    \langle Y\rangle=-\sin\alpha,
    \qquad
    \langle Z\rangle=\cos\alpha .
\end{equation}

Let's consider the output state after the first layer in the unitary model
\begin{equation}    \rho_1:=U_1(\bm\theta_1)\rho_{\rm in}U_1(\bm\theta_1)^\dagger .
\end{equation}

Now, we compute the branch-dependent endpoint responses. The
$\hat{Z}_1$ endpoint observable propagated backward through the second layer $U_2^{(s^1)}(\bm{\theta}_2)$ that depends on the branch $s^1\in\{0,1\}$ as

\begin{align}
    U_2^{(s^1)}(\bm{\theta}_2)^\dagger \hat{Z}_1 U_2^{(s^1)}(\bm{\theta}_2)
    =
    \Rz(\bm\theta_2^{(s^1)})^\dagger
    \CZ_{\open}^\dagger \hat{X}_1\CZ_{\open}
    \Rz(\bm\theta_2^{(s^1)}) 
    =
    \left(c_{2,1}\hat{X}_1-(-1)^{s^1} s_{2,1}\hat{Y}_1\right)\hat{Z}_2 .
\end{align}
where $\bm\theta_2^{(s^1)}$ denotes the set of measurement angles in the second layer with $(\theta_1^2,\theta_N^2)\mapsto((-1)^{s^1}\theta_1^2,(-1)^{s^1}\theta_N^2)$, while the rest of the angles remains unaffected by $s^1$. 
Hence, the branch-dependent final expectation value for $\hat{Z}_1$ is
\begin{align}
    m_1(s^1)
    :=
    \Tr[\hat{Z}_1\rho_{(s^1)}]
    =
    c_{2,1}A_1-(-1)^{s^1} s_{2,1}B_1,
\end{align}
where
\begin{align}
    A_1:=\Tr[(\hat{X}_1\hat{Z}_2)\rho_1],
    \qquad
    B_1:=\Tr[(\hat{Y}_1\hat{Z}_2)\rho_1].
\end{align}
Similarly,
\begin{align}
    m_{N}(s^1)
    :=
    \Tr[\hat{Z}_{N}\rho_{(s^1)}]
    =
    c_{2,N}A_{N}-(-1)^{s^1} s_{2,N}B_{N},
\end{align}
where
\begin{align}
    A_{N}:=\Tr[(\hat{Z}_{N-1}\hat{X}_{N})\rho_1],
    \qquad
    B_{N}:=\Tr[(\hat{Z}_{N-1}\hat{Y}_{N})\rho_1].
\end{align}

Direct Pauli propagation through the first cluster layer gives
\begin{align}
    A_1=s_{1,2}\sin\alpha\cos\alpha,
    \qquad
    B_1=-s_{1,1}c_{1,2}\sin^2\alpha\cos\alpha,
\end{align}
and
\begin{align}
    A_{N}=s_{1,N-1}\sin\alpha\cos\alpha,
    \qquad
    B_{N}=-c_{1,N-1}s_{1,N}\sin^2 \alpha\cos\alpha.
\end{align}

For each fixed branch $\rho_{(s^1)}$ with fixed $s^1$, the endpoint backward light cones are disjoint for
$N\geq 6$, at depth $D=2$. Since the input state is a product state, the branch-conditioned
endpoint moment factorizes:
\begin{align}
    \Tr[\hat{Z}_1\hat{Z}_{N}\rho_{(s^1)}]
    =
    m_1(s^1)m_{N}(s^1).
\end{align}
Therefore the averaged one-site responses are
\begin{align}
    \Tr[\hat{Z}_1\mathcal{E}^*]
    =
    \sum_{s^1\in\{0,1\}}\Pr(s^1)m_1(s^1),
    \qquad
    \Tr[\hat{Z}_{N}\mathcal{E}^*]
    =
    \sum_{s^1\in\{0,1\}}\Pr(s^1)m_{N}(s^1),
\end{align}
and the averaged two-site response is
\begin{align}
    \Tr[\hat{Z}_1\hat{Z}_{N}\mathcal{E}^*]
    =
    \sum_{s^1\in\{0,1\}}\Pr(s^1)m_1(s^1)m_{N}(s^1).
\end{align}
with this, we obtain

\begin{align}
    \Cov_{\mathcal{E}^*}(\hat{Z}_1,\hat{Z}_{N})
    &=
    \Tr[\hat{Z}_1\hat{Z}_{N}\mathcal{E}^*]
    -
    \Tr[\hat{Z}_1\mathcal{E}^*]\Tr[\hat{Z}_{N}\mathcal{E}^*] \\
    &=
    4p(1-p)s_{2,1}s_{2,N}B_1B_{N} \\
    &=
    4p(1-p)
    s_{2,1}s_{2,N}s_{1,1}c_{1,2}c_{1,N-1}s_{1,N}
    \sin^4 \alpha\cos^2 \alpha \\
    &=
    \beta .
\end{align}

Thus, $\beta\neq0$ implies that the two boundary outcomes have nonzero covariance. By Lemma~\ref{lem:disjoint_lightcone_factorized_marginal}, their computational-basis marginal therefore cannot factorize into the product of its single-qubit marginals.

By contrast, consider the corresponding purely unitary model obtained from MBQC on the same cluster state as in Fig.~\ref{fig:cluster_ex} by fully adapting all the measurement outcomes. This removes the stochastic correlated Pauli byproducts and yields the unitary circuit $U(\bm{\theta})$ in Eq.~\eqref{eq:app_cluster_unit} associated with the cluster-state computation. For the depths under consideration, the backward light cones of the two boundary measurements are disjoint. Thus, according to Lemma~\ref{lem:disjoint_lightcone_factorized_marginal} one can write
\begin{equation}
    \Cov_{\rho_U}(\hat{Z}_1,\hat{Z}_{N})=0.
\end{equation}
Consequently, the computational-basis marginal on the two boundary qubits factorizes into the product of its single-qubit marginals.

Applying the covariance--total-variation bound (see Lemma~\ref{lem:covariance_tv_gap}) to the two endpoint marginals $Q^\star_{1,N}$ and $P_{1,N}$ for the channel and unitary model respectively,
\begin{equation}
    \TV\!\left(Q^\star_{1,N},P_{1,N}\right)
    \geq
    \frac16
    \left|
        \Cov_{\mathcal{E}^*}(\hat{Z}_1,\hat{Z}_{N})
        -
        \Cov_{\rho_U}(\hat{Z}_1,\hat{Z}_{N})
    \right|
    =
    \frac{|\beta|}{6}.
\end{equation}
Since total variation cannot increase under marginalization,
\begin{equation}
    \inf_{\theta}\TV\!\left(Q^\star,P_{U(\bm{\theta})}\right)
    \geq
    \inf_{\theta}\TV\!\left(Q^\star_{1,N},P_{1,N}\right)
    \geq
    \frac{|\beta|}{6}.
\end{equation}
$\beta$ is maximized in absolute value by choosing the angle factors to have unit
magnitude, namely
\begin{equation}
    |s_{2,1}|=|s_{2,N}|=|s_{1,1}|=|s_{1,N}|=1,
    \qquad
    |c_{1,2}|=|c_{1,N-1}|=1 .
\end{equation}
Thus the remaining optimization is over
$   f(\alpha):=\sin^4 \alpha\cos^2 \alpha .
$
Writing $x=\sin^2 \alpha$, we have
\begin{equation}
    f(\alpha)=x^2(1-x),
    \qquad
    x\in[0,1].
\end{equation}
The maximum occurs at
$
    x=\frac23,~
    \text{i.e.},~
    \sin^2 \alpha=\frac23,~
    \cos^2 \alpha=\frac13,
$
and gives $\max_\alpha \sin^4 \alpha\cos^2 \alpha
    =
    \frac{4}{27}$.
Therefore, for fixed $p$, $|\beta|_{\max}
    =
    \frac{16}{27}p(1-p)$. If one also optimizes over the branch probability, the maximum is attained at
$p=1/2$, giving
\begin{equation}
    |\beta|_{\max}
    =
    \frac{4}{27}.
\end{equation}
Therefore no such purely unitary local circuit can reproduce $Q^\star$.
For the cluster-state layer convention used here, each layer expands an endpoint
backward light cone by at most one site, and thus the endpoint backward light cones
are disjoint whenever $D<(N-1)/2$. Therefore, if the endpoint distance is $\Theta(N)$, any
purely unitary nearest-neighbor realization requires depth $D=\Omega(N)$.
\end{proof}

The above proof applies to a cluster state with open boundary conditions, but it can be extended to periodic boundary conditions as well. 

\section{Distribution classes}\label{app:dist_classes}

This appendix clarifies the structured target family of distributions used in the main
text. They are not intended to characterize the full expressive power of the
channel model. Rather, they isolate simple distributional structure that
are naturally produced by shared branch randomness.

\subsection{Branch-peaked mixture distributions}
\label{app:branch_peaked_mixture}

A branch-peaked mixture distribution has the form
\begin{equation}
    P(x)
    =
    \sum_{\bm s\in\mathcal S}
    \pi(\bm s)
    \left[
        \delta_{\bm s}\,\mathbf 1[x=x_{\bm s}^\star]
        +
        (1-\delta_{\bm s})R_{\bm s}(x)
    \right],
    \qquad x\in\{0,1\}^N ,
    \label{eq:app_branch_peaked}
\end{equation}
where $\mathcal S\subseteq\{0,1\}^L$ is a finite set satisfying
$|\mathcal S|\leq S_{\max}$, with $S_{\max}=O(1)$ is independent of $N$. Here,
$\pi(\bm s)\ge 0$ is the probability of the corresponding branch $\bm s$ with
\begin{equation}
    \sum_{\bm s\in\mathcal S}\pi(\bm s)=1,
\end{equation}
$\delta_{\bm s}\in(0,1]$, and
$x_{\bm s}^\star\in\{0,1\}^N$ is the dominant bit string associated with
branch $\bm s$. Each $R_{\bm s}$ is a normalized residual distribution on
$\{0,1\}^N$ and is allowed to depend on the branch.

The residual $R_{\bm s}$ describes the probability mass of branch $\bm s$
away from its dominant bit string. More explicitly, suppose the output
distribution of branch $\bm s$ is $P_{\bm s}$. If
$x_{\bm s}^\star$ is chosen as a dominant bit string of $P_{\bm s}$ and $\delta_{\bm s}:=P_{\bm s}(x_{\bm s}^\star),$ then, for $\delta_{\bm s}<1$, one can write
\begin{equation}
    P_{\bm s}(x)
    =
    \delta_{\bm s}\,\mathbf 1[x=x_{\bm s}^\star]
    +
    (1-\delta_{\bm s})R_{\bm s}(x),
    \label{eq:branch_residual_decomposition}
\end{equation}
with
\begin{equation}
    R_{\bm s}(x)
    =
    \begin{cases}
        0, & x=x_{\bm s}^\star, \\[1ex]
        \dfrac{P_{\bm s}(x)}{1-\delta_{\bm s}}, & x\neq x_{\bm s}^\star .
    \end{cases}
    \label{eq:residual_distribution}
\end{equation}
This makes $R_{\bm s}$ a normalized distribution:
\begin{equation}
    \sum_{x\in\{0,1\}^N} R_{\bm s}(x)
    =
    \frac{1}{1-\delta_{\bm s}}
    \sum_{x\neq x_{\bm s}^\star}P_{\bm s}(x)
    =
    1 .
\end{equation}
Thus Eq.~\eqref{eq:app_branch_peaked} is simply the branch average $P=\sum_{\bm s\in\mathcal S}\pi(\bm s)P_{\bm s}$ written in a way such that the dominant contribution of each branch is explicitly
separated from the remaining branch-dependent fluctuations, which is denoted as the residual distribution $R_{\bm s}$ associated with branch $\bm s$. When
$\delta_{\bm s}=1$, the branch is a point mass at $x_{\bm s}^\star$, and the
residual term does not contribute. To use this as a nontrivial peaked class, one should assume a peaked regime,
for example
\begin{equation}
    \delta_{\bm s}\ge \delta_{\min}>0,
    \qquad
    R_{\bm s}(x_{\bm s}^\star)=0,
    \qquad
    \text{for all }\bm s\in\mathcal S .
\end{equation}

An explicit example of such a distribution is provided in Eq.~\eqref{eq:num_example_explicit}. This class is naturally suited to the correlated channel learning model, in which shared randomness, in the form of stochastic Pauli strings, appears in the middle of the circuit as shown in Eq.~\eqref{eq:app_channel_model} and in Appendix~\ref{def:input_byp_model}. In
those cases, the sampled branch can change the effective circuit evolution, so different branches may have
different dominant bit strings and different residual shapes.

\section{Correlated channel model with Pauli-correction}
\label{def:input_byp_model}

\begin{figure*}[t]
\centering
\includegraphics[width=1\textwidth]{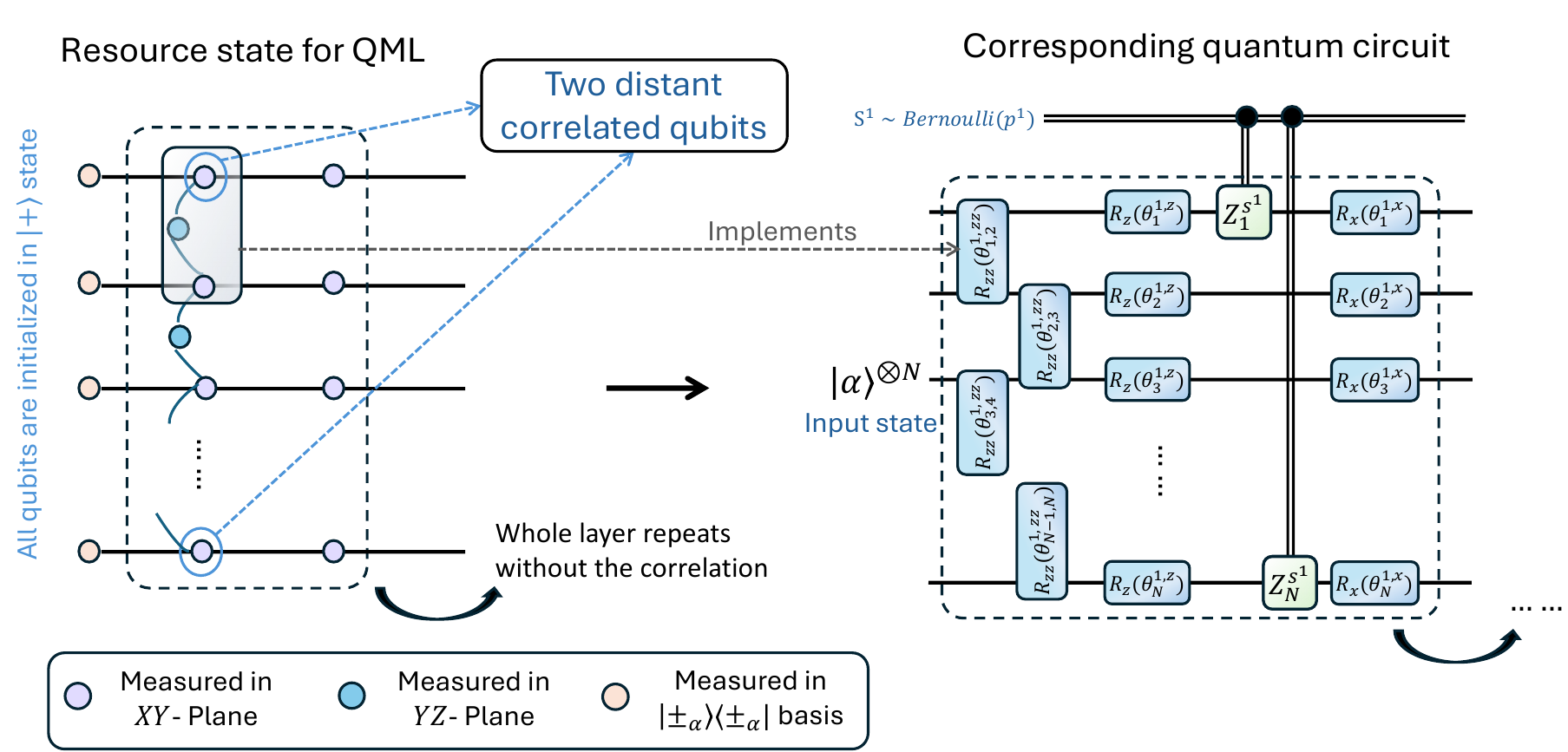}
\caption{\textbf{Resource state and learning model:} 
The left panel shows an open-boundary graph-state resource, where vertices represent qubits and edges denote $CZ$ entangling gates. All resource qubits are initialized in $\ket{+}$ and subsequently measured in the color-coded basis indicated in the legend. Measuring the qubits within the dashed region implements one layer of the circuit shown on the right, consisting of nearest-neighbor $R_{zz}$ gates, local $R_z$ rotations, a correlated stochastic $Z$-byproduct layer, and local $R_x$ rotations. Repeating the dashed graph pattern generates additional circuit layers. Classical feedforward is chosen so that the two distant boundary qubits, in the dashed box, share the same effective byproduct bit $s^1\in \{0,1\}$, with $\Pr(s^1=1)=p^1\in[0,1]$ as described in Sec.~\ref{sec:mbqc_realization_of_co_ran}. Consequently, the measurement-induced byproducts either occur jointly or are both absent, corresponding to the stochastic Pauli string $Z_1^{s^1}\otimes Z_N^{s^1}$, at the designated location in the circuit picture. The input-measurement pattern prepares the effective circuit input state $\ket{\alpha}^{\otimes N}$, where $\ket{\alpha}=R_x(\alpha)\ket{0}$ and $\alpha\notin\frac{\pi}{2}\mathbb{Z}$.} 
\label{fig:model_num}
\end{figure*}

As a concrete realization of the learning model with shared randomness in Sec.~\ref{sec:mbqc_based_model}, we consider the MBQC resource state shown in the left panel of Fig.~\ref{fig:model_num}. With an appropriate choice of adaptive single-qubit projective measurements, this resource state implements the one-dimensional nearest-neighbor circuit with open boundary conditions shown in the right panel. The single-qubit rotations in each circuit layer are realized using standard $XY$-plane MBQC measurement patterns~\cite{raussendorf2001one}, while the two-qubit $R_{zz}$ rotations are implemented using appropriate graph-state gadgets involving $YZ$-plane measurements~\cite{qin2024applicability,PhysRevLett.132.220602}. Moreover, as described in Sec.~\ref{sec:mbqc_realization_of_co_ran}, after collecting the measurement outcomes from layer $l$ of the resource state, classical feedforward can be used to implement correlated stochastic Pauli operations on a subset of qubits in that layer, controlled by a single shared random variable.

 Each layer in the corresponding induced circuit can be written as
\begin{equation}\label{eq:model_num}
U_k(\bm{\theta}_k)=\bigotimes_{i=1}^N R_x(\theta^{k,x}_{i}) \bigotimes_{i=1}^N R_z(\theta^{k,z}_{i})\prod_{i=1}^{N-1}R_{zz}(\theta^{k,zz}_{i,i+1}) 
\end{equation}
with layer parameter vector
\begin{equation} \label{eq:layer_params} \boldsymbol{\theta}_{k} = \left( \{\theta^{k,x}_{i}\}_{i=1}^N, \{\theta^{k,z}_{i}\}_{i=1}^N, \{\theta^{k,zz}_{i,i+1}\}_{i=1}^{N-1} \right). 
\end{equation}

For a chosen insertion point associated with layer $l$, we decompose the ordered circuit as
\begin{equation}
U(\bm{\theta})
=
U_{>l}(\bm{\theta})U_{\leq l}(\bm{\theta}),
\end{equation}
where $U_{\leq l}$ and $U_{>l}$ denote the circuit portions before and after the insertion slot, respectively. We sample a shared bit $s^l\in\{0,1\}$ with $\Pr(s^l=1)=p^l$ and insert the stochastic Pauli string
\begin{equation}
P_{\mathcal{M}}^{s^l}=Z_{\mathcal{M}}^{s^l}
:=
\bigotimes_{j\in M}Z_j^{s^l},
\end{equation}
at this slot, where $M\subseteq[1,N]$. Propagating $Z_{\mathcal{M}}^{s^l}$ through the remaining Pauli rotations induces branch-dependent sign flips of the angles whose generators anticommute with $Z_{\mathcal{M}}$, while leaving the Pauli string itself unchanged. After applying the final Pauli correction $\hat P(\bm s)^{\dagger}=(Z^{s^l}_{\mathcal{M}})^{\dagger}$ as described in Sec.~\ref{sec:end_pauli_corr} and shown in Fig.~\ref{fig:two_learning_models}, the corrected branch unitary becomes
\begin{equation}
U^{(s^l)}(\bm{\theta})
=
Z_{\mathcal{M}}^{s^l}
U_{>l}(\bm{\theta})
Z_{\mathcal{M}}^{s^l}
U_{\leq l}(\bm{\theta})
\end{equation}
Now, the final $Z_{\mathcal{M}}^{s^l}$ before measurements does not alter the computational-basis measurement outcomes. Thus, it can be omitted at the level of the computational-basis output probabilities. The resulting channel, for input state $\rho_{\rm in}=(\ket{\alpha}\bra{\alpha})^{\otimes N}$ with $\ket{\alpha}=R_x(\alpha)\ket{0}$, and $\alpha\notin\frac{\pi}{2}\mathbb{Z}$, is
\begin{align}\label{eq:num_channel}
\mathcal E_{\bm{\theta},p^l}(\rho_{\rm in})
=&
(1-p^l)
U^{(0)}(\bm{\theta})\rho_{\rm in}U^{(0)}(\bm{\theta})^\dagger
\nonumber\
+\\
&p^l
U^{(1)}(\bm{\theta})
\rho_{\rm in}
U^{(1)}(\bm{\theta})^\dagger,
\end{align}
and measurement in the computational $Z$ basis produces output distribution 
\begin{equation}
P_{\mathcal E_{\bm{\theta},p^l}}(x)
=
\Tr\left[
\ket{x}\bra{x}
\mathcal E_{\bm{\theta},p^l}(\rho_{\rm in})
\right]
\end{equation}

In Fig.~\ref{fig:model_num} (right), we choose $l=1$, and $M=\{1,N\}$. We deliberately place the stochastic Pauli string $P^{s^1}_{\mathcal{M}}=Z^{s^1}_1\otimes Z^{s^1}_N$ in the first layer before the sub-layer $\bigotimes_{i=1}^N R_x(\theta_i^{k,x})$.
Propagating this string through the remaining circuit induces the correlated angle flips of the corresponding $R_x$ angles, $\theta\mapsto(-1)^{s^1}\theta$, while leaving the $R_z$ and $R_{zz}$ rotations unchanged. Since the final byproduct remains diagonal in the computational basis, it does not alter the sampled bit string and need not be physically corrected. By contrast, choosing $X_{\mathcal{M}}^{s^1}=X_1^{s^1}\otimes X_N^{s^1}$ flips the angles of subsequent $R_z$ and $R_{zz}$ rotations whose generators anticommute with $X_{\mathcal{M}}$. Since the final $X_{\mathcal{M}}^{s^1}$ byproduct flips the corresponding computational-basis outputs, it is removed using the final Pauli correction shown in Fig.~\ref{fig:two_learning_models} and described in Sec.~\ref{sec:end_pauli_corr}. In both cases, the retained stochastic effect is the correlated branch-dependent angle-flip pattern rather than an output bit-flip map or classical post-processing.

\begin{figure}[t]
\centering
\includegraphics[width=0.85\textwidth]{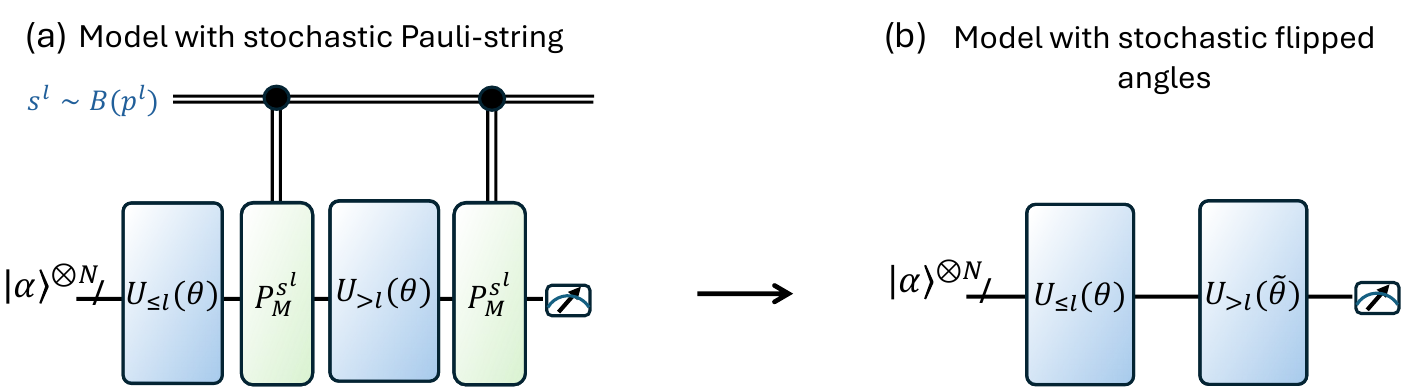}
\caption{\textbf{Learning model with Pauli correction:} Consider the unitary model with total $D$ layers where each layer consists of Pauli rotations as parameterized gates of the form $\exp(-i\theta Q_{i,i+1})$, acting on two qubits, where $Q\in \{I,X,Y,Z\}^{\otimes 2}$. (a) A stochastic Pauli string $P_{\mathcal{M}}^{s^l}$, where $P_{\mathcal{M}}=\bigotimes_{j\in M}P_j$, $P_j\in \{X,Z\}$, $M\subseteq[1,N]$, controlled by a shared variable $s^l\sim\operatorname{Bernoulli}(p^l)$, is inserted after layer $l$. The same $s^l$-dependent Pauli string is applied again at the output, with $s^l=0$ corresponding to the identity and $s^l=1$ to $P_{\mathcal{M}}$. (b) Propagating the inserted Pauli string through $U_{>l}(\bm\theta)$ leaves commuting gates unchanged and flips the signs of the angles of anticommuting Pauli rotations. Specifically, $P_{\mathcal{M}}^{s^l}\exp(-i\theta Q_{i,i+1})=\exp(-i(-1)^{s^l}\theta Q_{i,i+1})P_{\mathcal{M}}^{s^l}$ if $[P_{\mathcal{M}}^{s^l},Q_{i,i+1}]\neq 0$. The propagated Pauli string then cancels the final correction because $(P_{\mathcal{M}}^{s^l})^2=\id$. The resulting branch is therefore equivalent to the purely unitary circuit $U_{>l}(\widetilde{\bm\theta})U_{\leq l}(\bm\theta)$ with branch-dependent sign-flipped angles $\bm \theta\mapsto\widetilde{\bm\theta}$.} 
\label{fig:two_learning_models}
\end{figure}

\section{Numerical details}\label{app:num_details}
\subsection{Gradient of the loss function}\label{app:grad_details}

In this section, we derive the gradient of the squared MMD loss in Eq.~\eqref{eq:mmd loss} with respect to the application probabilities $\{p^i\}_{i=1}^L$. The output distribution of the channel model, in Eq.~\eqref{eq:channel_dist_all}, can be rewritten using Eq.~\eqref{eq:branch_prob_decomp} as 
\begin{align}\label{eq:loss_expanded}
    P_{\mathcal E_{\bm\theta,\bm p}}(x)
    =
    \sum_{\bm s\in\{0,1\}^L}
    \pi_{\bm p}(\bm s)\,
    \bra{x}\,
        U^{(\bm s)}(\bm\theta)
        \rho_{\rm in}
        U^{(\bm s)}(\bm\theta)^\dagger\ket{x}
        =\sum_{\substack{s^i\in\{0,1\}\\i\in\{1,\cdots,L\}}}
    \pi_{p^1}(s^1)\cdots\pi_{p^L}(s^L)
    P_{\bm \theta}(x|s^1,\cdots,s^L)
\end{align}

where $\bm s=(s^1,\ldots,s^L)\in\{0,1\}^L$ and $\pi_{p^i}(s^i)=(p^i)^{s^i}(1-p^i)^{1-s^i}$ and
\begin{equation}
P_{\bm\theta}(x\mid\bm s)
=
\bra{x}
U^{(\bm s)}(\bm\theta)
\rho_{\rm in}
U^{(\bm s)}(\bm\theta)^\dagger
\ket{x}
\end{equation}
is the output probability of bitstring $x$ conditioned on the branch $\bm s$. For each random variable $s^l\in \{s^i\}_{i=1}^L$
\begin{equation}
    s^l =
\begin{cases}
		0  & \mbox{with  probability } 1- p^l,\\
		1 & \mbox{with  probability } p^l.
\end{cases}
\end{equation}

Thus, for any variable $s^l$, Eq.~\eqref{eq:loss_expanded} can be written as

\begin{align}
    &P_{\mathcal E_{\bm\theta,\bm p}}(x)= \\ 
    &p^l\cdot \sum_{\substack{s^{i}, \, i\in \{1\dots L\}, \\ i\neq l}} \left[\prod_{i \ne l }\pi_{p^i}(s^i) \right]P_{\bm \theta}(x|s^1,\cdots,1,\cdots,s^L) + 
    \\
    & (1-p^l)\sum_{\substack{s^{i}, \, i\in \{1\dots L\}, \\ i\neq l}} \left[\prod_{i \ne l }\pi_{p^i}(s^i) \right]P_{\bm \theta}(x|s^1,\cdots,0,\cdots,s^L)
\end{align}

The rest of the calculation remains the same as in Appendix B of Ref.~\cite{majumder2024variational}. This gives the final gradient as

\begin{align}\label{eq:grad_p}
    \frac{\partial \mathcal{L}}{\partial p^l} &= 2 \mathop{\mathbb{E}}_{\substack{x \sim P_{\mathcal E_{\bm\theta,\bm p^{(l,1)}}}\\ y \sim P_{\mathcal E_{\bm\theta,\bm p}}}} [K(x,y)]
     - 2 \mathop{\mathbb{E}}_{\substack{x \sim P_{\mathcal E_{\bm\theta,\bm p^{(l,0)}}}\\ y \sim P_{\mathcal E_{\bm\theta,\bm p}}}} [K(x,y)]\nonumber
    \\
    &\ - 2  \mathop{\mathbb{E}}_{\substack{x \sim P_{\mathcal E_{\bm\theta,\bm p^{(l,1)}}}\\ y \sim Y}} [K(x,y)]
     + 2  \mathop{\mathbb{E}}_{\substack{x \sim P_{\mathcal E_{\bm\theta,\bm p^{(l,0)}}}\\ y \sim Y}} [K(x,y)],
\end{align}

where $\bm p^{(l,1)}$ and $\bm p^{(l,0)}$ denote the parameter vector $\bm p=(p^1,\ldots,p^L)$ with $p^l$ replaced by $1$ and $0$, respectively. Here, $Y$ is the target distribution.

Hence, the gradient with respect to $p^l$ can be estimated by evaluating the model twice: once with $p^l=1$ and once with $p^l=0$, while keeping all other parameters fixed.

The gradients with respect to the variational angles $\theta\in\bm\theta$ can be evaluated using standard parameter-shift rules for parametrized Pauli rotations~\cite{liu2018differentiable,majumder2024variational} as

\begin{align}\label{eq:grad_t}
    \frac{\partial \mathcal{L}}{\partial \theta^{l}_{j}} = {} & \mathop{\mathbb{E}}_{\substack{x \sim P_{\mathcal E_{\bm\theta^+,\bm p}}\\ y \sim P_{\mathcal E_{\bm\theta,\bm p}}}} [K(x,y)]-\mathop{\mathbb{E}}_{\substack{x \sim P_{\mathcal E_{\bm\theta^-,\bm p}}\\ y \sim P_{\mathcal E_{\bm\theta,\bm p}}}} [K(x,y)]\nonumber \\
      & -\mathop{\mathbb{E}}_{\substack{x \sim P_{\mathcal E_{\bm\theta^+,\bm p}}\\ y \sim Y}} [K(x,y)]+\mathop{\mathbb{E}}_{\substack{x \sim P_{\mathcal E_{\bm\theta^-,\bm p}}\\ y \sim Y}} [K(x,y)]
\end{align}

Here, $P_{\mathcal E_{\bm\theta^+,\bm p}}$ and $P_{\mathcal E_{\bm\theta^-,\bm p}}$ are the output distributions of the channel model with parameters $\bm{\theta^{\pm}} = \bm{\theta} \pm \frac{\pi}{2} \bm{e} ^{l}_{j}$ and $\bm{e}^{l}_{j}$ is a vector with a $1$ at position $(l,j)$ and $0$s elsewhere. The parameter shifting is done for each angle individually, while the other angles remain unchanged. For the unitary model, to estimate the gradient of the variational angles we set $\bm p=\bm 0$ as $P_{U(\boldsymbol{\theta})}
\equiv
P_{\mathcal{E}_{\bm\theta,\bm p=\bm 0}}$. Eq.~\eqref{eq:grad_p} and ~\eqref{eq:grad_t} can be used to compute the gradient of the learning model defined in Sec.~\ref{sec:mbqc_based_model}.

\subsection{Box plot details}\label{app:box_details}
Each box plot in Fig.~\ref{fig:results_1} summarizes the minimum MMD losses obtained from $20$
independent training runs. The losses are first ordered from smallest to
largest. The lower and upper boundaries of the box are the first and third
quartiles, denoted by $Q_1$ and $Q_3$. Thus, approximately $25\%$ of the
observations lie below $Q_1$, approximately $50\%$ lie between $Q_1$ and
$Q_3$, and approximately $25\%$ lie above $Q_3$. The box therefore
represents the central $50\%$ of the training outcomes, corresponding
approximately to $10$ of the $20$ runs. 

The interquartile range,
$\mathrm{IQR}=Q_3-Q_1$, measures the spread of the central half of the
data and is less sensitive to extreme values than the full range. The
lower whisker extends to the smallest observed loss that is not below
$Q_1-1.5\,\mathrm{IQR}$, while the upper whisker extends to the largest
observed loss that is not above $Q_3+1.5\,\mathrm{IQR}$. Observations
outside these limits are displayed individually as outliers (white circles). In a
standard box plot, the horizontal line inside the box denotes the median,
which separates the lower and upper halves of the data. In Fig.~\ref{fig:results_1}, the median is shown with the orange line.

\end{document}